\newif\ifjournal
  \address{Department of Mathematical Sciences, University of Alaska Fairbanks, Fairbanks, Alaska 99775}
  \email{david.maxwell@uaf.edu}
\definecolor{labelkey}{rgb}{0,0,.75}
\definecolor{MyGreen}{rgb}{0,.6,.2}
\definecolor{MyDarkBlue}{rgb}{.1,.1,.75}
\date{September 30, 2009}
\title[Conformal parameterizations]{A model problem for conformal parameterizations of the Einstein constraint equations}
\title{A model problem for conformal parameterizations of the Einstein constraint equations}
\author{David Maxwell}
\DeclareMathOperator{\ck}{\bf L}
\DeclareMathOperator{\Lap}{\Delta}
\DeclareMathOperator{\tr}{\rm tr}
\DeclareMathOperator{\Lie}{\mathcal{L}}
\newcommand{\dexp}{d^{-\frac{2q}{n}}\;}
\newcommand{\dV}{\; dV}
\newcommand{\abs}[1]{\left|#1\right|}
\renewcommand{\div}{\mathop{\rm div}\nolimits}
\newcommand{\yamabe}{{\mathcal Y}}
\newcommand{\ra}{\rightarrow}
\newcommand{\Reals}{\mathbb{R}}
\newcommand{\Nats}{\mathbb{R}}
\newcommand{\be}{\begin{equation}}
\newcommand{\bel}[1]{\begin{equation}\label{#1}}
\newcommand{\ee}{\end{equation}}
\newcommand{\calF}{\mathcal{F}}
\newcommand{\calL}{\mathcal{L}}
\newcommand{\calN}{\mathcal{N}}
\newcommand{\calM}{\mathcal{M}}
\newcommand{\hatF}{\mathcal{F}_0}
\newcommand{\calG}{\mathcal{G}}
\newcommand{\calD}{\mathcal{D}}
\newcommand{\calE}{\mathcal{E}}
\newcommand{\hatpsi}[1]{\psi_{0,#1}}
\newcommand{\hatpsid}{\hatpsi{d}}
\newcommand{\Ta}{t}
\newcommand{\taua}{\tau_t}
\newcommand{\Tbeta}{\lambda}
\def\ip<#1,#2>{\left<#1,#2\right>}
\begin{document}
\newtheorem{theorem}{Theorem}[section]
\newtheorem{conjecture}[theorem]{Conjecture}
\newtheorem{problem}{Problem}
\newtheorem{definition}{Definition}
\newtheorem{proposition}[theorem]{Proposition}
\newtheorem{corollary}[theorem]{Corollary}
\newtheorem{lemma}[theorem]{Lemma}
\maketitle
\begin{abstract}
We investigate the possibility that the conformal and conformal thin sandwich (CTS)
methods can be used to parameterize the set of solutions of the vacuum Einstein constraint
equations.  To this end we develop a model problem obtained
by taking the quotient of certain symmetric  data on conformally flat tori.  
Specializing the model problem to a three-parameter family of conformal data
we observe a number of new phenomena for the conformal and CTS methods.  
Within this family, we obtain a general existence theorem so long as the mean
curvature does not change sign.  
When the mean curvature changes sign, we find that for certain data
solutions exist if and only if the transverse-traceless tensor is sufficiently small.
When such solutions exist, there are generically more than one.
Moreover, the theory for mean curvatures changing sign is shown to be extremely sensitive with respect 
to the value of a coupling constant in the Einstein constraint equations.
\end{abstract}
\def\Reals{\mathbb{R}}
\def\Nats{\mathbb{N}}
\section{Introduction}
Initial data for the Cauchy problem of general relativity consist of a 
Riemannian manifold and a second
fundamental form that satisfy a system of nonlinear PDEs known as the Einstein 
constraint equations.  A longstanding goal has been to find a constructive description
of the full set of solutions of these equations on a given manifold, and hence a method of producing all 
possible initial data.  Although this problem remains open in general, 
the conformal method of Lichnerowicz and Choquet-Bruhat and York provides an elegant and complete solution to the problem of constructing constant-mean curvature (CMC) solutions.
For example, on compact manifolds the solutions of the Einstein constraint equations
are effectively parameterized by selection 
of conformal data consisting of a conformal class for the metric, 
a so-called transverse-traceless tensor, and a (constant) mean curvature.  
The conformal method  
can also be used to construct non-CMC solutions of the constraint equations,
but much less is known in this case. Ideally one would like to show that selection of generic 
conformal data leads to a unique corresponding solution of the constraint equations.

Until recently, virtually all results for the conformal method only applied to near-CMC initial data.  The first construction using the conformal method of a family of initial data with 
arbitrarily specified mean curvature was given by Holst, Nagy, and Tsogtgrel in \cite{Holstetal08}.  Although this result represents a breakthrough for the conformal method,
it has a number of important limitations:
\begin{itemize}
	\item The near-CMC hypothesis is replaced by a smallness assumption on the transverse-traceless tensor (i.e. a small-TT hypothesis).
	\item It is not known if small-TT conformal data determine a unique solution.
	\item The construction only works on Yamabe-positive compact manifolds.
	\item The construction requires non-vanishing matter fields.
\end{itemize}
It was subsequently shown in \cite{Maxwell09} that the construction could be extended to vacuum initial data,
but the other restrictions remain.  These results are compatible with the possibility that
a large set of conformal data lead to no solutions or multiple solutions; from the point 
of view of parameterizing the full set of solutions one would like to show that this does 
not occur.

In this paper we investigate the conformal method and its variation,
the conformal thin sandwich (CTS) method, by studying a 
model problem obtained from a quotient of certain symmetric conformal data. 
Despite the  simplicity of the model problem, it captures the core 
issues of the conformal method, including the nonlinear coupling and difficulties regarding 
conformal Killing fields.  Moreover, the model problem is easily studied numerically, and thus
gives an important tool for suggesting theorems which might be proved in the future. 

We consider a three-parameter family of model conformal 
data that allow for simultaneous violations of both the near-CMC and small-TT 
conditions on a Yamabe-null manifold.  The mean curvatures in this family are written as the 
sum of an average mean curvature, $\Ta$, and a fixed zero-mean function 
describing departure from the mean.  If $\Ta$ is chosen so that the mean
curvature does not change
sign, we find that there exists a solution of the constraint equations so long
as the transverse-traceless tensor in the family is not identically zero.
When the mean curvature changes sign, the situation is more delicate.  We
observe in this regime non-existence for certain large transverse-traceless tensors,
non-uniqueness for certain small transverse traceless tensors, and a critical
value of $\Ta$ (depending on the choice of lapse function 
in the CTS method and the choice of conformal class representative in the standard 
conformal method) for which there is an infinite family of solutions when the 
transverse-traceless tensor vanishes identically.

Previous non-uniqueness results for the conformal method
have been obtained by adding separate, poorly behaved terms to the equations, either in the form of non-scaling matter  sources \cite{BOMP07,Walsh07} or from coupling with a separate 
PDE in the extended conformal thin sandwich method \cite{PY05,Walsh07}.  We prove here
the first nontrivial non-uniqueness result for the 
standard, vacuum conformal method.  It arises from the nonlinear coupling of the equations,
and indicates that the standard conformal and CTS methods already contain poorly behaved terms.

Intriguingly, we find that for mean curvatures in the three-parameter family with
changing sign, the existence theory depends sensitively on the 
values of the constants involved in the nonlinear coupling of the conformal method.  We show
that 
these constants are balanced in such a way that any arbitrarily small adjustment to their values 
lead to one of two different existence theories.  All previous results for the conformal method
depend only on the signs of the constants in these equations.  This sensitivity suggests
why it has been so difficult to obtain general large-data results for the conformal method.

The conformal data used in this study has one potential drawback: the mean curvature is not
continuous, but has jump discontinuities.  This level of regularity is lower than has
previously been considered for the fully coupled conformal method. We note, however,
that the CMC theory of the conformal method readily constructs solutions of the constraint
equations with certain kinds of discontinuous second fundamental forms (\cite{cb-low-reg,Maxwell05,Holstetal08}),
and we use the CMC results of \cite{cb-low-reg} to cope with the discontinuities in the mean curvature.
From this perspective the singularities in the mean curvature are comparatively mild. It seems likely, moreover, that the low
regularity techniques introduced in \cite{Maxwell05} and extended in \cite{Holstetal08} could be applied to the construction method of \cite{Maxwell09} to obtain results that
apply to non-CMC conformal data of the regularity we consider here. 
We will address this question in subsequent work.

\subsection{Conformal Parameterizations}
Let $(M^n,h)$ be a Riemannian manifold and let $K$ be a second fundamental form on $M^n$,
i.e. a symmetric $(0,2)$-tensor. The vacuum Einstein constraint equations for $(h,K)$
are
\begin{subequations}\label{eq:constraints}
\begin{alignat}{2}
R_{h}-\abs{K}_{h}^2+ \tr_{h} K ^2 &= 0 &\qquad&\text{[\small Hamiltonian constraint]}\label{eq:hamiltonian}\\
\div_{h} K -d\,\tr_{h} K &= 0 &&\text{[\small momentum constraint]}\label{eq:momentum}
\end{alignat}
\end{subequations}
where $R_{h}$ is the scalar curvature of $h$. 
For simplicity, we restrict our attention to compact manifolds.

\begin{problem}[Conformal Parameterization Problem]\label{prob:conf}
Let $(M^n,g)$ be a compact Riemannian manifold.  Find a constructive parameterization of the set of
solutions $(h,K)$ of equations \eqref{eq:constraints} such that $h$ belongs to the conformal class of $g$.
\end{problem}

If $(h,K)$ is a solution of equations \eqref{eq:constraints} with $h$ in the conformal class of $g$, we
may write $h=\phi^{q-2} g$ for some positive function $\phi$, where
\be
q=\frac{2n}{n-2}.
\ee
Without loss of generality we can write $K=\phi^{-2}\left( S + \frac{T}{n}g\right)$ where $S$ is a traceless
$(0,2)$-tensor and $T$ is a scalar field.  The constraint equations \eqref{eq:constraints}
for $(h,K)$ can then be written in terms of $(\phi,S,T)$ as
\begin{subequations}\label{conf-param-general}
\begin{alignat}{2}
-2\kappa q\Lap_g \phi + R_g \phi -\abs{S}_g^2 \phi^{-q-1} + \kappa T^2 \phi^{-q-1} &= 0 \label{eq:model-ham}\\
\div_g S - \kappa \phi^{q} d\; \left[\phi^{-q} T\right] &= 0 \label{eq:model-mom}
\end{alignat}
\end{subequations}
where
\be
\kappa = \frac{n-1}{n}.
\ee
The conformal parameterization problem then amounts to parameterizing the solutions $(\phi,S,T)$
of \eqref{conf-param-general}.

The conformal method \cite{cb-york-held} and its variation, the conformal thin sandwich (CTS) method \cite{YorkCTS99}, provide possible
approaches for solving Problem \ref{prob:conf}. An overview of these methods can be found
in \cite{BI04}. We summarize the techniques here to establish notation
and to state known results that impact our analysis of the model problem.

With the conformal method, one
specifies a mean curvature $\tau$ and a transverse-traceless tensor $\sigma$
(i.e. a symmetric, trace-free, divergence-free $(0,2)$-tensor).  We write
$T=\phi^{q}\tau$ and $S=\sigma + \ck W$ where $W$ is an unknown vector field and
$\ck$ is the conformal Killing operator defined by
\begin{equation}
(\ck V)_{ij} = \nabla_i V_j + \nabla_j V_i -\frac{2}{n} \nabla^k V_k g_{ij}.
\end{equation} 
Equations \eqref{conf-param-general} then become
\begin{subequations}\label{conf-param}
\begin{alignat}{2}
-2\kappa q\Lap_g \phi +R_g\phi -\abs{\sigma+\ck W}_g^2 \phi^{-q-1} + \kappa \tau^2 \phi^{q+1} &= 0 
&\quad&\text{[\small conformal Hamiltonian constraint]}\label{eq:cHamConst}\\
\div_g \ck W - \kappa \phi^{q} d\;\tau &= 0.
&\quad&\text{[\small conformal momentum constraint]}\label{eq:cMomConst}
\end{alignat}
\end{subequations}
These are coupled nonlinear elliptic equations to solve for
unknowns $(\phi,W)$.

For the CTS approach one specifies $\sigma$ and $\tau$ along with an additional
positive scalar function $N$ which represents a lapse.\footnote{Although
the CTS method is not usually presented as specifying $\sigma$ (compare \cite{YorkCTS99}) it is
straightforward to show that the presentation here is equivalent to the usual one.}
The CTS method is then obtained by replacing $\ck W$ with $\frac{1}{2N}\ck W$
wherever it appears in the discussion for the conformal method.
Although operationally similar to the conformal method, the CTS method has the advantage
of being conformally covariant. Specifically, if $\theta$ is a positive function,
then conformal data $(\theta^{q-2} g, \theta^{-2}\sigma, \theta^{q}N,\tau)$ yields the solution $(h,K)$
if and only if $(g,\sigma,N,\tau)$ does.  From the perspective of working with a fixed
background metric $g$, the standard conformal method simply corresponds to the CTS method with the
choice of $N=1/2$. We can think of the CTS approach as
providing many different parameterizations, one for each choice of $N$. It is not known if
certain choices of $N$ are superior for the purposes of finding a parameterization.
From the conformal covariance we observe that the choice of $N$ in the conformal-thin sandwich
method is equivalent to the choice of background metric for the conformal method: the solution
theory for the standard conformal method with the background metric $\hat g = \theta^{q-2} g$ is 
equivalent to the solution theory for the conformal thin sandwich method with lapse function
$N=\frac{1}{2}\theta^{-q}$. A conformal thin sandwich solution exists for $(g,\sigma,N,\tau)$ if and only if 
a standard conformal method solution exists for $(\hat g,\theta^{-2}\sigma, \tau)$, and 
the resulting solutions of the Einstein constraint equations are the same.

In the event that $\tau$ is constant, it is easy to see that the existence theory for
system \eqref{conf-param} reduces to the study of the Lichnerowicz equation
\begin{equation}\label{lich}
-2\kappa q\Lap_g \phi+R_g\phi -\abs{\sigma}_g^2 \phi^{-q-1} + \kappa \tau^2 \phi^{q+1} = 0.
\end{equation}
The obstruction to the existence of solutions of \eqref{lich} 
is stated in terms of the metric's Yamabe invariant
\be
\yamabe_g =  \inf_{f\in C^{\infty}(M)\atop f\not\equiv0} 
{\int_M 2\kappa q\abs{\nabla f}^2_g + R_g f^2\dV_g\over ||f||_{L^{q}}^2}
\ee
and we have the following theorem from \cite{const-compact}.
\begin{theorem}\label{thm:cmc}
Let $(M,g)$ be a smooth compact Riemannian manifold, let $\sigma$ be a transverse-traceless tensor,
and let $\tau$ be a constant.  Then there exists a positive solution of \eqref{lich} (and hence 
a solution of the conformally parameterized constraint equations \eqref{conf-param}) if and 
only if one of the following hold:
\begin{enumerate}
\item $\yamabe_g >0$, $\sigma\not\equiv 0$, 
\item $\yamabe_g =0$, $\tau\neq 0$, $\sigma\not\equiv 0$,
\item $\yamabe_g <0$, $\tau\neq 0$,
\item $\yamabe_g =0$, $\tau=0$, $\sigma\equiv 0$.
\end{enumerate}
When a solution exists it is unique, except in case 4) in which case any two solutions are related
by a positive scalar multiple.
\end{theorem}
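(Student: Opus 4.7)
The plan is to combine three classical ingredients: conformal covariance of \eqref{lich} to normalize the background metric, the sub- and super-solution method for the existence parts, integral identities for non-existence, and a maximum-principle argument for uniqueness. The Lichnerowicz equation \eqref{lich} is conformally covariant in the sense that if $\tilde g=\theta^{q-2}g$ with $\theta>0$ and $\tilde\sigma=\theta^{-2}\sigma$, then $\tilde\sigma$ is transverse-traceless with respect to $\tilde g$ and $\phi$ solves \eqref{lich} for $(g,\sigma,\tau)$ if and only if $\theta^{-1}\phi$ solves it for $(\tilde g,\tilde\sigma,\tau)$. Because $\yamabe_g$ is a conformal invariant, the Yamabe theorem lets me pass to a representative in which $R_g$ is a constant whose sign matches that of $\yamabe_g$, so I will assume this throughout.

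For existence I produce, in each of the four cases, positive functions $\phi_-\le\phi_+$ obeying the appropriate one-sided versions of \eqref{lich} and invoke standard semilinear sub/super-solution theory. In case~4 the equation collapses to $-\Delta\phi=0$, and every positive constant solves it. In case~3 ($R_g<0$), a small positive constant is a sub-solution because the leading $R_g\phi$ term is negative, and a large positive constant is a super-solution because $\kappa\tau^2\phi^{q+1}$ dominates. In case~1 ($R_g>0$) a large constant remains a super-solution, but constants fail as sub-solutions; here I solve the linear auxiliary problem $-2\kappa q\Delta u+R_g u=|\sigma|^2$, which is uniquely solvable with $u>0$ because the operator is invertible and positivity-preserving when $R_g>0$, and then verify that $\phi_-=\epsilon u$ satisfies the sub-solution inequality for $\epsilon>0$ small enough. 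Case~2 is handled analogously with a modified auxiliary construction that also absorbs the $\tau$-term.

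Non-existence follows from multiplying \eqref{lich} by $\phi$ and integrating over $M$:
\begin{equation*}
\int_M\bigl(2\kappa q|\nabla\phi|^2+R_g\phi^2+\kappa\tau^2\phi^{q+2}\bigr)\,dV_g=\int_M|\sigma|^2\phi^{-q}\,dV_g.
\end{equation*}
When $\yamabe_g\ge 0$ and $\sigma\equiv 0$, the left side is bounded below by $\yamabe_g\|\phi\|_{L^q}^2+\kappa\tau^2\int_M\phi^{q+2}\,dV_g$, which is strictly positive as soon as $\yamabe_g>0$ or $\tau\ne 0$, contradicting the vanishing right side. The other excluded combinations (in particular $\yamabe_g<0$ with $\tau=0$) are ruled out by the observation that a positive solution would produce a metric $h=\phi^{q-2}g$ in the conformal class whose scalar curvature has a sign incompatible with $\yamabe_g$. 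For uniqueness I use conformal covariance to normalize a putative solution to $\phi_1\equiv 1$, which forces $R_g=|\sigma|^2-\kappa\tau^2$; any second solution $\phi_2$ then satisfies an equation whose zero-order part is strictly monotone in $\phi$ thanks to the mismatched powers $\phi^{q+1}$ and $-\phi^{-q-1}$, and evaluating at the interior extrema of $\phi_2$ via the maximum principle forces $\phi_2\equiv 1$ whenever $\sigma\not\equiv 0$ or $\tau\ne 0$. In case~4 both nonlinear terms vanish, the equation becomes linear with kernel spanned by constants, producing exactly the stated one-parameter family.

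I expect the main obstacle to be the sub-solution construction in cases~1 and~2 when $\sigma$ has zeros and $\tau\ne 0$: at the zero set of $\sigma$ the only non-positive term in the nonlinearity disappears, while $\kappa\tau^2\phi^{q+1}>0$ remains, so the naive auxiliary trick outlined above fails pointwise there. Resolving this will require either a more refined auxiliary problem that couples the $\sigma$- and $\tau$-contributions, or a regularize-and-pass-to-the-limit argument in which $|\sigma|^2$ is replaced by $|\sigma|^2+\delta$ and uniform estimates are extracted as $\delta\to 0$.
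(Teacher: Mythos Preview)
The paper does not prove this theorem: it is quoted from Isenberg's paper \cite{const-compact} and used as background.  So there is no ``paper's own proof'' to compare with; your outline is in fact the classical route taken in that reference---conformal normalization of $R_g$, sub/super-solutions for existence, integral and scalar-curvature--sign identities for non-existence, and a monotonicity/maximum-principle argument for uniqueness.

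The obstacle you flag is genuine and is exactly the delicate point of the classical proof.  Your auxiliary function $u$ solving $-2\kappa q\Delta u+R_gu=|\sigma|^2$ gives $\phi_-=\epsilon u$ with
\[
-2\kappa q\Delta\phi_-+R_g\phi_--|\sigma|^2\phi_-^{-q-1}+\kappa\tau^2\phi_-^{q-1}
=\epsilon|\sigma|^2-|\sigma|^2(\epsilon u)^{-q-1}+\kappa\tau^2(\epsilon u)^{q-1},
\]
and at a zero of $\sigma$ the right-hand side reduces to $\kappa\tau^2(\epsilon u)^{q-1}>0$, so the sub-solution inequality fails pointwise no matter how small $\epsilon$ is.  The $\delta$-regularization you propose (replacing $|\sigma|^2$ by $|\sigma|^2+\delta$) does produce solutions $\phi_\delta$, but the pointwise minimum principle only gives $R_g(\min\phi_\delta)^{q+2}+\kappa\tau^2(\min\phi_\delta)^{2q}\ge\delta$, which does not prevent $\min\phi_\delta\to 0$; extracting a positive limit is not automatic.

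Your other suggestion---a more refined auxiliary problem---does work and is closer to what is actually done.  One clean version in the Yamabe-null case ($R_g=0$, $\tau\neq 0$, $\sigma\not\equiv 0$): set $A=\mathrm{Vol}(g)^{-1}\int_M|\sigma|^2>0$, solve $-2\kappa q\,\Delta v=|\sigma|^2-A$, and take $\phi_-=\epsilon(1+\delta v)$ with $\delta$ fixed small and then $\epsilon$ small.  On the zero set of $\sigma$ the inequality becomes $-\epsilon\delta A+\kappa\tau^2\bigl(\epsilon(1+\delta v)\bigr)^{q-1}\le 0$, which holds for small $\epsilon$ since $q-1>1$; off the zero set the $-|\sigma|^2\phi_-^{-q-1}$ term dominates.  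In the Yamabe-positive case one first passes to the conformal gauge $g'=\phi_0^{q-2}g$ determined by the $\tau=0$ solution $\phi_0$ (which your argument does construct), so that $R_{g'}=|\sigma'|_{g'}^2$, and then applies the same averaged-auxiliary trick in $g'$.  With this fix in place your sketch becomes a complete proof.

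Two small remarks on the rest of your outline.  For non-existence when $\yamabe_g=0$, $\tau=0$, $\sigma\not\equiv 0$, the integral identity alone is inconclusive; you need the observation that a solution would yield $R_h=|K|_h^2\ge 0$ with $R_h\not\equiv 0$, forcing the first eigenvalue of the conformal Laplacian to be positive and hence $\yamabe_g>0$---your ``incompatible sign'' remark should be read this way.  For uniqueness, the maximum-principle step works for arbitrary sign of $R_g$ because the strong maximum principle for $-\Delta v+cv\ge 0$, $v\ge 0$, requires only boundedness of $c$, not a sign; so Case~3 needs no separate treatment.
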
 
Hence the set of CMC solutions of \eqref{eq:constraints} having a metric conformally related to $g$ 
is essentially parameterized by choosing pairs $(\sigma,\tau)$.

The following non-CMC variation of Theorem \ref{thm:cmc} appeared in \cite{Maxwell09}.
\begin{theorem}\label{thm:non-cmc}
Let $(M,g)$ be a smooth compact Riemannian 3-manifold with no conformal Killing fields.
Suppose $\sigma$ and $\tau$ are a transverse-traceless tensor and a mean curvature such that 
one of the following hold:
\begin{enumerate}
	\item $\yamabe_g>0$, $\sigma\not\equiv 0$,
	\item $\yamabe_g=0$, $\sigma\not\equiv 0$, $\tau\not\equiv 0$
	\item $\yamabe_{g} < 0$ and there exists $\hat g$ in the conformal class of $g$ such that
  $R_{\hat g} = -\tau^2$.
\end{enumerate}
\textbf{If there exists a global upper barrier} for $(g,\sigma,\tau)$, then there exists at least
one solution of the conformally parameterized constraint equations \eqref{conf-param}.
\end{theorem}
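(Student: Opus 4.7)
The plan is to prove existence by a Schauder fixed-point argument on the coupled system, adapting the sub/super-solution framework of \cite{Holstetal08,Maxwell09}. The global upper barrier $\phi_+$ is supplied by hypothesis; once a compatible positive lower barrier $\phi_-\leq\phi_+$ is constructed, I would iterate $\phi\mapsto\tilde\phi$, where $W(\phi)$ is the unique solution of the linear momentum equation \eqref{eq:cMomConst} and $\tilde\phi$ solves the Lichnerowicz equation with $\sigma$ replaced by $\sigma+\ck W(\phi)$. The absence of conformal Killing fields guarantees that $\div_g\ck$ is an isomorphism on an appropriate Sobolev space of vector fields, so $W(\phi)$ is well defined and depends continuously on $\phi$; standard elliptic estimates further give a uniform bound on $\|\ck W(\phi)\|_{L^\infty}$ in terms of $\phi_+$ and $\|d\tau\|$.

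The main obstacle is the case-by-case construction of the global sub-barrier. In each case I would first exploit conformal covariance to pass to an advantageous background metric $\hat g$, and then choose $\phi_-$ adapted to the dominant low-order term of the Lichnerowicz operator. In case (1), rescale so that $R_{\hat g}>0$; a sufficiently small positive constant $\phi_-=\epsilon$ is a sub-solution because $-|\sigma+\ck W|_{\hat g}^2\epsilon^{-q-1}$ dominates, using $L^2$-orthogonality of transverse-traceless tensors to the image of $\ck$ to ensure $\sigma+\ck W\not\equiv 0$ whenever $\sigma\not\equiv 0$. In case (3), rescale to $\hat g$ with $R_{\hat g}=-\tau^2$; then a small positive constant is a sub-solution because
\begin{equation}
R_{\hat g}\phi_- + \kappa\tau^2\phi_-^{q+1} - |\sigma+\ck W|_{\hat g}^2\phi_-^{-q-1}
= -\tau^2\phi_- + \kappa\tau^2\phi_-^{q+1} - |\sigma+\ck W|_{\hat g}^2\phi_-^{-q-1}\leq 0
\end{equation}
for $\phi_-$ small. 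In case (2), rescale to a scalar-flat representative and build $\phi_-$ from the CMC solution supplied by Theorem \ref{thm:cmc}(2) at a constant mean curvature bounded by $\inf|\tau|$, exploiting the joint nontriviality of $\sigma$ and $\tau$.

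With both barriers in hand, Schauder's fixed-point theorem applied to the iteration on the order interval $[\phi_-,\phi_+]$, viewed as a bounded closed convex subset of $C^0(M)$, produces a fixed point $\phi$, and $(\phi,W(\phi))$ is the desired solution of \eqref{conf-param}. Continuity and compactness of the iteration follow from linear elliptic regularity for the momentum constraint together with the uniform nonlinearity bounds provided by the barriers and a Sobolev embedding. I expect the hardest step to be the sub-barrier construction in case (2), since neither the scalar curvature nor the Yamabe invariant supplies a useful sign, forcing a delicate balancing of $\sigma$ and $\tau$; a secondary technical wrinkle is ensuring $\phi_-\leq\phi_+$ pointwise after the conformal change of background, which may require rescaling $\phi_-$ down by a small constant at the end.
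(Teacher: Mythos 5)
Your fixed-point skeleton (solve the momentum constraint for $W(\phi)$, solve the Lichnerowicz equation, invoke Schauder on an order interval) is the same as in the cited proof, but the heart of your argument --- the case-by-case construction of a global lower barrier --- has a genuine gap in cases (1) and (2). In case (1), a small constant $\epsilon$ is \emph{not} a subsolution of the Lichnerowicz equation with coefficient $\abs{\sigma+\ck W(\phi)}^2$: the $L^2$-orthogonality $\int_M \langle\sigma,\ck W\rangle\,dV_g=0$ only gives $\sigma+\ck W\not\equiv 0$, i.e.\ nonvanishing on a set of positive measure, whereas the subsolution inequality $R_{\hat g}\,\epsilon-\abs{\sigma+\ck W}^2\epsilon^{-q-1}+\kappa\tau^2\epsilon^{q+1}\le 0$ must hold \emph{pointwise} and \emph{simultaneously for every} $W(\phi)$ with $0<\phi\le\phi_+$. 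At any zero of $\sigma+\ck W(\phi)$ (such zeros cannot be excluded, and the set of admissible $W(\phi)$ is not under pointwise control without a near-CMC hypothesis) the left-hand side is $R_{\hat g}\,\epsilon+\kappa\tau^2\epsilon^{q+1}>0$ once $R_{\hat g}>0$, so no constant works. In case (2) the CMC solution of Theorem \ref{thm:cmc}(2) is a subsolution for the coefficient $\abs{\sigma}^2$, not for $\abs{\sigma+\ck W(\phi)}^2$; the cross term $2\langle\sigma,\ck W\rangle$ can make the latter strictly smaller pointwise, so the required inequality again fails, and moreover $\inf\abs{\tau}$ may be zero even though $\tau\not\equiv 0$. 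Only your case (3) barrier (small constants after conformally rescaling to $R_{\hat g}=-\tau^2$) is sound.

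This obstruction is precisely what the quoted theorem was designed to circumvent: in \cite{Holstetal08} the pointwise-positive coefficient needed for a global subsolution is supplied by non-vanishing matter, and the vacuum argument of \cite{Maxwell09} (which this paper cites rather than reproves) dispenses with a global subsolution altogether. Instead one proves a \emph{uniform positive lower bound} for the Lichnerowicz solutions $N(\phi)$ themselves over all $0<\phi\le\phi_+$, exploiting the facts that $\int_M\abs{\sigma+\ck W(\phi)}^2 = \int_M\abs{\sigma}^2+\int_M\abs{\ck W(\phi)}^2\ge\int_M\abs{\sigma}^2>0$ and $N(\phi)\le\sup\phi_+$, via a nonlocal (integral/Green's-function) estimate rather than a pointwise barrier; Schauder is then applied on the closed set $\{\phi: c\le\phi\le\phi_+\}$. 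To repair your proof you would need either pointwise lower control of $\abs{\sigma+\ck W(\phi)}$ --- which is exactly the near-CMC-type assumption you are not permitted --- or a uniform-lower-bound lemma of this nonlocal type replacing the subsolution construction in cases (1) and (2).
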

The reader is referred to \cite{Maxwell09} for the definition of a global upper barrier 
(where it is called a global supersolution\footnote{The terminology global supersolution is perhaps misleading since it is not clear that all solutions of \eqref{conf-param} have associated global supersolutions.}); see also Appendix \ref{sec:even}.  
Cases 1-3 of Theorem \ref{thm:non-cmc}
reduce to those of Theorem \ref{thm:cmc} if $\tau$ is constant. 
Moreover, the condition on $\tau$ in Case 3 is necessary if $\yamabe_g<0$\cite{Maxwell05}.  
Until now, all results for the conformal method
are consistent with the possibility that (aside from the exceptional Case 4 of Theorem \ref{thm:cmc}), 
the conditions of Cases 1--3 of Theorem \ref{thm:non-cmc}
are necessary and sufficient for the unique solvability of equations \eqref{conf-param}.  
We show in this paper that this is not the case. In particular we find 
certain data satisfying the conditions of Case 2 for which there are 
nontrivially related multiple solutions.
We also find other symmetric data satisfying the conditions of Case 2 
for which there are no symmetric solutions
(and hence there are either no solutions or there are multiple solutions).

Global upper barriers can be found if the conformal data is CMC, satisfies a near-CMC condition 
such as
\be
{\max |\nabla \tau|\over \min |\tau|}\,\;\text{is sufficiently small},
\ee
or if $\yamabe_g>0$ and $\sigma$ is small-TT, i.e.
\be
\max |\sigma|_g\,\;\text{is sufficiently small, with smallness depending on $\tau$}.
\ee
This last upper barrier was first presented in \cite{Holstetal08} and led to the 
far-from CMC results of \cite{Holstetal08} and \cite{Maxwell09}.
Uniqueness theorems are available for a 
general class of near-CMC data under additional hypotheses on the size of 
$\abs{\nabla \tau}$ (\cite{IM96,ICA08}), but nothing is known concerning uniqueness in the small-TT case.

Results of O'Murchadha and Isenberg \cite{isenberg-omurchadha-noncmc} show that the condition $\sigma\not\equiv 0$
in hypotheses 1 and 2 of Theorem \ref{thm:near-cmc} 
is necessary for certain non-CMC data.  In particular, their
``no-go'' theorem proves that if $R_g\ge 0$ (or if $\yamabe_g \ge 0$ and we are using the
CTS method), then there does not exist a solution of \eqref{conf-param} if $\tau$ is near-CMC 
and $\sigma\equiv 0$.  
Rendall has also shown, as presented in \cite{isenberg-omurchadha-noncmc}, 
that there exists a class of Yamabe-positive
far-from CMC conformal data with $\sigma\equiv 0$ such that if a solution to
equations \eqref{conf-param} exists, it is not unique.  It is not known which of existence
or uniqueness fails for Rendall's data.

Symmetries pose a difficulty for the conformal method, and this
hampers the development of concrete examples.
Essentially all non-CMC existence results require that $(M^n,g)$ has no
conformal Killing fields.\footnote{\cite{CIM92} contains an exception, but it requires the conformal
data be constant along the integral curves of any conformal Killing fields. For the toroidal
initial data we consider in Section \ref{sec:roughfamily} this amounts to assuming
that $\tau$ is constant.}
Analytically this condition arises to guarantee that the operator $\div\ck$ is surjective,
but the need for this condition is more fundamental.
If $(M^n,g)$ admits a conformal Killing field $X$, then selection of a mean curvature
poses an a-priori restriction
on the solution $\phi$ of \eqref{conf-param} even before $\sigma$ is selected.  If $(h,K)$ is a 
solution of the constraint equations, then the mean curvature $\tau = \tr_h K$ must satisfy
$\int_M X(\tau)\; dV_h = 0$;
this identity is obtained by multiplying the momentum constraint \eqref{eq:momentum}
by $X$ and integrating by parts.
Writing this equation in terms of $g$ we find
\begin{equation}\label{CKFperp}
\int_M \phi^{q} X(\tau) \; dV_g =0,
\end{equation}
If $\tau$ is constant then equation \eqref{CKFperp} holds trivially.
If it is possible to find a solution $(\phi,W)$ for general data $(\sigma,\tau)$, then $W$
has to arise in such a way that $\phi$, which solves a Lichnerowicz equation depending on $W$,
also satisfies \eqref{CKFperp}.  The mechanism which might
cause this for arbitrary conformal data is not understood, and the issue is sidestepped
in the literature by assuming that there are no conformal Killing fields.

\section{Conformally flat $U^{n-1}$ symmetric data on $S^n$}
Let $S^1_r$ denote the circle of radius $r$ and let 
$M^n = S^1_{r_1}\times\cdots\times S^1_{r_n}$ with the product metric $g$. 
We can pick coordinates $x^k$ along each
factor such that $g_{ij} = \delta_{ij}$ and consider the following variation of Problem
\ref{prob:conf}.
\begin{problem} [Reduced Parameterization Problem]\label{prob:conf-model}
Find all solutions $(h,K)$ of the Einstein constraint equations on $M^n$ such that $h$ is conformally
related to $g$ and such that the Lie derivatives $\Lie_{\partial_k} g$ and $\Lie_{\partial_k} K$ vanish for
$1\le k \le n-1$.
\end{problem}
In practice we are seeking solutions such that $h$ and $K$ are periodic functions of $x^n$ alone;
by an obvious scaling argument we may reduce to the case $r_n=1$ and $x\equiv x^n\in [-\pi,\pi]$

The maximal globally hyperbolic spacetime obtained from such data will be a Gowdy
spacetime with a conformally flat Cauchy surface.  Our focus is not so much
to generate initial data for Gowdy spacetimes (the formulation of the constraint
equations found in \cite{PiotrGowdy} is more convenient for that purpose), but to use 
the conformally flat torus as a test case for
conformal parameterizations in general. 
We remark that the CMC version of Problem \ref{prob:conf-model} (including more general
toroidal background metrics) was effectively treated in \cite{Isenberg-diss}.

For the moment we work in three dimensions and use the variables
$(\phi,S,T)$ introduced in the previous section. In coordinates we can write
\be
S = \frac{1}{3}\left[\begin{matrix} -a-b & c & d \\
                           c  & -a +b & e\\
                           d  & e & 2a \end{matrix}\right].
\ee
Assuming that $S$ and $T$ are functions of $x=x^3$ alone we have
$\div S = \frac{1}{3}(d',e',2a')$ and hence the momentum constraint \eqref{eq:model-mom} reads
\be
\frac{1}{3}( d', e', 2a') = \frac{2}{3}\phi^6 (0,0, (\phi^{-6} T)').
\ee
Here primes to denote derivatives with respect to $x$.
Note that $S$ is transverse-traceless if and only if $a$, $d$, and $e$ are constant,
and that $(\phi,S,T)$ satisfies the momentum constraint if and only if $d$ and $e$
are constant and
\be
a' = \phi^6 (\phi^{-6} T)'.
\ee
Letting $\eta^2 = (b^2+c^2+d^2+e^2)/9$, and noting that $(M^n,g)$ is scalar flat,
the Hamiltonian constraint \eqref{eq:model-ham} reads
\begin{equation}
-8 \phi'' - 2\eta^2\phi^{-7} +\frac{2}{3}\left[ T^2-a^2\right]\phi^{-7} = 0.
\end{equation}

A similar derivation works in higher dimensions, and we obtain the reduced equations
\begin{equation}
\label{conf-reduced-general}
\begin{aligned}
-2\kappa q\;\phi'' - 2\eta^2\phi^{-q-1}+\kappa\left[ T^2-a^2\right]\phi^{-q-1} &= 0\\
a' - \phi^q (\phi^{-q} T)' &=0.
\end{aligned}\end{equation}
Solving Problem \ref{prob:conf-model} amounts to parameterizing the solutions $(\phi,\eta,a,T)$
of \eqref{conf-reduced-general}.

The conformal method can be described in this framework as follows. First we write
\be
T=\phi^{q}\tau
\ee
where $\tau$ is a prescribed mean curvature function and the conformal factor $\phi$ is unknown.  Additionally, we decompose
\be
a=\mu+w'
\ee
where $\mu$ is a prescribed constant and $w$ is an unknown function. The function $w$
is related to the vector field $W$ of the conformal method via $2W=w\partial_{n}$.
The  constant $\mu$ is part of the transverse-traceless tensor; to specify the remainder we 
select an arbitrary function $\eta$.
Equations \eqref{conf-reduced-general} become
\begin{equation}\label{conf-reduced}
\begin{aligned}
-2\kappa q\;\phi'' - 2\eta^2\phi^{-q-1}-\kappa(\mu+w')^2\phi^{-q-1}+\kappa \tau^2\phi^{q-1} &= 0\\
w'' - \phi^q \tau' &= 0.
\end{aligned}\end{equation}
For the CTS approach we additionally choose a positive function $N$ and write $a=\mu+1/(2N)w'$.
The CTS equations are then
\begin{equation}\label{CTS-reduced}
\begin{aligned}
-2\kappa q\;\phi'' - 2\eta^2\phi^{-q-1}-\kappa(\mu+(2N)^{-1}w')^2\phi^{-q-1}+\kappa \tau^2\phi^{q-1} &= 0\\
((2N)^{-1}w')' - \phi^q \tau' &= 0.
\end{aligned}\end{equation} 

Equations \eqref{CTS-reduced} provide a model for the full CTS equations on a Yamabe-null
manifold.  The nonlinear coupling for this system is the same as for the original equations.  Moreover,
the background metric on $S^1$ has a nontrivial conformal Killing field ($\partial_x$).  Hence
the central difficulties of the conformal method are present in the model.
Appendix \ref{sec:even} outlines how standard techniques for the conformal method can be adapted to equations \eqref{CTS-reduced} if the data satisfy an additional evenness hypothesis.  Our primary focus, however,
is on examining a family of conformal data for which we obtain stronger results than are possible with
the techniques of Appendix \ref{sec:even}.

\section{A family of low regularity conformal data}\label{sec:roughfamily}
The prescribed data for system \eqref{CTS-reduced} are a constant
$\mu$ and a function $\eta$ together with a mean curvature function $\tau$.
We will assume that $\eta$ is constant and work with a one-parameter
family of mean curvatures
\bel{tau-form}
\taua = \Ta + \Tbeta
\ee
where $\Ta$ is constant and 
\bel{beta-form}
\Tbeta(x) =\begin{cases}  -1 & -\pi<x<0 \\
                          1 & 0 < x <\pi.\\
\end{cases}
\ee

This three-parameter family is suitable for exploring
simultaneous violations of the near-CMC and small-TT hypotheses.
The parameters $\eta$ and $\mu$ control the size of the 
relevant pieces of the transverse-traceless tensor.  On the other hand, $\Ta$ controls the 
departure from CMC in the sense that for large values of $\Ta$ the mean curvature has small 
relative deviation from its mean, and is hence near-CMC (see also Proposition 
\ref{prop:doubling} and the subsequent discussion in  Appendix \ref{sec:even}).

Data of this kind fall outside the current theory of the conformal
method for two reasons. First, the manifold possesses a 
non-trivial conformal Killing field ($\partial_x$) and the non-CMC data is not constant along it.
Second, the discontinuities in $\taua$
make the data more singular than is treated in the current best low-regularity results 
of \cite{Holstetal08} for the full coupled system \eqref{conf-param}.  

We avoid both difficulties by showing that the reduced system \eqref{CTS-reduced} for this
data can be decoupled, and the analysis will reduce to the study of Lichnerowicz-type equations.
Just as for the CMC-conformal method, the decoupling removes potential obstructions posed 
by conformal Killing fields.  Moreover, the data we consider are only modestly irregular
for the Lichnerowicz equation alone.  In particular, the results of \cite{cb-low-reg} are applicable.

\subsection{Summary of results}\label{sec:summary}
We wish to solve
\bel{CTS-reduced-2}
\begin{aligned}
-2\kappa q\;\phi'' - 2\eta^2\phi^{-q-1}-\kappa(\mu+(2N)^{-1}w')^2\phi^{-q-1}+\kappa \taua^2\phi^{q-1} &= 0\\
((2N)^{-1}w')' - \phi^q \taua' &= 0
\end{aligned}
\ee
on $S^1$.  Here $N$ is a given smooth lapse function, $\eta$ and $\mu$ are constants, 
and $\taua$ is defined by \eqref{tau-form} and \eqref{beta-form}. We seek
solutions $(\phi,w)\in W_+^{2,p}(S^1)\times W^{1,p}(S^1)$ where $p>1$; the subscript 
$+$ denotes the subset of positive functions. An easy bootstrap argument shows that if
such a solution exists it belongs to $W^{2,\infty}_+(S^1)\times W^{1,\infty}(S^1)$.  If $(\phi,w)$
is a solution, so is $(\phi,w+c)$ for any constant $c$, and it determines the same 
solution of the constraint equations.  We will say that 
$(\phi,w)$ is the unique solution of \eqref{CTS-reduced-2} if any other solution
is of the form $(\phi,w+c)$.

The existence theory turns out to depend on the choice of lapse function $N$ in the conformal 
thin sandwich case (or equivalently, on the choice of conformal representative of the background metric 
in the standard conformal method). 
We define
\bel{gammaN}
\gamma_N=-\frac{\int_{S^1} \Tbeta N}{\int_{S^1} N}.
\ee
It is easy to see that $-1 < \gamma_N <1$ and that if $N$ is constant (as in the conformal
method with the flat background metric), then $\gamma_N=0$.
The near-CMC regime is expressed in terms of the distance between $\Ta$ and $\gamma_N$.
\begin{figure}
\begin{centering}
\includegraphics{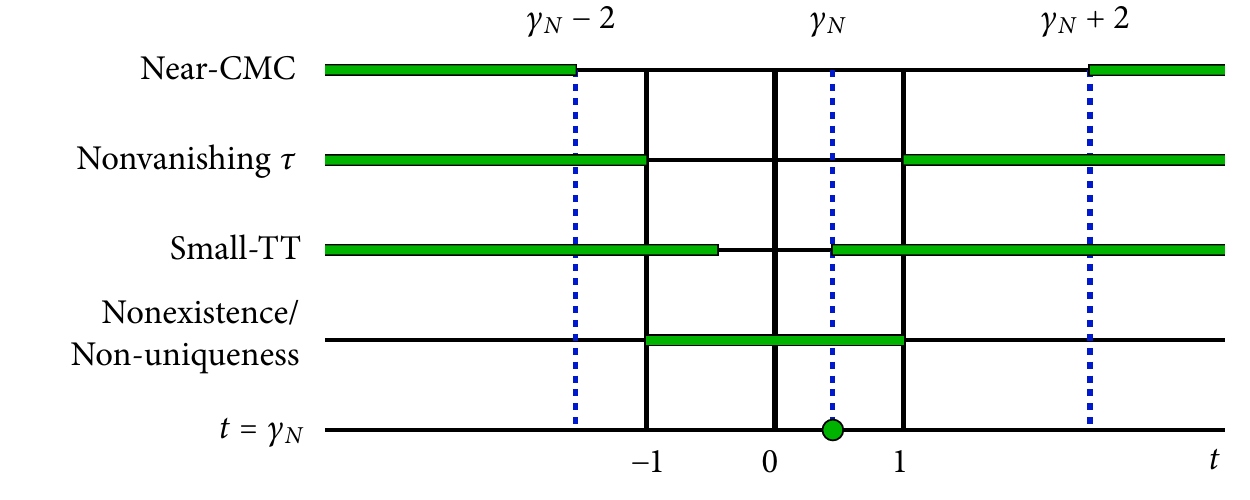}
\end{centering}
\caption{Ranges of $\Ta$ considered by the theorems of Section \ref{sec:summary}.}
\label{fig:Ta-range}
\end{figure}
\begin{theorem}[Near-CMC Results]\label{thm:near-cmc}
If $\abs{\Ta-\gamma_N}>2$ there exists a solution $(\phi,w)$ of \eqref{CTS-reduced-2} if and 
only if $\eta\neq 0$ or $\mu\neq 0$. Solutions are unique if $\mu=0$.
\end{theorem}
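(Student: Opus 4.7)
The plan is to decouple \eqref{CTS-reduced-2} via the momentum constraint, reducing existence to a fixed-point problem for a one-parameter family of Lichnerowicz-type equations indexed by $c := \phi(0)$.

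First I would observe that $\taua'$ is a distribution supported at the two discontinuities $\{0,\pi\}$, so the momentum equation $((2N)^{-1}w')' = \phi^q\taua'$ admits a $W^{1,p}$-periodic solution $w$ if and only if $\phi(0) = \phi(\pi) =: c$, which is the specialization of \eqref{CKFperp} to $X=\partial_x$. Integrating the momentum equation and fixing the integration constant by $\int_{S^1} w' = 0$, one finds $(2N)^{-1}w' = -c^q(1-\gamma_N)$ on $(-\pi,0)$ and $(2N)^{-1}w' = c^q(1+\gamma_N)$ on $(0,\pi)$; this is how $\gamma_N$ from \eqref{gammaN} enters, as the $N$-weighted average of $\Tbeta$. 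Consequently $a := \mu + (2N)^{-1}w'$ is piecewise constant with values $A_\pm(c) = \mu \pm c^q(1\pm\gamma_N)$, and the Hamiltonian reduces to
\begin{equation*}
-2\kappa q\,\phi'' - \sigma_c^2(x)\,\phi^{-q-1} + \kappa\,\taua^2\phi^{q-1} = 0\quad\text{on }S^1,\quad \sigma_c^2 := 2\eta^2 + \kappa A_\pm(c)^2.
\end{equation*}

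For existence when $\eta\neq 0$ or $\mu\neq 0$: since $|\gamma_N|<1$, the hypothesis $|\Ta-\gamma_N|>2$ implies $|\Ta|>1$, so $\taua = \Ta\pm 1$ is bounded away from zero; and for each $c>0$, $\sigma_c^2$ is strictly positive on at least one half. The Yamabe-null case (Case 2) of Theorem \ref{thm:cmc}, extended to piecewise-constant data via \cite{cb-low-reg}, then produces a unique positive $\phi_c\in W^{2,p}_+(S^1)$. Solutions of \eqref{CTS-reduced-2} correspond to fixed points of $F(c) := \phi_c(0)$, and continuity of $F$ follows from continuous dependence on coefficients. To produce a fixed point I would construct explicit constant sub- and super-solutions of the Lichnerowicz equation (built from the algebraic equilibrium $\phi^{2q}=\sigma_c^2/(\kappa\,\taua^2)$ on each half) to obtain $c_-, c_+>0$ with $F(c_-)\ge c_-$ and $F(c_+)\le c_+$; the latter bound is where the hypothesis $|\Ta-\gamma_N|>2$ is crucial, as it ensures the algebraic equilibrium on each half grows strictly slower than $c$. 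The intermediate value theorem then yields a fixed point.

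For the non-existence when $\eta=\mu=0$ and for uniqueness when $\mu=0$: in the former case, the scaling ansatz $\phi=c\tilde\phi$ transforms the reduced Lichnerowicz into $-2\kappa q\,c^{2-q}\tilde\phi'' - \kappa(1\mp\gamma_N)^2\tilde\phi^{-q-1} + \kappa\,\taua^2\tilde\phi^{q-1} = 0$, so a fixed point is equivalent to $\tilde\phi_c(0)=1$. A small-$c$ expansion around $\tilde\phi\equiv 1$, combined with the large-$c$ algebraic limit, shows $\tilde\phi_c(0)\neq 1$ throughout $(0,\infty)$ exactly when $|\Ta-\gamma_N|>2$, yielding non-existence; this is the model's incarnation of the Yamabe-null, $\sigma\equiv 0$ no-go phenomenon of \cite{isenberg-omurchadha-noncmc}. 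For uniqueness, here $A_\pm(c)^2 = c^{2q}(1\mp\gamma_N)^2$ is strictly increasing in $c$, so $\sigma_c^2$ is pointwise strictly increasing; a comparison-principle argument on the Lichnerowicz equation (exploiting the positivity of the coefficient $\taua^2$ on the superlinear term) shows $c\mapsto\phi_c$ is strictly monotone, so $F$ is strictly increasing and its fixed point is unique. I expect the main obstacles to be (i) the asymptotic analysis of $F$ at large $c$, where the precise role of the sharp threshold $|\Ta-\gamma_N|>2$ becomes visible through the algebraic matching, and (ii) the non-existence step for $\eta=\mu=0$, which requires ruling out $\tilde\phi_c(0)=1$ on the entire parameter range $(0,\infty)$ rather than merely asymptotically.
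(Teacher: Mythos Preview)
Your reduction to the one-parameter Lichnerowicz family and the existence argument via constant barriers and the intermediate value theorem are essentially the paper's approach (the paper rescales by setting $\psi_d=d^{-1}\phi_d$ and looks for $\calF(d):=\psi_d(0)=1$, which is your $F(c)=c$ in disguise). But two of your three remaining steps have real problems.

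\textbf{Uniqueness.} Your argument is that when $\mu=0$ the coefficient $\sigma_c^2$ is pointwise increasing in $c$, so by comparison $c\mapsto\phi_c$ is increasing, hence $F$ is strictly increasing, hence the fixed point is unique. The monotonicity of $F$ is correct, but strict monotonicity of $F$ does \emph{not} imply uniqueness of solutions to $F(c)=c$: an increasing function can cross the diagonal any number of times. What you need is $F'(c)<1$ at any fixed point (equivalently, in the rescaled picture, $\calF'(d)<0$ whenever $\calF(d)=1$), and this is not a consequence of monotonicity alone. The paper obtains this by linearizing the Lichnerowicz equation in $d$, getting an equation $-2\kappa q\,h''+Vh=-R$ for $h=\partial_d\psi_d$, and then using the near-CMC hypothesis $|\Ta-\gamma_N|>2$ to show $R\ge 0$, $R\not\equiv 0$ on $S^1$; the strong maximum principle then gives $h<0$ everywhere. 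The sign of $R$ is not automatic: it requires first showing that $\psi_d\ge 1$ on one of the half-intervals (via a convexity argument using $\psi_d(0)=\psi_d(\pi)=1$) and then invoking the near-CMC inequality. Your comparison-principle monotonicity does not see any of this structure.

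\textbf{Non-existence when $\eta=\mu=0$.} You propose to rule out $\tilde\phi_c(0)=1$ by a small-$c$ expansion plus a large-$c$ algebraic limit, and you correctly flag that the intermediate range is the obstacle. In fact the paper's argument is much simpler and covers all $c$ at once: when $\eta=\mu=0$ the constant $M_\infty=\max\bigl(|(1-\gamma_N)/(1-\Ta)|^{1/q},\,|(1+\gamma_N)/(1+\Ta)|^{1/q}\bigr)$ is a supersolution for every $c>0$, and the hypothesis $|\Ta-\gamma_N|>2$ is exactly what makes $M_\infty<1$. So $\tilde\phi_c\le M_\infty<1$ uniformly in $c$, and no asymptotic patching is needed.
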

Note that the condition $\eta\neq 0$ or $\mu\neq 0$ is exactly the condition
that the transverse traceless tensor is not identically zero.  Hence Theorem
\ref{thm:near-cmc} extends the near-CMC existence/uniqueness theorem
of \cite{ICA08} and the ``no-go'' theorem of \cite{isenberg-omurchadha-noncmc} to this family of data.
We have not determined if uniqueness holds for $\mu\neq 0$.

The value $\Ta=\gamma_N$ is special, and we have the following result that is
a partial analogue of exceptional Case 4 of Theorem \ref{thm:cmc}.
\begin{theorem}[Exceptional Case: $\Ta=\gamma_N$]\label{thm:exceptional}
If $\Ta=\gamma_N$ and if $\mu=\eta=0$, then there exists a one-parameter
family of solutions of \eqref{CTS-reduced-2}.  If $\mu=0$ and $\eta\neq 0$, there does not exist
a solution.
\end{theorem}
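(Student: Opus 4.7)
The plan is to exploit the special structure of the momentum equation when $\mu=0$. Because $\taua' = \Tbeta'$ is purely distributional, supported at the jumps $x=0$ and $x=\pm\pi$, the function $v := (2N)^{-1} w'$ is forced to be piecewise constant, say $v = v_+$ on $(0,\pi)$ and $v = v_-$ on $(-\pi,0)$. Matching $v'$ with the atomic measures $2\phi(0)^q\delta_0 - 2\phi(\pi)^q\delta_{\pi}$ produced by $\phi^q \Tbeta'$ forces $\phi(0) = \phi(\pi) =: \phi_0$ and $v_+ - v_- = 2\phi_0^q$, while periodicity of $w$ gives $N_+ v_+ + N_- v_- = 0$ with $N_\pm := \int_0^{\pm\pi} N$. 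Solving yields $v_\pm = \pm 2 N_\mp \phi_0^q / (N_+ + N_-)$. A short computation from \eqref{gammaN} shows that the condition $\Ta = \gamma_N$ is exactly equivalent to $\taua_\pm = \pm 2 N_\mp /(N_+ + N_-)$, so that under our hypothesis $v = \phi_0^q \taua$ piecewise.

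Substituting this identity for $v$ into the Hamiltonian equation decouples the system, reducing it on each interval to
\begin{equation*}
-2\kappa q\,\phi'' - 2\eta^2 \phi^{-q-1} + \kappa \taua^2\bigl(\phi^{q-1} - \phi_0^{2q}\phi^{-q-1}\bigr) = 0,
\end{equation*}
coupled across $x=0$ and $x=\pi$ only through $C^1$-matching and the boundary value $\phi_0$. For the first claim ($\eta = 0$), the constant function $\phi \equiv \phi_0$ solves this ODE for every $\phi_0 > 0$, since both $\phi''$ and the nonlinear bracket vanish; recovering $w$ (unique up to an additive constant) from $w' = 2Nv$ produces a solution of \eqref{CTS-reduced-2} for each $\phi_0 > 0$, and these solutions are genuinely distinct because their conformal factors differ.

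For the second claim ($\eta \neq 0$) I will argue by contradiction. At any interior critical point $x_1 \in (0,\pi)$ of $\phi$ with $\phi(x_1) \le \phi_0$, the reduced Hamiltonian gives
\begin{equation*}
-2\kappa q\,\phi''(x_1) = \phi(x_1)^{-q-1}\bigl(2\eta^2 + \kappa \taua_+^2 [\phi_0^{2q}-\phi(x_1)^{2q}]\bigr) > 0,
\end{equation*}
so $\phi''(x_1) < 0$, contradicting the local-minimum property. The analogous argument on $(-\pi,0)$ uses $\taua_- = \gamma_N - 1 \neq 0$. Hence $\phi > \phi_0$ on $(-\pi,\pi) \setminus \{0\}$, i.e.\ $\phi$ has a strict local minimum at $x=0$. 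Evaluating the reduced ODE at $x = 0^+$, where $\phi = \phi_0$ kills the $\taua_+^2$ bracket, yields $\phi''(0^+) = -\eta^2/(\kappa q \phi_0^{q+1}) < 0$, which by Taylor's formula (legitimate because $\phi \in W^{2,\infty}(S^1) \subset C^1(S^1)$ and $\phi''$ extends continuously to $0^+$) gives $\phi(h) < \phi_0$ for small $h > 0$, the required contradiction.

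The one place that needs care is justifying the distributional reading of the momentum equation in the opening step, in particular the pointwise evaluations $\phi(0)^q$ and $\phi(\pi)^q$ appearing in the product $\phi^q \Tbeta'$; this is routine since $\phi \in W^{2,p}$ embeds into $C^1$. The remaining maximum-principle and Taylor arguments are standard once the piecewise structure has been pinned down.
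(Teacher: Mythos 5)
Your proof is correct, but it takes a more hands-on route than the paper does. The first step --- reading the momentum equation distributionally to force $\phi(0)=\phi(\pi)$ and $(2N)^{-1}w'=\phi(0)^q(\Tbeta+\gamma_N)$, which under $\Ta=\gamma_N$ becomes $\phi(0)^q\taua$ --- is exactly the paper's Proposition \ref{prop:w-form}, so the two arguments agree up to the decoupling. After that they diverge. The paper stays inside its root-finding framework: by Propositions \ref{prop:psi_d} and \ref{prop:reduce-to-F} everything reduces to solving $\calF(d)=1$ for the rescaled family $\psi_d$ of \eqref{model-lich}, and Lemma \ref{lem:exceptional} then observes that when $\mu=\eta=0$ the unique solution is $\psi_d\equiv 1$ (giving $\calF\equiv 1$, hence the one-parameter family of constants), while when $\mu=0$, $\eta\neq 0$ the constants $1$ and $1+\epsilon$ are subsolutions, so the comparison principle of Proposition \ref{prop:lich-low-reg} gives $\calF(d)>1$ for all $d$ and there are no roots. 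You instead work directly with the decoupled piecewise ODE for $\phi$ and prove the needed comparison by hand: an interior-minimum/second-derivative argument shows $\phi>\phi(0)$ on the open intervals, and the one-sided Taylor expansion at $0^+$ (where the $\taua_+^2$ bracket vanishes and only $-2\eta^2\phi_0^{-q-1}$ survives) contradicts this. This is the same mechanism as ``$1$ is a strict subsolution'' --- your argument essentially reproves, in this one-dimensional piecewise-constant setting, the special case of the sub/supersolution comparison that the paper cites --- so what you gain is a self-contained, elementary proof not relying on the $\calF$ machinery or on Proposition \ref{prop:lich-low-reg}, at the cost of length. One small compression worth tightening: ``at any interior critical point $x_1$ with $\phi(x_1)\le\phi_0$ \dots contradicting the local-minimum property'' should be phrased as an argument about where $\min_{[0,\pi]}\phi$ is attained (if the minimum were attained at an interior point its value would be $\le\phi_0$, forcing $\phi''\ge 0$ there, against the sign given by the equation); as written, not every critical point is a minimum, though the intended argument is standard and the computation you display is the right one. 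The notational slip $N_\pm=\int_0^{\pm\pi}N$ (you clearly intend the positive quantities $\int_0^{\pi}N$ and $\int_{-\pi}^{0}N$) is harmless.
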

It is not known if the non-existence result of Theorem \ref{thm:exceptional}
can be extended to include the case $\mu\neq 0$.

Given the non-existence result of Theorem \ref{thm:exceptional}, we can can only expect
a small-TT existence theorem if $\Ta\neq \gamma_N$. We have shown that if $\gamma_N=0$,
then this is essentially the only condition needed to obtain small-TT solutions, and 
have obtained a partial result for $\gamma_N\neq 0$.
\begin{theorem}[Small-TT Results]\label{thm:small-TT}
Suppose $\abs{\Ta}>\abs{\gamma_N}$ and $\abs{\Ta}\neq 1$.  If $\mu\neq 0$ or $\eta\neq 0$, and if $\mu$ and $\eta$
are sufficiently small, then there exists at least one solution of \eqref{CTS-reduced-2}.	
\end{theorem}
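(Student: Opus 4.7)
The plan is to decouple \eqref{CTS-reduced-2} using the momentum constraint, reducing it to a scalar fixed-point problem for $\Phi_0:=\phi(0)$, and then produce a solution by an intermediate-value argument together with a suitable supersolution.

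For the decoupling, note that $\taua'=2\delta_0-2\delta_\pi$ as a distribution on $S^1$. Integrating the momentum equation shows that $(2N)^{-1}w'$ is piecewise constant on $S^1\setminus\{0,\pi\}$; the jump conditions at $0$ and $\pi$, together with periodicity of $w$, force $\phi(0)=\phi(\pi)=:\Phi_0$ and determine $(2N)^{-1}w'$ explicitly. Writing $A=\int_{-\pi}^0 N$ and $B=\int_0^\pi N$, so that $\gamma_N=(A-B)/(A+B)$, one finds $(2N)^{-1}w'=-2B\Phi_0^q/(A+B)$ on $(-\pi,0)$ and $(2N)^{-1}w'=2A\Phi_0^q/(A+B)$ on $(0,\pi)$. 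Substituting into the Hamiltonian produces the reduced Lichnerowicz-type equation
\begin{equation*}
-2\kappa q\,\phi''+\kappa\,\taua^2\,\phi^{q-1}-\Sigma(\cdot;\Phi_0)^2\,\phi^{-q-1}=0,\qquad \Sigma^2:=2\eta^2+\kappa(\mu+(2N)^{-1}w')^2,
\end{equation*}
on the Yamabe-null manifold $S^1$, with bounded, piecewise constant, nonnegative coefficients. Under $|\Ta|\neq 1$ and ($\mu\neq 0$ or $\eta\neq 0$), both $\taua$ and $\Sigma(\cdot;\Phi_0)$ are nontrivial on a set of positive measure for every $\Phi_0>0$, so the low-regularity Yamabe-null CMC Lichnerowicz theory from \cite{cb-low-reg} yields a unique positive solution $\phi(\cdot;\Phi_0)\in W^{2,\infty}(S^1)$, depending continuously on $\Phi_0$.

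A solution of \eqref{CTS-reduced-2} corresponds to $\Phi_0>0$ with $\phi(0;\Phi_0)=\phi(\pi;\Phi_0)=\Phi_0$. Multiplying the reduced equation by $\phi'$ and integrating on each smooth half, combined with $C^1$-matching of $\phi$ at $0$ and $\pi$, yields the conservation identity $f_{\Phi_0}(\phi(0;\Phi_0))=f_{\Phi_0}(\phi(\pi;\Phi_0))$, where
\begin{equation*}
f_{\Phi_0}(t):=-4\kappa\,\Ta\,t^q-4\kappa\,\Phi_0^q(\mu+\gamma_N\Phi_0^q)\,t^{-q}.
\end{equation*}
The hypothesis $|\Ta|>|\gamma_N|$ places any critical point of $f_{\Phi_0}$ strictly below $\Phi_0$, and maximum-principle estimates on $\phi(\cdot;\Phi_0)$ keep both $\phi(0;\Phi_0)$ and $\phi(\pi;\Phi_0)$ within the interval on which $f_{\Phi_0}$ is strictly monotone; the identity then upgrades to $\phi(0;\Phi_0)=\phi(\pi;\Phi_0)$, and the problem reduces to solving $F(\Phi_0)=\Phi_0$ for the continuous scalar map $F(\Phi_0):=\phi(0;\Phi_0)$.

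An intermediate-value argument on $F$ concludes the proof once I produce $0<s_-<s_+$ with $F(s_-)\geq s_-$ and $F(s_+)\leq s_+$. For $s_-$: as $\Phi_0\to 0^+$ one has $\Sigma^2\to 2\eta^2+\kappa\mu^2>0$, a positive constant, so $\phi(\cdot;\Phi_0)$ converges to the strictly positive unique solution of the corresponding constant-coefficient Lichnerowicz equation, giving $F(s)>s$ for all sufficiently small $s>0$. For $s_+$: I construct a supersolution $\phi_+$ of the reduced equation at parameter $s_+$ satisfying $\phi_+\leq s_+$ pointwise, so that the comparison principle yields $\phi(\cdot;s_+)\leq s_+$ and hence $F(s_+)\leq s_+$. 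The principal obstacle is the supersolution construction in the intermediate regime $|\gamma_N|<|\Ta|<2-|\gamma_N|$: the constant-supersolution inequality $\kappa\taua^2 s_+^{2q}\geq\Sigma(\cdot;s_+)^2$ reduces in the scale-invariant limit $\mu=\eta=0$ to the algebraic balance $|\Ta\mp 1|\geq 1\mp\gamma_N$, which fails on one of the two halves of $S^1$ outside the near-CMC range $|\Ta-\gamma_N|>2$. A non-constant, piecewise supersolution adapted to the two halves must therefore be designed; the strict gap $|\Ta|>|\gamma_N|$ provides the required margin, the exclusion $|\Ta|\neq 1$ prevents degeneration of $\taua$ on one half, and the smallness of $\mu$ and $\eta$ is used to keep the matching inequalities strict after perturbation from the scale-invariant limit.
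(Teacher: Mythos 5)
Your reduction of \eqref{CTS-reduced-2} to a scalar fixed-point problem for $\Phi_0=\phi(0)$ is correct and is essentially the paper's own reduction (Propositions \ref{prop:w-form}--\ref{prop:reduce-to-F}), and your lower endpoint $F(s_-)\ge s_-$ from the positivity of $\Sigma^2\to 2\eta^2+\kappa\mu^2$ as $\Phi_0\to 0$ is the paper's Lemma \ref{lem:F-bounds}/Lemma \ref{lem:upper-is-enough}. A minor remark: the energy-identity argument you use to force $\phi(0;\Phi_0)=\phi(\pi;\Phi_0)$ rests on an unproven assertion (that maximum-principle bounds confine both boundary values to a monotone branch of $f_{\Phi_0}$), and it is also unnecessary: the coefficients of the reduced equation are even about $x=\pm\pi/2$, so uniqueness of the Lichnerowicz solution (Proposition \ref{prop:lich-low-reg}) makes $\phi(\cdot;\Phi_0)$ even about those points and $\phi(0;\Phi_0)=\phi(\pi;\Phi_0)$ is automatic.

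The genuine gap is the upper endpoint $F(s_+)\le s_+$, which is the entire content of the theorem. You correctly observe that a constant barrier $\le s_+$ exists only in the near-CMC range $\abs{\Ta-\gamma_N}>2$, and that in the regime $\abs{\gamma_N}<\abs{\Ta}<2-\abs{\gamma_N}$ a non-constant, piecewise barrier ``must be designed'' --- but you never design it, specify $s_+$, or verify any matching inequality, so the step that distinguishes this theorem from Theorem \ref{thm:near-cmc} is only restated, not proved. Note also that no half-by-half pointwise balance can succeed: the mechanism that makes $\abs{\Ta}>\abs{\gamma_N}$ the right hypothesis is an \emph{averaging} effect over the two halves of $S^1$. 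The paper's proof does not build a barrier at all; it rescales $\psi_d=d^{-1}\phi_d$ and studies the singular limit $d\to 0$ of the $\mu=\eta=0$ problem, where the large coefficient on $\psi_d''$ forces the solution to flatten to the constant $\Psi_{N,\Ta}=\bigl[(1+\gamma_N^2)/(1+\Ta^2)\bigr]^{1/(2q)}$ obtained by integrating the algebraic terms over all of $S^1$ (Lemma \ref{lem:psi-to-const}); the hypothesis $\abs{\Ta}>\abs{\gamma_N}$ is exactly $\Psi_{N,\Ta}<1$, and continuous dependence on $(\alpha,\beta)$ (Proposition \ref{prop:lich-low-reg}, Part 4) then yields $\calF(d)<1$ at a fixed small $d$ once $\mu,\eta$ are small, i.e.\ precisely your missing $F(s_+)\le s_+$ with $s_+=d$ small. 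Unless you carry out your piecewise construction in that small-$\Phi_0$ regime (in effect reproving the averaged limit in barrier form, with the supersolution slightly concave on the bad half and compensatingly convex on the good half), the proposal is incomplete at its crux.
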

It is not known if existence holds if $\gamma_N\neq 0$ and either $\Ta=-\gamma_N$ or 
$\abs{\Ta}<\abs{\gamma_N}$.  The case $\abs{\Ta}=1$ remains open as well.

The mean curvature changes sign if and only if $\abs{\Ta}<1$.  We have the
following existence theorem that applies when $\abs{\Ta}>1$.  Note that
since $\abs{\gamma_N}<1$, the near-CMC condition $\abs{\Ta-\gamma_N}\ge 2$ is 
strictly stronger than the condition $\abs{\Ta}>1$.
\begin{theorem}[Non-vanishing Mean Curvature]\label{thm:a-gt-1}
Suppose $\abs{\Ta}>1$ and either $\mu\neq 0$ or $\eta\neq 0$.  Then there exists
at least one solution of \eqref{CTS-reduced-2}.
\end{theorem}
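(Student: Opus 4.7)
The plan is to exploit the rigid structure of the momentum constraint to reduce the coupled system to a one-parameter family of Lichnerowicz-type Dirichlet problems, and then to locate a zero of a single scalar matching function via the intermediate value theorem. Because $\taua'$ equals $2(\delta_0 - \delta_\pi)$ distributionally, the momentum equation forces $(2N)^{-1}w'$ to be piecewise constant on the two half-circles $I_- = (-\pi,0)$ and $I_+ = (0,\pi)$, with jumps $\pm 2\phi(0)^q$ and $\mp 2\phi(\pi)^q$ at $0$ and $\pi$ respectively. Periodicity of $a := \mu + (2N)^{-1}w'$ therefore forces the a priori identity $\phi(0) = \phi(\pi) =: c$, and together with the zero-mean condition $\int_{S^1} w' = 0$ this determines the piecewise constants $a_-(c) = \mu - (1-\gamma_N)c^q$ and $a_+(c) = \mu + (1+\gamma_N)c^q$ explicitly.

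Substituting back, the Hamiltonian constraint becomes, on each half-circle $I_\pm$, the ODE
\begin{equation*}
-2\kappa q\,\phi'' + \kappa(\Ta\pm 1)^2\phi^{q-1} = \bigl[2\eta^2 + \kappa a_\pm(c)^2\bigr]\phi^{-q-1},
\end{equation*}
with Dirichlet boundary condition $\phi = c$ at both endpoints of $I_\pm$. This is the Euler-Lagrange equation of a strictly convex functional on positive functions satisfying the BC, so it has a unique positive minimizer $\phi_\pm(\,\cdot\,;c)$ depending continuously on $c$. Uniqueness plus the reflection symmetry of the BVP (identical endpoint data) forces $\phi_\pm$ to be symmetric about the midpoint of $I_\pm$, so $\phi_\pm'(\pi^\mp;c) = -\phi_\pm'(0^\pm;c)$. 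The $C^1$-matching at $x=0$ and $x=\pi$ therefore collapses to the single scalar equation $H(c) := \phi_+'(0^+;c) - \phi_-'(0^-;c) = 0$.

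I would complete the proof by showing that $H \colon (0,\infty) \to \mathbb{R}$ is continuous and changes sign. For $c \to 0^+$, the hypothesis $\mu \neq 0$ or $\eta \neq 0$ keeps the algebraic equilibria $c_\pm^*(c)^{2q} := [2\eta^2 + \kappa a_\pm(c)^2]/[\kappa(\Ta\pm 1)^2]$ bounded above and below away from zero, so $c < c_\pm^*(c)$ for small $c$; the standard bracketing $c \le \phi_\pm(x;c) \le c_\pm^*(c)$ combined with the midpoint symmetry then forces $\phi_+'(0^+;c) > 0$ and $\phi_-'(0^-;c) < 0$, giving $H(c) > 0$. For $c \to \infty$, $c_\pm^*(c)/c \to \alpha_\pm := \bigl((1\pm\gamma_N)/\abs{\Ta\pm 1}\bigr)^{1/q}$, and the hypothesis $\abs{\Ta} > 1$ (strictly weaker than the near-CMC condition $\abs{\Ta-\gamma_N} > 2$) guarantees $\alpha_+ < 1$. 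A boundary-layer analysis with layer width $\delta_\pm \sim c^{-(q-2)/2}$ then yields
\begin{equation*}
H(c) \sim c^{q/2}\bigl[(\alpha_- - 1)\abs{\Ta-1}\alpha_-^{(q-2)/2} - (1-\alpha_+)\abs{\Ta+1}\alpha_+^{(q-2)/2}\bigr],
\end{equation*}
and, using the identity $\abs{\Ta\pm 1}\alpha_\pm^q = 1\pm\gamma_N$, this reduces (up to positive prefactors) to $H(c) \sim -c^{q/2}\bigl(F(\abs{\Ta+1}) + F(\abs{\Ta-1})\bigr)$ with $F(s) := s^{1/2+1/q} - s^{1/2}$; the elementary calculus inequality $F(\abs{\Ta+1}) + F(\abs{\Ta-1}) > 0$ for $\abs{\Ta} > 1$ then delivers $H(c) < 0$ for all sufficiently large $c$.

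The principal technical obstacle is the sign analysis in the large-$c$ regime for the non-near-CMC sub-range (where $\abs{\Ta-\gamma_N} \le 2$ but $\abs{\Ta} > 1$), in which both $\phi_+'(0^+;c)$ and $\phi_-'(0^-;c)$ are negative and $H$ appears as the difference of two comparable $c^{q/2}$-order quantities. Making this rigorous requires either a uniform boundary-layer expansion with controlled remainders, or a direct integral estimate for $\phi_\pm'(0^\pm;c)$ using the first integral of each half-BVP; one must also verify the monotonicity inequality for $F$ over the full $(\Ta,\gamma_N)$ parameter range. A topological-degree argument on the scalar problem, combining an a priori bound on zeros of $H$ with the already-established sign change at $c \to 0^+$, provides a more robust alternative that sidesteps the fine asymptotic computation.
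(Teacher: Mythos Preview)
Your reduction to a one-parameter scalar matching problem is equivalent to the paper's, but you organize it differently: the paper solves the \emph{periodic} Lichnerowicz equation on $S^1$ for each parameter value $d$ (obtaining the unique solution $\psi_d$) and then seeks roots of $\calF(d):=\psi_d(0)-1$, whereas you solve \emph{Dirichlet} problems on each half-circle with boundary value $c$ and seek roots of the derivative-matching function $H(c)$. Since on each half the coefficients are constant, your formulation makes the first integral directly available, which is a genuine simplification. Your small-$c$ analysis is correct. However, two concrete errors appear in the execution.

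First, the Dirichlet functional you invoke is \emph{not} strictly convex: the integrand $-\frac{A^2}{q}\phi^{-q}+\frac{B^2}{q}\phi^q$ has second derivative $-(q+1)A^2\phi^{-q-2}+(q-1)B^2\phi^{q-2}$, which is negative for small $\phi$. Uniqueness of the Dirichlet solution still holds, but it follows from the monotone structure of the Lichnerowicz nonlinearity (sub/supersolution comparison), not from convexity.

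Second, your large-$c$ asymptotic formula is incorrect. Using the exact first integral on $I_\pm$ together with the boundary-layer profile of the paper's Proposition~4.2 (after the natural rescaling $V=\psi/\alpha_\pm$ so that $V(\infty)=1$, $V(0)=1/\alpha_\pm$), one obtains
\[
\phi_\pm'(0^\pm;c)\;\sim\;\pm\,\frac{c^{q/2}}{q}\,|\Ta\pm 1|\,\bigl(\alpha_\pm^q-1\bigr),
\]
and hence, using $|\Ta\pm1|\alpha_\pm^q=1\pm\gamma_N$,
\[
H(c)\;\sim\;\frac{c^{q/2}}{q}\Bigl[(1+\gamma_N)-|\Ta+1|+(1-\gamma_N)-|\Ta-1|\Bigr]\;=\;\frac{2c^{q/2}}{q}\bigl(1-|\Ta|\bigr).
\]
The $\gamma_N$-dependence cancels exactly, and one reads off $H(c)<0$ for $|\Ta|>1$ without any auxiliary inequality on your function $F$. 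Your displayed formula with factors $(\alpha_\pm-1)\alpha_\pm^{(q-2)/2}$ does not arise from the first integral and does not reduce to an $F$-expression independent of $\gamma_N$; the subsequent ``elementary calculus inequality'' is therefore moot.

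For comparison, the paper obtains the same conclusion by computing $\lim_{d\to\infty}\calF(d)=|\Ta|^{-1/q}<1$ via the singular-perturbation result Theorem~4.1 (which gives the interface value $\bigl[(|\alpha_+|+|\alpha_-|)/(|\beta_+|+|\beta_-|)\bigr]^{1/q}$ for the periodic problem), together with the perturbation estimate of Proposition~3.17 to pass from $\mu=\eta=0$ to general $\mu,\eta$. Both routes ultimately rest on the same boundary-layer ODE; the paper's packaging into Theorem~4.1 has the advantage that the rapid convergence needed later for Theorem~1.5 (the $|\Ta|<1$ non-uniqueness) comes for free, whereas your $H$-formulation would need a separate argument for that regime.
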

We have not determined if solutions are unique in this case, nor do we have 
an extension of the ``no-go'' theorem to this regime.  

The existence theory for $\abs{\Ta}<1$ is quite different than that for the near-CMC regime.
If $\mu=0$, we can show that when solutions exist, there are usually at least two, and
that if $\mu=0$ and $\eta$ is sufficiently large, then there are no solutions.  Hence a small-TT
hypothesis is necessary if $\mu=0$.
\begin{theorem}[Nonexistence/Non-uniqueness]\label{thm:critical-eta}
Suppose $\abs{\Ta}<1$ and $\mu=0$.  There exists a critical value $\eta_0 \ge 0$ such that
if $\abs{\eta}<\eta_0$ there exist at least two solutions of \eqref{CTS-reduced-2}, 
and if $\abs{\eta}\ge \eta_0$ there are no solutions.  If in addition
$\abs{\Ta}>\abs{\gamma_N}$, then $\eta_0>0$.
\end{theorem}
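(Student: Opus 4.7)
The plan is to exploit the decoupling of \eqref{CTS-reduced-2} afforded by $\mu=0$ together with the piecewise-constant structure of $\taua$. Reading the momentum constraint distributionally gives
\[
((2N)^{-1}w')' \;=\; 2\phi(0)^q\delta_0 \;-\; 2\phi(\pi)^q\delta_\pi,
\]
which forces $\phi(0)=\phi(\pi)=:c$, makes $(2N)^{-1}w'$ piecewise constant on the two subintervals, and fixes its two values $C_1,C_2$ explicitly as multiples of $c^q$ with coefficients depending only on $N$. Substituting back into the Hamiltonian constraint produces a pair of autonomous Lichnerowicz-type ODEs on $(-\pi,0)$ and $(0,\pi)$, coupled only through the common boundary value $c$ and the continuity of $\phi'$ at the jump points. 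Since each ODE is autonomous with symmetric boundary data on its subinterval, every solution is symmetric about the subinterval's midpoint, so continuity of $\phi'$ at $x=0$ automatically gives continuity at $x=\pm\pi$. The full problem thus reduces to a single scalar matching equation $F_\eta(c) := \phi'(0^+)-\phi'(0^-) = 0$ in the one real parameter $c>0$.

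With this reduction I set $\eta_0 := \sup\{\eta\ge 0 : \text{a solution exists}\}$. The positivity $\eta_0 > 0$ under $|\Ta|>|\gamma_N|$ is immediate from Theorem \ref{thm:small-TT}, since the hypothesis $|\Ta|<1$ also gives $|\Ta|\neq 1$. A standard sub-/super-solution argument applied to each decoupled Lichnerowicz problem---using a solution at $\eta_1$ as a super-solution and a small constant as a sub-solution (in the spirit of \cite{cb-low-reg})---yields the monotonicity that a solution at $\eta_1$ implies a solution for every $|\eta|\le\eta_1$.

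For the finite bound $\eta_0<\infty$ I would use the ODE energy $\kappa q(\phi')^2 = G(\phi)-G(\phi_\ast)$, with $G$ the Lichnerowicz potential on the given subinterval and $\phi_\ast$ the midpoint value, to obtain integral formulas for $\phi'(0^\pm)$ in terms of $c$, $\eta$, and the subinterval-dependent parameters $(\Ta\pm 1)^2$ and $\kappa\beta_i^2 c^{2q}$ (where $\beta_i$ comes from the explicit form of $C_i$). Since $|\Ta|<1$ makes $(\Ta-1)^2$ and $(\Ta+1)^2$ genuinely different, a quantitative comparison of these integrals should show that once $|\eta|$ is large the matching equation $F_\eta(c)=0$ admits no $c>0$. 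For the two-solution conclusion when $|\eta|<\eta_0$, the small-TT branch from Theorem \ref{thm:small-TT} supplies one zero of $F_\eta$, and a degree or winding-number argument on the scalar reduction---together with the non-existence of zeros at $\eta=\eta_0$---should force a second zero of $F_\eta$ throughout $|\eta|<\eta_0$. The two zeros must merge as $\eta\to\eta_0^-$ with the associated $\phi$'s escaping to a singular profile (either $c\to 0,\infty$ or $\phi_\ast\to 0$), which is what prevents a solution from persisting at $\eta_0$ itself.

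The main obstacle is the non-existence step: extracting a sharp finite threshold on $\eta$ from the transcendental matching equation requires a delicate quantitative comparison of the two Lichnerowicz energies across the sign change of $\taua$, and this is where the hypothesis $|\Ta|<1$ must be used in an essential way rather than merely qualitatively. A secondary difficulty is producing the second solution without double-counting, since the small-TT construction delivers only one branch and the second must come from a degree-theoretic or bifurcation-theoretic argument adapted to the scalar matching equation $F_\eta$.
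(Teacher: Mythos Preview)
Your reduction to a one-parameter scalar equation is valid, but the paper parameterizes differently and this matters for the monotonicity step. Instead of matching $\phi'(0^\pm)$, the paper fixes $d>0$, solves the \emph{global} Lichnerowicz equation \eqref{model-lich} on $S^1$ (with the nonlocal coefficient $\phi(0)^q$ replaced by $d^q$) for its unique solution $\psi_d$, and studies $\calF(d):=\psi_d(0)$; solutions of the full system correspond to roots of $\calF(d)=1$. Your monotonicity argument has a gap: using a solution $\phi_1$ at $\eta_1$ as a supersolution of each local Dirichlet problem at $\eta_2<\eta_1$, with the same boundary value $c_1=\phi_1(0)$, does yield new solutions on each subinterval, but nothing forces those pieces to satisfy your matching condition $F_{\eta_2}(c_1)=0$---the derivative jump at $x=0$ will in general change. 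The paper's monotonicity is instead pointwise: for each fixed $d$, $\eta\mapsto\calF_{[\eta]}(d)$ is strictly increasing (Proposition~\ref{prop:raising}), because $\psi_{d}$ at the smaller $\eta$ is a strict subsolution of the equation at the larger $\eta$. This directly gives the threshold structure without any need to track how zeros of a matching function move.

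The nonexistence step is the heart of the proof and your proposal is vague exactly there. The mechanism is not an energy comparison across the two subintervals but an asymptotic analysis of $\calF$ as $d\to\infty$: when $|\Ta|<1$ one has $\lim_{d\to\infty}\calF_{[\eta]}(d)=1$, approached \emph{from above} whenever $\eta\neq 0$. Establishing this requires (i) showing $\hatF(d)\to 1$ rapidly via a singular-perturbation blowup argument (Section~\ref{sec:asympt}, Theorem~\ref{thm:lich-limit}), and (ii) a lower bound $\psi_d\ge\hatpsi{d}+cd^{-2q}$ for large $d$ (Proposition~\ref{prop:force-from-below}); the rapid convergence in (i) is what allows the $d^{-2q}$ correction in (ii) to dominate. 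Together with the explicit estimate $\calF_{[\eta]}(d)\ge c(\eta)\,d^{-1}$ near $d=0$, with $c(\eta)\to\infty$ as $\eta\to\infty$, one then finds $\eta$ large enough that $\calF_{[\eta]}>1$ on all of $(0,\infty)$. Finally, the two-solution conclusion is simpler than a degree or bifurcation argument: since $\calF_{[\eta]}>1$ both near $0$ and near $\infty$, while $\calF_{[\eta]}(d_0)<1$ at some interior $d_0$ whenever $\eta<\eta_0$, the Intermediate Value Theorem gives at least two crossings.
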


The preceding theorems omit the case $t=\pm 1$.  These values of $t$ are interesting
as they correspond to mean curvatures $\tau_t$ that are equal to zero on a large set.
The techniques for working with such mean curvatures are somewhat specialized, and
for simplicity we do not consider these values. We conjecture, however, that
Theorem \ref{thm:small-TT} can be extended to include $t=\pm1$.

The following theorem collects the results of Theorems \ref{thm:near-cmc} through \ref{thm:critical-eta}
specialized to the case $\mu=0$ and $\gamma_N=0$ where they are most complete.  
\begin{theorem}\label{thm:mu-zero} Suppose $\mu=0$ and $\gamma_N=0$.
\begin{enumerate}  
\item If $|\Ta|> 2$, there exists a solution a solution of \eqref{CTS-reduced-2}
if and only if $\eta\neq 0$. If a solution exists it is unique.
\item If $|\Ta|>1$ and $\eta\neq 0$, there exists at least one solution.
\item If $0<|\Ta|<1$, there is a critical value $\eta_0>0$.  If $0<\abs{\eta}<\eta_0$,
there are at least two solutions.  If $\abs{\eta}>\eta_0$ there are no solutions.
\item If $\Ta=0$ there exists a solution if and only if $\eta=0$, in which case
there is a one-parameter family of solutions.
\end{enumerate}
\end{theorem}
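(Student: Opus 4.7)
The plan is to observe that Theorem \ref{thm:mu-zero} is a direct corollary of Theorems \ref{thm:near-cmc} through \ref{thm:critical-eta} under the two simplifications $\mu=0$ and $\gamma_N=0$; no new analytic work is needed, only a case-by-case matching of hypotheses.

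For part (1), I note that $\gamma_N=0$ reduces the near-CMC threshold $|\Ta-\gamma_N|>2$ to the stated hypothesis $|\Ta|>2$. Theorem \ref{thm:near-cmc} then gives a solution if and only if $\eta\neq 0$ or $\mu\neq 0$, which collapses to $\eta\neq 0$, and the uniqueness clause applies because $\mu=0$. Part (2) is immediate from Theorem \ref{thm:a-gt-1}, since $|\Ta|>1$ together with $\eta\neq 0$ are exactly its hypotheses (the alternative $\mu\neq 0$ being vacuous here). For part (4), the case $\Ta=0$ coincides with the exceptional condition $\Ta=\gamma_N$, so Theorem \ref{thm:exceptional} applies: $\mu=\eta=0$ yields the one-parameter family and $\mu=0$, $\eta\neq 0$ yields nonexistence.

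Part (3) requires slightly more care. With $0<|\Ta|<1$ and $\mu=0$, Theorem \ref{thm:critical-eta} provides a threshold $\eta_0\ge 0$ with at least two solutions for $|\eta|<\eta_0$ and no solutions for $|\eta|\ge\eta_0$. To upgrade $\eta_0\ge 0$ to $\eta_0>0$ one must verify the auxiliary hypothesis $|\Ta|>|\gamma_N|$ of Theorem \ref{thm:critical-eta}, which is immediate since $\gamma_N=0$ and $\Ta\neq 0$. The restriction of the non-uniqueness range to $0<|\eta|<\eta_0$ and the weakening from $|\eta|\ge\eta_0$ to $|\eta|>\eta_0$ in the nonexistence statement are logical consequences of the stronger statement supplied by Theorem \ref{thm:critical-eta}.

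The only point that might be called an obstacle is bookkeeping: ensuring that the strict versus non-strict inequalities, the possibility $\eta_0=0$, and the four cases of the collecting statement are all consistent with the hypotheses of the theorems they invoke. Once $\gamma_N=0$ is substituted everywhere, each case reduces transparently to a previously proved statement, so there is no new mathematical content to develop beyond citing the relevant earlier result.
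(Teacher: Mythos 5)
Your proposal is correct and is exactly how the paper handles this theorem: the paper explicitly presents Theorem \ref{thm:mu-zero} as a collection of Theorems \ref{thm:near-cmc}, \ref{thm:exceptional}, \ref{thm:a-gt-1}, and \ref{thm:critical-eta} specialized to $\mu=0$, $\gamma_N=0$, with no separate argument. Your case-by-case matching (including noting that $\gamma_N=0$, $\Ta\neq 0$ supplies the hypothesis $\abs{\Ta}>\abs{\gamma_N}$ needed to get $\eta_0>0$ in part 3) reproduces that reduction faithfully.
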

Figure \ref{fig:mu-zero} illustrates Theorem \ref{thm:mu-zero}.
We have a fairly complete picture of the existence/uniqueness theory when $\mu=0$;
we are missing a non-existence result for $0<\Ta<2$ if $\eta=\mu=0$, 
a uniqueness result for $1<|\Ta|<2$, and results for $\abs{\Ta}=1$.
\begin{figure}
\begin{center}
\includegraphics{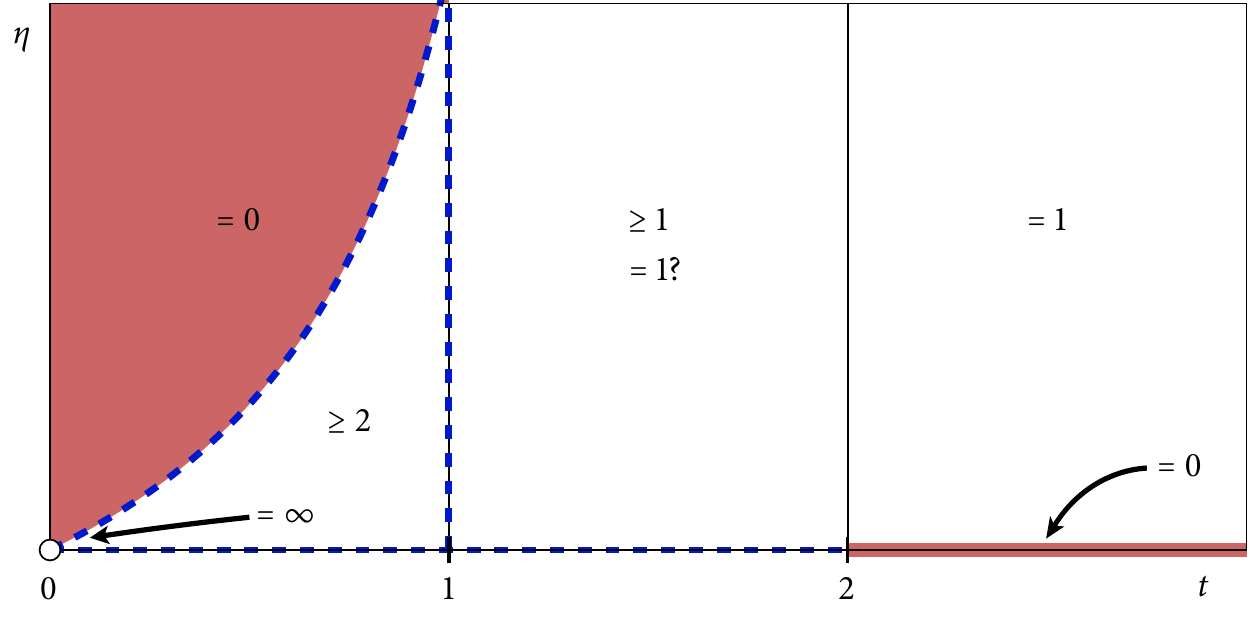}
\caption{Multiplicity of solutions for $\Ta\ge 0$ and $\eta\ge 0$ when $\mu=0$.  Dashed
lines correspond to curves where the multiplicity is unknown.  The shape of the curve separating the
existence and non-existence regions for $\Ta<1$ is conjectural.}
\label{fig:mu-zero}
\end{center}
\end{figure}

A little care is required in translating the results for the model problem to the full
conformal method. Because we are seeking solutions within a symmetry class,
the number of solutions we find is a lower bound for the total number of solutions.
Non-uniquness for the model problem implies non-uniquness for the full conformal method, but 
uniqueness only implies that there is a single solution with symmetry. Solutions without
symmetry (of which there must be more than one if there are any) may be present. Similarly,
non-existence for the model problem implies either non-existence or non-uniqueness for the 
full conformal method.

\subsection{Reduction to root finding}
In this section we show how for the specific choice of mean curvatures $\taua$
in equation \eqref{beta-form}, the existence theory of system \eqref{CTS-reduced-2}
can be reduced to the question of finding roots of a certain real valued function.

We first show that the solution of the momentum constraint can be determined exactly,
up to knowledge of the value of $\phi(0)$.
\begin{proposition}\label{prop:w-form}
Suppose $(\phi,w)\in W^{2,\infty}_+(S^1)\times W^{1,\infty}(S^1)$ is a solution 
of \eqref{CTS-reduced-2}.  Let
\be
\gamma_N=-\frac{\int_{S^1} \Tbeta N}{\int_{S^1} N}.
\ee
Then
\bel{w-solved}
\frac{1}{2N}w' = \phi(0)^q \left[\Tbeta+\gamma_N\right].
\ee
\end{proposition}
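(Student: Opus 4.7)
The plan is to solve the momentum constraint in \eqref{CTS-reduced-2} explicitly by exploiting the fact that $\taua' = \Tbeta'$ has extremely simple distributional structure. Setting $v = (2N)^{-1}w'$, note that $v \in L^{\infty}(S^1)$ since $w \in W^{1,\infty}(S^1)$ and $N$ is smooth and positive. Since $\Tbeta$ has a jump of $+2$ at $x=0$ and $-2$ at $x=\pi$ on $S^1$, its distributional derivative is $\Tbeta' = 2\delta_0 - 2\delta_\pi$. Because $\phi \in W^{2,\infty}(S^1)$ is continuous, the momentum constraint becomes
\begin{equation*}
v' = \phi^q \Tbeta' = 2\phi(0)^q\delta_0 - 2\phi(\pi)^q\delta_\pi.
\end{equation*}

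The next step is to observe that since $v$ is an $L^\infty$ function on $S^1$, its distributional derivative must have total mass zero. This forces the compatibility condition $\phi(0)=\phi(\pi)$, which is obtained for free from the assumption that a solution exists. Away from $\{0,\pi\}$ the equation reads $v'=0$, so $v$ is piecewise constant on the two arcs $(-\pi,0)$ and $(0,\pi)$, with jump $2\phi(0)^q$ at $x=0$ (and the matching $-2\phi(0)^q$ jump at $x=\pi$). The most general piecewise constant function on $S^1$ with these jumps can be written as
\begin{equation*}
v = \phi(0)^q\,\Tbeta + c
\end{equation*}
for some real constant $c$, which is the only free parameter left after integrating the momentum constraint.

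To pin down $c$ I would invoke the one remaining compatibility condition that has not yet been used, namely periodicity of $w$ itself: $\int_{S^1} w' = 0$. Since $w' = 2Nv$, this reads
\begin{equation*}
0 \;=\; \int_{S^1} 2N\bigl(\phi(0)^q\,\Tbeta + c\bigr),
\end{equation*}
which gives $c = -\phi(0)^q \int_{S^1} N\Tbeta\,/\,\int_{S^1} N = \phi(0)^q\,\gamma_N$ by the definition \eqref{gammaN}. Substituting back yields $v = \phi(0)^q(\Tbeta+\gamma_N)$, which is \eqref{w-solved}.

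I do not anticipate any serious obstacle; the only subtlety is handling the momentum constraint distributionally, which is unproblematic because the discontinuities of $\taua$ produce only Dirac masses paired against the continuous function $\phi^q$, reducing everything to explicit jump conditions. The essential content is that for this family of data the momentum equation can be integrated by hand, and $\phi(0)$ is the only quantity from the Hamiltonian constraint that couples into the solution for $w$.
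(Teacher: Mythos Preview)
Your proposal is correct and follows essentially the same approach as the paper: compute $\taua'$ as a sum of Dirac masses, use the momentum constraint distributionally to deduce $\phi(0)=\phi(\pi)$ from the vanishing of the total mass of $v'$, integrate to get $v=\phi(0)^q\Tbeta + c$, and determine $c$ from $\int_{S^1} w'=0$. The paper phrases the ``total mass zero'' step as $\langle ((2N)^{-1}w')',1\rangle=0$, but this is exactly your observation.
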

\begin{proof}
Notice that $\taua'=2\left[\delta_0-\delta_{\pi}\right]$ where $\delta_x$ denotes the Dirac delta distribution with singularity at $x$.  If $(\phi,w)$ is a solution of \eqref{CTS-reduced-2}, then
\bel{rough-momentum}
((2N)^{-1}w')' = 2\phi^q\,\left[\delta_0-\delta_{\pi}\right] = 2\phi(0)^q \delta_0-2\phi(\pi)^q\delta_\pi.
\ee
Since $\ip<((2N)^{-1}w')',1>=0$ (where $\ip<\cdot,\cdot>$ denotes the pairing of 
distributions on test functions) we have 
\be
0=\ip<\phi^q(\delta_0-\delta_\pi),1> = \phi(0)^q-\phi(\pi)^q.
\ee
Hence $\phi(0)=\phi(\pi)$. 

The momentum constraint then reads
\be
\left(\frac{1}{2N}w'\right)'= \phi(0)^q \Tbeta'.
\ee
Hence 
\bel{wisbeta-cts}
\frac{1}{2N}w' = \phi(0)^q \left[\Tbeta+C\right]
\ee
for some constant $C$.  Since $\int_{S^1} w'=0$ the value of $C$ is determined by
\be
\int_{S^1} 2N\,[\Tbeta+C] = 0.
\ee
This occurs precisely when $C=\gamma_N$.
\end{proof}

Substituting equation \eqref{w-solved} into the Hamiltonian constraint of system \eqref{CTS-reduced-2}
we obtain a nonlocal equation for $\phi$.
\begin{proposition}\label{prop:nonlocal}
Suppose $(\phi,w)\in W^{2,\infty}_+(S^1)\times W^{1,\infty}(S^1)$ solves \eqref{CTS-reduced-2}.  Then $\phi$ satisfies 
\begin{equation}\label{heaviside-const}
-2\kappa q\;\phi'' -2\eta^2\phi^{-q-1} -\kappa[\mu+\phi(0)^{q}(\gamma_N+\Tbeta)]^2\phi^{-q-1} +
\kappa(\Ta+\Tbeta)^2\phi^{q-1}=0.
\end{equation}

Conversely, suppose $\phi\in W^{2,\infty}_+(S^1)$ is a solution of \eqref{heaviside-const}.  
Then there exists a solution $w\in W^{1,\infty}(S^1)$ (uniquely determined up to a constant)
of \eqref{w-solved} and $(\phi,w)$ is a solution of \eqref{CTS-reduced-2}. 
\end{proposition}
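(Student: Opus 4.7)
The plan is to establish the two implications by substituting the explicit form of $(2N)^{-1}w'$ supplied by Proposition \ref{prop:w-form} into the Hamiltonian constraint of \eqref{CTS-reduced-2}. For the forward implication, if $(\phi,w)$ solves \eqref{CTS-reduced-2} then Proposition \ref{prop:w-form} gives $(2N)^{-1}w'=\phi(0)^{q}[\Tbeta+\gamma_N]$, and plugging this into \eqref{CTS-reduced-2} replaces the factor $\mu+(2N)^{-1}w'$ by $\mu+\phi(0)^{q}[\gamma_N+\Tbeta]$, producing \eqref{heaviside-const} term-by-term.

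For the converse, suppose $\phi\in W^{2,\infty}_+(S^1)$ solves \eqref{heaviside-const}. I would first construct $w$ by solving the linear ODE $w'=2N\phi(0)^{q}[\Tbeta+\gamma_N]$ on $S^1$. Since the right-hand side is bounded, any antiderivative lies in $W^{1,\infty}$; an antiderivative descends to a periodic function on $S^1$ (and is then unique up to an additive constant) exactly when the right-hand side has zero integral over $S^1$, which is the defining identity \eqref{gammaN} for $\gamma_N$. With $w$ so defined, the Hamiltonian constraint of \eqref{CTS-reduced-2} follows by reversing the substitution used in the forward direction: plugging $(2N)^{-1}w'=\phi(0)^{q}[\gamma_N+\Tbeta]$ back into that constraint recovers \eqref{heaviside-const}, which holds by hypothesis.

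The principal obstacle is the momentum constraint. Distributional differentiation of \eqref{w-solved} gives $((2N)^{-1}w')'=\phi(0)^{q}\Tbeta'=2\phi(0)^{q}(\delta_0-\delta_\pi)$, whereas $\phi^{q}\taua'$, evaluated using the continuity of $\phi$ to multiply by Dirac masses, equals $2\phi(0)^{q}\delta_0-2\phi(\pi)^{q}\delta_\pi$. Equality of these two distributions is therefore equivalent to the scalar identity $\phi(0)=\phi(\pi)$, which must be extracted from the nonlocal equation itself. My approach is a conservation-law argument: multiply \eqref{heaviside-const} by $\phi'$ and integrate separately over $(-\pi,0)$ and $(0,\pi)$, on each of which $\Tbeta$ is constant and the equation reduces to an autonomous ODE admitting a conserved energy of the form $-\kappa q(\phi')^{2}+\alpha_{\pm}\phi^{-q}+\beta_{\pm}\phi^{q}$. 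Adding the two resulting identities and invoking the $S^1$-periodicity of $\phi$ and $\phi'$ produces an algebraic relation in $\phi(0)^{q}$ and $\phi(\pi)^{q}$ that factors as $(\phi(0)^{q}-\phi(\pi)^{q})$ times a residual expression; the hard part is then to exclude the spurious case in which this residual expression vanishes while $\phi(0)\neq\phi(\pi)$, which requires the specific algebraic structure of the cross-term $[\mu+\phi(0)^{q}(\gamma_N+\Tbeta)]^{2}$ together with the positivity of $\phi$.
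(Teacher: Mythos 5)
Your substitution steps coincide with the paper's proof: the forward direction is exactly Proposition \ref{prop:w-form} plugged into the Hamiltonian constraint, and in the converse the paper also builds $w$ from \eqref{w-solved}, using the zero-mean property built into the definition of $\gamma_N$ for integrability. You are also right that verifying the momentum constraint as a distributional equation on $S^1$ requires the coefficients of $\delta_\pi$ to match, i.e. $\phi(0)=\phi(\pi)$ — a point the paper's proof passes over with ``by construction.'' The gap is that your proposed derivation of this identity does not close. Carrying out your conservation-law computation: on $I_\pm$ the first integral is $-\kappa q(\phi')^2+\tfrac1q A_\pm\phi^{-q}+\tfrac1q B_\pm\phi^{q}$ with $A_\pm=2\eta^2+\kappa\bigl(\mu+\phi(0)^q(\gamma_N\pm1)\bigr)^2$ and $B_\pm=\kappa(\Ta\pm1)^2$; matching $\phi$ and $\phi'$ at $x=0$ and $x=\pi$ and adding the two relations gives, with $a=\phi(0)^q$ and $b=\phi(\pi)^q$,
\[
(a-b)\,\bigl(\Ta\, b-\mu-\gamma_N\, a\bigr)=0 .
\]
The residual factor cannot be excluded by ``the algebraic structure of the cross-term together with positivity'': for instance with $\mu=0$, constant lapse (so $\gamma_N=0$) and $\Ta=0$ it vanishes identically and the identity is vacuous, and in general $\Ta b=\mu+\gamma_N a$ with $a\neq b$ is not algebraically contradictory. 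So the crucial step $\phi(0)=\phi(\pi)$ is left unproven, and the route you sketch cannot finish it without bringing in information beyond this one identity.

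The identity you need follows from a much softer argument, which is what the paper implicitly relies on. A solution $\phi$ of \eqref{heaviside-const} solves the local equation \eqref{lich-one-param} with $d=\phi(0)$; its coefficients depend on $x$ only through $\Tbeta$, which is invariant under the reflection $x\mapsto\pi-x$ of $S^1$, and they satisfy the hypotheses of Proposition \ref{prop:lich-low-reg} (the pair $(\Ta+1)^2,(\Ta-1)^2$ cannot both vanish, and $\alpha\equiv0$ would force $d^q=0$). Hence the reflected function $\phi(\pi-\cdot)$ is another positive $W^{2,\infty}$ solution of the same equation, and uniqueness forces $\phi$ to be even about $x=\pi/2$ — the same observation the paper makes for $u_\epsilon$ in Section \ref{sec:asympt} — so in particular $\phi(0)=\phi(\pi)$, and your $w$ then satisfies $((2N)^{-1}w')'=\phi^q\taua'$ as distributions. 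With this replacing your conservation-law step, the converse is complete and the rest of your proposal agrees with the paper's argument.
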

\begin{proof}
If $(\phi,w)$ is a solution of \eqref{CTS-reduced-2} then 
Proposition \ref{prop:w-form} implies $w$ solves \eqref{w-solved}.  Substituting
this solution into the Lichnerowicz equation, we obtain equation \eqref{heaviside-const}.

Conversely, suppose $\phi$ solves \eqref{heaviside-const}.  
By the choice of $\gamma_N$, equation
\eqref{w-solved} is integrable and the solution $w\in W^{1,\infty}(S^1)$ is determined up to a constant.  
Let $w$ be such a solution.  By construction, $w$ solves the momentum constraint for $\phi$, and $\phi$
solves the Hamiltonian constraint for $w$.  That is, $(\phi,w)$ is a solution of \eqref{CTS-reduced-2}.
\end{proof}

To study the nonlocal equation \eqref{heaviside-const} we introduce a
family of Lichnerowicz equations depending on a positive parameter $d$:
\bel{lich-one-param}
-2\kappa q\;\phi_d'' - 2\eta^2\phi_d^{-q-1}- \kappa[\mu+ (\gamma_N+\Tbeta)d^q ]^2 \phi_d^{-q-1}
+\kappa(\Ta+\Tbeta)^2\phi_d^{q-1} = 0.
\ee
Clearly the solutions of \eqref{heaviside-const} are in one-to-one correspondence with
the solutions $\phi_d$ of \eqref{lich-one-param} satisfying $\phi_d(0)=d$.
The functions $\phi_d$ tend to grow as $d$ increases, and it will be more convenient to 
work with a rescaled function that is bounded as $d\ra\infty$.  The following
result follows easily from Proposition \ref{prop:nonlocal} after defining $\psi_d=d^{-1} \phi_d$.  We omit the proof.

\begin{proposition}\label{prop:psi_d}
The solutions of \eqref{CTS-reduced-2} are in one-to-one correspondence with the
functions $\psi_d\in W^{2,p}_+(S^1)$ satisfying
\bel{model-lich}
-2 \kappa q \dexp \psi_d'' - 2\eta^2 d^{-2q} \psi_d^{-q-1}- \kappa(\mu d^{-q}+\gamma_N+\Tbeta)^2 \psi_d^{-q-1}
+\kappa(\Ta+\Tbeta)^2\psi_d^{q-1} = 0
\ee
and
\be
\psi_d(0) = 1
\ee for some $d>0$.  Given a solution $\psi_d$ solving \eqref{model-lich}
and satisfying $\psi_d(0)=1$, the corresponding solution $\phi$ of \eqref{heaviside-const}
is $d\psi_d$.
\end{proposition}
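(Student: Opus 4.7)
The plan is to chain together Proposition \ref{prop:nonlocal} with a rescaling argument. By Proposition \ref{prop:nonlocal}, solutions of \eqref{CTS-reduced-2} are in bijection with positive solutions $\phi \in W^{2,\infty}_+(S^1)$ of the nonlocal equation \eqref{heaviside-const}. The auxiliary family \eqref{lich-one-param} is exactly \eqref{heaviside-const} with the value $\phi(0)$ replaced by a free parameter $d$. So a positive solution of \eqref{heaviside-const} is nothing but a solution $\phi_d$ of \eqref{lich-one-param} that satisfies the consistency condition $\phi_d(0) = d$. This reduces the problem to the one-parameter family \eqref{lich-one-param} together with the fixed-point condition on the initial value.

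Next I would perform the rescaling $\psi_d = d^{-1}\phi_d$ and verify that the equation \eqref{model-lich} is precisely what \eqref{lich-one-param} becomes after this substitution. Explicitly, substituting $\phi_d = d\psi_d$ into \eqref{lich-one-param} gives
\begin{equation*}
-2\kappa q\, d\, \psi_d'' -2\eta^2 d^{-q-1}\psi_d^{-q-1} -\kappa\bigl[\mu+(\gamma_N+\Tbeta)d^q\bigr]^2 d^{-q-1}\psi_d^{-q-1} + \kappa(\Ta+\Tbeta)^2 d^{q-1}\psi_d^{q-1}=0.
\end{equation*}
Dividing by $d^{q-1}$ and using the identity $2-q = -2q/n$ (which follows from $q = 2n/(n-2)$) yields exactly \eqref{model-lich}; the middle coefficient simplifies to $(\mu d^{-q}+\gamma_N+\Tbeta)^2$ after pulling $d^{-2q}$ inside the square. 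The normalization $\psi_d(0)=1$ is equivalent to $\phi_d(0)=d$, which is the consistency condition from the previous paragraph. So the correspondence $\phi \leftrightarrow (d,\psi_d)$ with $d=\phi(0)$ and $\psi_d = \phi/d$ is a bijection between positive solutions of \eqref{heaviside-const} and pairs $(d,\psi_d)$ with $d>0$ and $\psi_d$ a positive $W^{2,p}$ solution of \eqref{model-lich} normalized by $\psi_d(0)=1$.

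There is essentially no hard step: the content is a routine substitution and an invocation of Propositions \ref{prop:w-form} and \ref{prop:nonlocal}. The only thing to watch is the bookkeeping of the exponents of $d$, in particular the identity $d^{2-q}=d^{-2q/n}$ (i.e.\ the \texttt{\textbackslash dexp} coefficient), and the fact that regularity is preserved under multiplication by the positive constant $d$, so $\phi_d \in W^{2,p}_+(S^1)$ if and only if $\psi_d \in W^{2,p}_+(S^1)$. Once these are checked, the proposition follows immediately, which is why the author elects to omit the proof.
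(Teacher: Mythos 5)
Your proposal is correct and follows exactly the route the paper indicates when it omits the proof: invoke Proposition \ref{prop:nonlocal} to reduce to the nonlocal equation \eqref{heaviside-const}, identify its solutions with solutions $\phi_d$ of \eqref{lich-one-param} satisfying $\phi_d(0)=d$, and rescale via $\psi_d=d^{-1}\phi_d$. The exponent bookkeeping, including the identity $d^{2-q}=d^{-2q/n}$, is checked correctly, so nothing is missing.
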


Equation \eqref{model-lich} can be written as a Lichnerowicz equation of the form
\bel{lich-generic}
-u'' -\alpha^2 u^{-q-1} + \beta^2 u^{q-1} = 0
\ee
where $\alpha\not\equiv 0$ and $\beta\not\equiv 0$.  We have the following facts for this equation,
which are proved in Appendix \ref{sec:lich}.
\begin{proposition}\label{prop:lich-low-reg} 
Suppose $\alpha$ and $\beta$ in equation \eqref{lich-generic}
belong to $L^\infty(S^1)$ and that $\alpha\not\equiv 0$ and $\beta\not\equiv 0$. 
Let $p>1$.
\begin{enumerate}
\item There exists a unique solution $u \in W^{2,p}_+(S^1)$, and moreover $u\in W^{2,\infty}(S^1)$.
\item If $w\in W^{2,\infty}_+(S^1)$ is a subsolution of \eqref{lich-generic}, (i.e. $-w''-\alpha^2w^{-q-1}+\beta^2w^{q-1}\le 0$) then $w\le u$.
\item If $v\in W^{2,\infty}_+(S^1)$ is a supersolution of \eqref{lich-generic}, (i.e. $-v''-\alpha^2v^{-q-1}+\beta^2v^{q-1}\ge 0$) then $v\ge u$.
\item The solution $u\in W^{2,p}_+$ depends continuously on $(\alpha,\beta)\in L^\infty\times L^\infty$.
\end{enumerate}
\end{proposition}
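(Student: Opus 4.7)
The plan is to establish the four parts in the order stated, combining an approximation argument for existence with a strong maximum principle for the comparison and uniqueness statements.

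For existence and regularity in part (1), I would regularize by setting $\alpha_\epsilon^2 = \alpha^2 + \epsilon$ and $\beta_\epsilon^2 = \beta^2 + \epsilon$ for small $\epsilon>0$. Since $\alpha_\epsilon^2$ and $\beta_\epsilon^2$ are bounded below by $\epsilon>0$, constant sub- and supersolutions of the regularized equation are immediate, and standard monotone iteration produces a positive solution $u_\epsilon \in W^{2,p}_+(S^1)$. To pass to the limit $\epsilon\to 0^+$ one needs uniform $L^\infty$ bounds on $u_\epsilon$ and on $1/u_\epsilon$. Integrating the equation over $S^1$ yields the identity
\begin{equation*}
\int_{S^1} \alpha_\epsilon^2\, u_\epsilon^{-q-1}\, dx = \int_{S^1} \beta_\epsilon^2\, u_\epsilon^{q-1}\, dx,
\end{equation*}
and combining this with the one-dimensional Sobolev embedding $W^{1,2}(S^1)\hookrightarrow C^0(S^1)$ and the nondegeneracy hypotheses $\alpha\not\equiv 0$, $\beta\not\equiv 0$ produces uniform pointwise bounds on $u_\epsilon$ and $1/u_\epsilon$. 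Once such bounds are in hand, the equation gives a uniform $W^{2,\infty}$ bound, and a convergent subsequence yields a positive solution $u \in W^{2,\infty}_+(S^1)$.

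The comparison assertions in parts (2) and (3) and the uniqueness claim in part (1) all follow from a single maximum-principle argument. Given a subsolution $w$ and the solution $u$, subtracting the relevant inequalities and applying the mean value theorem to $t\mapsto t^{q-1}$ and $t\mapsto -t^{-q-1}$ gives
\begin{equation*}
-(w-u)'' + H\,(w-u) \le 0, \qquad H = \beta^2(q-1)\xi^{q-2} + \alpha^2(q+1)\zeta^{-q-2},
\end{equation*}
for some $\xi,\zeta$ lying between $u$ and $w$. Because $u$ and $w$ are bounded above and below away from zero, $H\in L^\infty$; because $\alpha\not\equiv 0$ and $\beta\not\equiv 0$, we have $H\ge 0$ and $H\not\equiv 0$. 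The strong maximum principle on $S^1$ for $-\partial_x^2 + H$ then precludes a positive maximum of $w-u$, forcing $w\le u$. The supersolution case is symmetric, and uniqueness in part (1) follows by applying both comparisons to a pair of solutions.

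Part (4) is a compactness argument along the same lines. For $(\alpha_n,\beta_n)\to(\alpha,\beta)$ in $L^\infty\times L^\infty$, the associated solutions $u_n$ enjoy uniform $L^\infty$ bounds (above and away from zero) by the same reasoning as in the first paragraph applied uniformly in $n$, and hence uniform $W^{2,\infty}$ bounds. A subsequence converges in $W^{2,p}$ to a solution of the limiting equation, which by uniqueness equals $u$, so the full sequence converges. The main obstacle throughout is precisely the uniform $L^\infty$ control on the approximate and perturbed solutions: since $\alpha$ and $\beta$ are merely in $L^\infty$ and may vanish on large subsets of $S^1$, constant-barrier arguments do not suffice, and the bounds must instead be extracted from the integral identity combined with one-dimensional Sobolev compactness, making essential use of the nondegeneracy hypothesis.
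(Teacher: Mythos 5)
Your comparison and uniqueness arguments (parts (2)--(4) modulo bounds, and the $v^+$/maximum-principle step) are fine, and they are in fact more self-contained than the paper's route, which lifts the equation to $S^n$ with $x^n$-dependent coefficients, quotes the low-regularity Lichnerowicz theory of \cite{cb-low-reg} (existence, uniqueness, sub/supersolution trapping, invertibility of the linearization), recovers the $S^1$ statement by a symmetry-from-uniqueness argument, and gets part (4) from the Implicit Function Theorem rather than compactness. The genuine gap is in your existence step, precisely at the point you yourself call the main obstacle: the uniform two-sided bounds on $u_\epsilon$ and $1/u_\epsilon$ are asserted, and the ingredients you list do not deliver them as stated. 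The identity $\int_{S^1}\alpha_\epsilon^2 u_\epsilon^{-q-1}=\int_{S^1}\beta_\epsilon^2 u_\epsilon^{q-1}$, combined with the positive-measure sets where $\alpha^2$ and $\beta^2$ are bounded below, only yields bounds in the wrong direction ($\min u_\epsilon\le C$ and $\max u_\epsilon\ge c>0$); converting these into $\max u_\epsilon\le C$ and $\min u_\epsilon\ge c$ requires oscillation control, and the natural energy estimate obtained by multiplying the equation by $u_\epsilon$ gives $\int (u_\epsilon')^2+\int\beta_\epsilon^2 u_\epsilon^{q}=\int\alpha_\epsilon^2 u_\epsilon^{-q}$, whose right-hand side can only be controlled if one already has the lower bound on $u_\epsilon$ --- an apparent circularity. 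It can be broken, but only with further multiplier identities you never mention (e.g.\ testing with $u_\epsilon^{1-q}$ to get $\int\alpha_\epsilon^2 u_\epsilon^{-2q}\le\int\beta_\epsilon^2$, then H\"older against the measure $\alpha_\epsilon^2\,dx$ to bound $\int(u_\epsilon')^2$, and, for the lower bound, either a Harnack-type inequality for $-u''+\beta_\epsilon^2u^{q-2}u\ge 0$ or a negative-power multiplier giving a $W^{1,2}$ bound on $u_\epsilon^{-r}$ together with $\max u_\epsilon\ge c$). As written, ``the integral identity plus $W^{1,2}(S^1)\hookrightarrow C^0$ plus nondegeneracy'' is not a proof of the key estimate, and the same missing estimate is what your compactness argument for part (4) relies on (there one additionally needs the nondegeneracy to be uniform along the sequence $(\alpha_n,\beta_n)$, which does hold for $L^\infty$ convergence but should be said).

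If you want to keep your direct one-dimensional approach rather than citing \cite{cb-low-reg} as the paper does, the fix is to spell out the a priori estimates along the lines above; otherwise the cleanest repair is the paper's: view \eqref{lich-generic} on $(S^n,g)$ with coefficients depending only on $x^n$, invoke the compact-manifold results of \cite{cb-low-reg} for existence, uniqueness, the sub/supersolution comparison (via trapping between the subsolution and $Mu$), and the isomorphism $-\Lap+V$ needed for continuous dependence, and then use uniqueness to conclude the solution is a function of $x^n$ alone.
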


We can now define the real valued function $\calF$ that will be the focus of our study.
\begin{definition}
Let $\Ta$ be a constant and let $\taua$ be defined by equations
\eqref{tau-form} and \eqref{beta-form}.  Let $N$ be a smooth lapse function
and let $\gamma_N$ be defined by equation \eqref{gammaN}.  Finally,
let $\eta$ and $\mu$ be constants.  For $d>0$, Proposition \ref{prop:lich-low-reg}
Part 1 implies that there exists a corresponding solution $\psi_d\in W^{2,\infty}_+(S^1)$ of
equation \eqref{model-lich}.  We define $\calF:\Reals_{>0}\ra \Reals_{>0}$ by
\bel{F-def}
\calF(d) = \psi_d(0).
\ee
We define $\hatF$ to be the analogous function corresponding to the same mean curvature but
vanishing transverse-traceless tensor (i.e. for $\mu=\eta=0$).
\end{definition}

From Proposition \ref{prop:psi_d} it is clear that the existence theory of the CTS method
for this family of data reduces to the study of the (algebraic) solutions of $\calF(d)=1$.
\begin{proposition}\label{prop:reduce-to-F}
The solutions $(\phi,w)\in W^{2,\infty}_+(S^1)\times W^{1,\infty}(S^1)$ of system \eqref{CTS-reduced-2} are in one to one correspondence with the positive solutions of $\calF(d)=1$.
\end{proposition}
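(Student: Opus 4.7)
The plan is to assemble the correspondence by concatenating the reductions already established in Propositions \ref{prop:w-form}, \ref{prop:nonlocal}, \ref{prop:psi_d}, and \ref{prop:lich-low-reg}. The function $\calF$ is tailored exactly so that $\calF(d)=1$ encodes the self-consistency condition $\psi_d(0)=1$ that was identified in Proposition \ref{prop:psi_d} as equivalent to solvability of \eqref{CTS-reduced-2}, so there is little left to do beyond bookkeeping.

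For the forward direction, I start with a solution $(\phi,w)\in W^{2,\infty}_+(S^1)\times W^{1,\infty}(S^1)$ of \eqref{CTS-reduced-2} and set $d=\phi(0)>0$. Proposition \ref{prop:nonlocal} guarantees that $\phi$ satisfies the nonlocal equation \eqref{heaviside-const}, and Proposition \ref{prop:psi_d} then says $\psi:=\phi/d$ lies in $W^{2,\infty}_+(S^1)$, solves \eqref{model-lich} for this $d$, and satisfies $\psi(0)=1$. Since Proposition \ref{prop:lich-low-reg} part (1) ensures that \eqref{model-lich} has at most one positive solution, this $\psi$ must coincide with the $\psi_d$ used to define $\calF$, so $\calF(d)=\psi_d(0)=1$.

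For the reverse direction, suppose $d>0$ satisfies $\calF(d)=1$. By the definition of $\calF$ the associated $\psi_d$ solves \eqref{model-lich} with $\psi_d(0)=1$, so the function $\phi:=d\psi_d\in W^{2,\infty}_+(S^1)$ solves the nonlocal equation \eqref{heaviside-const} by Proposition \ref{prop:psi_d}. Applying the converse half of Proposition \ref{prop:nonlocal} produces a $w\in W^{1,\infty}(S^1)$, unique up to an additive constant, such that $(\phi,w)$ solves \eqref{CTS-reduced-2}.

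Finally I verify that this map between the two solution sets (with the convention that $w$ and $w+c$ represent the same solution) is a bijection. The forward map sends $(\phi,w)$ to $d=\phi(0)$; the reverse map sends $d$ to $(d\psi_d,w)$ where $w$ is obtained from the momentum constraint. Applying the reverse map to $d=\phi(0)$ reproduces the original pair because $\psi_d$ is uniquely determined by Proposition \ref{prop:lich-low-reg} part (1) and hence must equal $\phi/d$, while $w$ is uniquely determined up to a constant by Proposition \ref{prop:w-form}. Applying the forward map to any $d$ with $\calF(d)=1$ returns $(d\psi_d)(0)=d\cdot 1=d$, closing the loop. There is no real obstacle here; the only point demanding care is tracking the uniqueness up to a constant in $w$, which is exactly the equivalence relation under which the proposition is phrased.
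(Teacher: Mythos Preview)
Your argument is correct and follows exactly the route the paper intends: the paper states this proposition without proof, treating it as an immediate consequence of Proposition~\ref{prop:psi_d} and the uniqueness in Proposition~\ref{prop:lich-low-reg}, and your write-up simply makes that bookkeeping explicit.
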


\subsection{Solutions of $\calF(d)=1$}\label{sec:solutions}
\begin{figure}
\begin{center}
\includegraphics{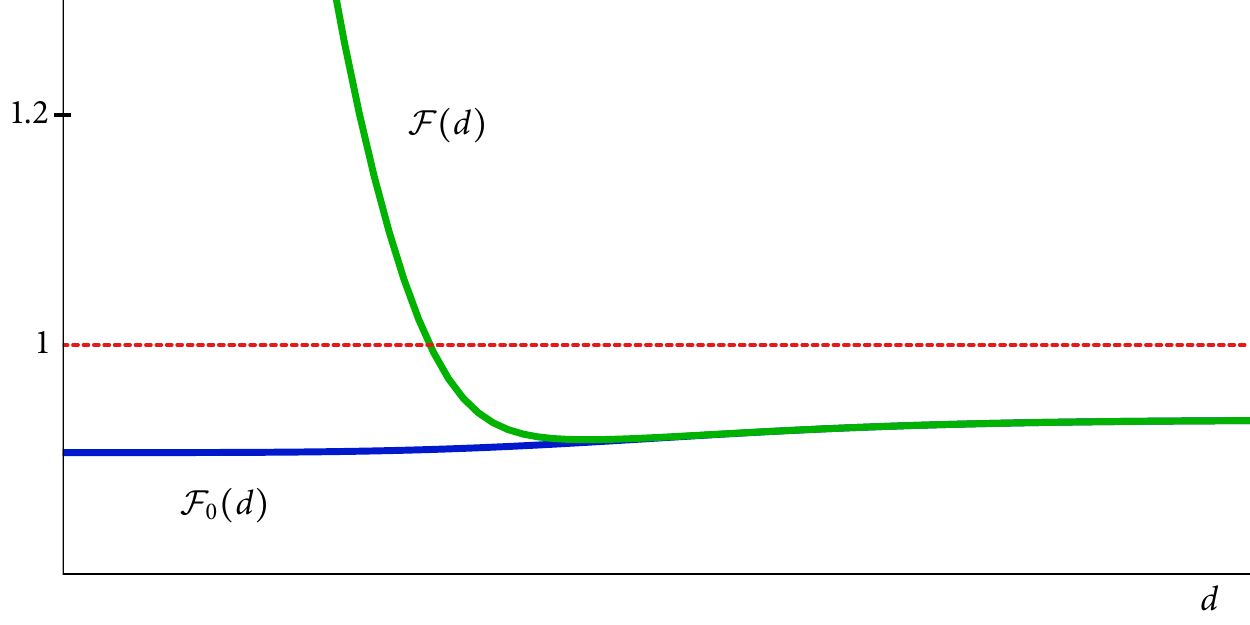}
\caption{Functions $\calF$ and $\calF_0$ for $\Ta=3/2$, $\mu=0$, and $\eta=3$.}
\label{fig:F}
\end{center}
\end{figure}
Theorems \ref{thm:near-cmc} through Theorem \ref{thm:critical-eta} follow from
Proposition \ref{prop:reduce-to-F} and  facts about $\calF$ and $\calF_0$ proved
in this section. 
Figure \ref{fig:F} shows representative 
graphs of $\calF$ and $\calF_0$ obtained by numerical 
computation for certain values of $\Ta$, $\eta$ and $\mu$.  Key features are the 
singular behaviour of $\calF$ at $d=0$, the limit of $\calF_0$
at $d=0$, and the common limits of $\calF$ and $\calF_0$ at $\infty$.  We note that
for the illustrated choice of $\Ta$, $\eta$ and $\mu$ it appears there is
exactly one solution of $\calF(d)=1$ and none of $\hatF(d)=1$.

\subsubsection{Elementary Estimates for $\calF$}
In this section we establish:
\begin{enumerate}
	\item If $\mu \neq 0$ or $\eta\neq 0$ (i.e. if the transverse-traceless tensor is not 
	identically zero) then
	$\calF(d)$ is $O(d^{-1})$ for $d$ sufficiently small.
	\item If $\mu=\eta=0$ then $\calF$ is uniformly bounded on $(0,\infty)$.
	\item For all values of $\mu$ and $\eta$, $\calF(d)$ is bounded above 
	      for values of $d$ sufficiently large.
	\item If $\mu\neq 0$ or $\eta\neq 0$, then a solution of $\calF(d)=1$ exists if and only if $\calF(d)\le1$ for some $d>0$.
\end{enumerate}

These facts are all demonstrated by examining constant sub- and supersolutions.
\begin{lemma}\label{lem:subsuper}
Suppose $\abs{\Ta}\neq 1$.  We define the constants
\be
\begin{aligned}
m_d &= \min(M_{d,+},M_{d,-})\\
M_d &= \max(M_{d,+},M_{d,-})
\end{aligned}
\ee
where 
\be
M_{d,\pm} = \left[ \frac{2\eta^2d^{-2q}+\kappa(\mu d^{-q}+\gamma_N\pm1)^2}{\kappa(\Ta\pm 1)^2}\right]^{\frac{1}{2q}}.
\ee
Then $m_d\le \psi_d\le M_d$
for all $d>0$ and in particular  
\be
m_d\le \calF(d)\le M_d.
\ee
\end{lemma}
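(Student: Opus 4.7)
The plan is to exhibit positive \emph{constant} sub- and supersolutions of the Lichnerowicz-type equation \eqref{model-lich} and then invoke parts 2 and 3 of Proposition \ref{prop:lich-low-reg}. First, I would rewrite \eqref{model-lich} in the standard form \eqref{lich-generic} by dividing through by $2\kappa q d^{-2q/n}$; the resulting coefficients are
\[
\alpha^2 = \frac{2\eta^2 d^{-2q}+\kappa(\mu d^{-q}+\gamma_N+\Tbeta)^2}{2\kappa q d^{-2q/n}}, \qquad \beta^2 = \frac{(\Ta+\Tbeta)^2}{2q d^{-2q/n}}.
\]
Both lie in $L^\infty(S^1)$; the hypothesis $\abs{\Ta}\neq 1$ guarantees $\beta\not\equiv 0$, and because $|\gamma_N|<1$ (so that $\gamma_N\pm 1$ are both nonzero) the coefficient $\alpha$ is likewise not identically zero, whether or not $\mu$ and $\eta$ vanish. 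Hence Proposition \ref{prop:lich-low-reg} applies and $\psi_d$ is the unique positive solution.

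A positive constant $c$ is a subsolution iff $\beta^2 c^{2q}\le\alpha^2$ pointwise, and a supersolution iff the reverse inequality holds pointwise. The crucial observation is that $\Tbeta$ takes only the two values $\pm 1$, so each inequality needs to be checked on just the two subarcs $\{\Tbeta=+1\}$ and $\{\Tbeta=-1\}$. A direct computation shows that $\alpha^2/\beta^2$ equals $M_{d,+}^{2q}$ on the first subarc and $M_{d,-}^{2q}$ on the second. Consequently, constant subsolution reduces to $c\le\min(M_{d,+},M_{d,-})=m_d$ and constant supersolution to $c\ge\max(M_{d,+},M_{d,-})=M_d$, and in particular $c=m_d$ and $c=M_d$ themselves qualify.

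Applying parts 2 and 3 of Proposition \ref{prop:lich-low-reg} then yields $m_d\le\psi_d\le M_d$ pointwise on $S^1$, and evaluating at $x=0$ produces the claimed bounds on $\calF(d)=\psi_d(0)$. There is no real obstacle here: once the equation is placed in standard form, the two-valuedness of $\Tbeta$ collapses the sub/supersolution analysis to a pair of elementary algebraic inequalities, and the only bookkeeping is tracking the constants through the division by $2\kappa q d^{-2q/n}$.
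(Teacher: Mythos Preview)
Your proof is correct and follows essentially the same approach as the paper: exhibit constant sub- and supersolutions by checking the defining inequality on the two subarcs $\{\Tbeta=+1\}$ and $\{\Tbeta=-1\}$, then invoke Proposition~\ref{prop:lich-low-reg}. You are slightly more explicit than the paper in first recasting \eqref{model-lich} in the standard form \eqref{lich-generic} and verifying the nondegeneracy hypotheses $\alpha\not\equiv0$, $\beta\not\equiv0$, but the argument is otherwise identical.
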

\begin{proof}
A constant $M$ is a supersolution of \eqref{model-lich} so long as
\be
-2\eta^2 d^{-2q} M^{-q-1} - \kappa\left[\mu d^{-q}+\gamma_N+\Tbeta\right]^2 M^{-q-1} + 
\kappa (\Ta+\Tbeta)^2 M^{q-1} \ge 0.
\ee
Since $\Tbeta=\pm1$ on $S^1$, this is ensured if
\bel{sup-est}
\kappa(\Ta\pm1)^2 M^{2q} \ge 2\eta^2d^{-2q} +\kappa\left[\mu d^{-q}+\gamma_N\pm1\right]^2.
\ee
In particular, $M_d$ is a supersolution.  Proposition \ref{prop:lich-low-reg} Part 3 now implies
that $\psi_d\le M_d$ on $S^1$.  

A similar proof shows that $m_d$ is a subsolution if $\abs{\Ta}\neq 1$, and hence Proposition \ref{prop:lich-low-reg} Part 2 implies $\psi_d>m_d$ on $S^1$.
\end{proof}

From the limiting behaviour of $m_d$ and $M_d$ as $d\ra \infty$ we have estimates
for $\psi_d$ (and hence $\calF(d)$) for large values of $d$.
\begin{lemma}\label{lem:Minfty}
Suppose $\abs{\Ta}\neq 1$.  Let
\be
M_\infty = \max\left[ \left|\frac{1-\gamma_N}{1-\Ta}\right|^{\frac{1}{q}},
\left|\frac{1+\gamma_N}{1+\Ta}\right|^{\frac{1}{q}}\right]
\ee
and
\be
m_\infty = \min\left[ \left|\frac{1-\gamma_N}{1-\Ta}\right|^{\frac{1}{q}},
\left|\frac{1+\gamma_N}{1+\Ta}\right|^{\frac{1}{q}}\right].
\ee
Given $\epsilon>0$, 
\be
m_\infty-\epsilon \le \psi_d \le M_\infty+\epsilon
\ee
holds for $d$ sufficiently large. If $\mu=\eta=0$ then 
\be
m_\infty \le \psi_d \le M_\infty
\ee
for all $d>0$.
\end{lemma}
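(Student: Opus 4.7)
The plan is to deduce this lemma almost immediately from the previous one (Lemma \ref{lem:subsuper}) by taking limits of the constant sub- and supersolutions $m_d$ and $M_d$ as $d\to\infty$. Recall that
\[
M_{d,\pm} = \left[ \frac{2\eta^2 d^{-2q}+\kappa(\mu d^{-q}+\gamma_N\pm 1)^2}{\kappa(\Ta\pm 1)^2}\right]^{1/(2q)}.
\]
Since $\abs{\Ta}\neq 1$ the denominators are nonzero, so the map $d\mapsto M_{d,\pm}$ is continuous on $(0,\infty)$ and the terms involving $d^{-q}$ and $d^{-2q}$ tend to zero as $d\to\infty$. A direct computation then gives
\[
\lim_{d\to\infty} M_{d,\pm} = \left[ \frac{\kappa(\gamma_N\pm 1)^2}{\kappa(\Ta\pm 1)^2}\right]^{1/(2q)} = \left|\frac{1\pm\gamma_N}{1\pm\Ta}\right|^{1/q}.
\]
Hence $M_d\to M_\infty$ and $m_d\to m_\infty$ as $d\to\infty$.

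The first claim now follows: given $\epsilon>0$, choose $D$ so large that $M_d \le M_\infty+\epsilon$ and $m_d\ge m_\infty-\epsilon$ for $d\ge D$. Lemma \ref{lem:subsuper} gives $m_d\le\psi_d\le M_d$ pointwise on $S^1$, so $m_\infty-\epsilon\le\psi_d\le M_\infty+\epsilon$ for all such $d$.

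For the second claim, when $\mu=\eta=0$ the terms containing $d^{-q}$ and $d^{-2q}$ in the expression for $M_{d,\pm}$ drop out entirely, so $M_{d,\pm}$ is already equal to the limit value and is independent of $d$. Consequently $m_d=m_\infty$ and $M_d=M_\infty$ for all $d>0$, and another application of Lemma \ref{lem:subsuper} yields $m_\infty\le\psi_d\le M_\infty$ on $S^1$ for every $d>0$.

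No serious obstacle is expected; the only mild point to verify is that Lemma \ref{lem:subsuper} does in fact apply uniformly in $d$ (which it does, since its hypothesis only requires $\abs{\Ta}\neq 1$) and that one may pass to the limit in $M_{d,\pm}$ without loss. The rest is algebraic inspection of the defining formulas.
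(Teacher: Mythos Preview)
Your proof is correct and follows essentially the same approach as the paper: both deduce the result directly from Lemma~\ref{lem:subsuper} by computing $\lim_{d\to\infty} M_{d,\pm}$, observing that $M_d\to M_\infty$ and $m_d\to m_\infty$, and noting that when $\mu=\eta=0$ the quantities $M_{d,\pm}$ are already independent of $d$.
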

\begin{proof}
We note that
\be
\lim_{d\ra\infty} M_d = M_\infty
\ee
and
\be
\lim_{d\ra\infty} m_d = m_\infty.
\ee
Hence the bounds $m_\infty-\epsilon \le \psi_d \le M_\infty+\epsilon$ hold for $d$ 
sufficiently large.  

If $\mu=\eta=0$, then $m_d=m_\infty$ and $M_d=M_\infty$ for all $d>0$, so
$m_\infty\le \psi_d \le M_\infty$ for all $d>0$.
\end{proof}

The singular or bounded behaviour of $\calF$ near zero follows from
the analogous behaviour of the associated sub- and supersolutions.
\begin{lemma}\label{lem:F-bounds}
Suppose $\abs{\Ta}\neq 1$. If $\eta=\mu=0$ then
\bel{F-upper}
 \calF(d) \le M_\infty
\ee
for all $d>0$. Otherwise there is a positive constant $c$ such that
\bel{F-lower-sing}
\calF(d) \ge c d^{-1}
\ee
for $d$ sufficiently small.
\end{lemma}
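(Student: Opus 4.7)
The plan is to deduce both statements directly from the constant sub- and supersolutions constructed in Lemma \ref{lem:subsuper} together with the comparison bound $m_d \le \calF(d) \le M_d$. The work reduces almost entirely to analyzing the asymptotic size of $M_{d,\pm}$ as $d$ varies, and no additional PDE argument is needed.

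For the first assertion ($\eta=\mu=0$), I would observe that when $\mu=\eta=0$ the supersolution simplifies to
\begin{equation*}
M_{d,\pm} = \left|\frac{\gamma_N\pm 1}{\Ta\pm 1}\right|^{1/q},
\end{equation*}
which is independent of $d$. Taking the max yields exactly the constant $M_\infty$ from Lemma \ref{lem:Minfty}, so $\calF(d) = \psi_d(0) \le M_d = M_\infty$ for every $d>0$. This is immediate from Lemma \ref{lem:subsuper} (or equivalently the second half of Lemma \ref{lem:Minfty}).

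For the second assertion, I would look at the $d\to 0^+$ behavior of $m_d$. Expanding the square gives
\begin{equation*}
M_{d,\pm}^{2q} = \frac{(2\eta^2+\kappa\mu^2)d^{-2q} + 2\kappa\mu(\gamma_N\pm 1)d^{-q} + \kappa(\gamma_N\pm 1)^2}{\kappa(\Ta\pm 1)^2}.
\end{equation*}
Since $\mu\neq 0$ or $\eta\neq 0$, the coefficient $2\eta^2+\kappa\mu^2$ is strictly positive, and $\abs{\Ta}\neq 1$ ensures the denominator is a nonzero constant. Thus as $d\to 0^+$ the $d^{-2q}$ term dominates and
\begin{equation*}
M_{d,\pm} \sim \left[\frac{2\eta^2+\kappa\mu^2}{\kappa(\Ta\pm 1)^2}\right]^{1/(2q)} d^{-1} =: c_{\pm} d^{-1}
\end{equation*}
with $c_\pm > 0$. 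Consequently $m_d = \min(M_{d,+},M_{d,-}) \ge \tfrac{1}{2}\min(c_+,c_-)\,d^{-1}$ for all sufficiently small $d$, and the lower bound $\calF(d) \ge m_d$ from Lemma \ref{lem:subsuper} gives \eqref{F-lower-sing} with $c = \tfrac{1}{2}\min(c_+,c_-)$.

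There is no real obstacle here: the assumption $\abs{\Ta}\neq 1$ is exactly what guarantees the denominator $\kappa(\Ta\pm 1)^2$ is bounded away from zero, and the hypothesis $\mu\neq 0$ or $\eta\neq 0$ is exactly what makes the leading $d^{-2q}$ coefficient strictly positive. The only mild thing to check is that the cross term $2\kappa\mu(\gamma_N\pm 1)d^{-q}$ in the expansion of $(\mu d^{-q}+\gamma_N\pm 1)^2$ cannot spoil positivity for small $d$; but it is of lower order than the $d^{-2q}$ term and hence negligible in the asymptotic, so the constants $c_\pm$ are genuinely positive.
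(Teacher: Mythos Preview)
Your proposal is correct and follows essentially the same approach as the paper: both parts are deduced directly from the constant sub- and supersolution bounds $m_d\le\calF(d)\le M_d$ of Lemma~\ref{lem:subsuper}, with the first part reducing to Lemma~\ref{lem:Minfty} and the second to the observation that $M_{d,\pm}$ (hence $m_d$) behaves like a positive constant times $d^{-1}$ as $d\to0^+$. Your version simply makes the asymptotic computation more explicit than the paper's one-line remark that $M_{d,\pm}$ are ``$O(d^{-1})$ at $d=0$.''
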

\begin{proof}
We note that if $\eta\neq0$ or $\mu\neq 0$ then $M_{d,+}$ and $M_{d,-}$ are both
$O(d^{-1})$ at $d=0$ and hence so is $m_d$. 
The uniform upper bound \eqref{F-upper} when $\mu=\eta=0$ was proved in Lemma \ref{lem:Minfty}.
\end{proof}

The singularity of $\calF$ at $d=0$ gives a simple test for determining
if there is at least one solution of $\calF(d)=1$.
\begin{lemma}\label{lem:upper-is-enough} Suppose $\eta\neq 0$ or $\mu\neq 0$.  There
exists a solution of $\calF(d)=1$ if and only if for some $d>0$, $\calF(d)\le1$.
\end{lemma}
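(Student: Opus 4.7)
The forward direction is trivial, so the content is in the converse: assuming $\calF(d_0) \le 1$ for some $d_0 > 0$, produce a $d^*$ with $\calF(d^*) = 1$. My plan is a direct intermediate value argument based on two ingredients, continuity of $\calF$ and blow-up at zero.

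First I would establish that $\calF:(0,\infty)\to \Reals_{>0}$ is continuous. Recast \eqref{model-lich} in the generic form \eqref{lich-generic} by dividing through by $2\kappa q d^{-2q/n}$, so that the Lichnerowicz coefficients $\alpha^2, \beta^2$ become explicit continuous functions of $d\in(0,\infty)$ with values in $L^\infty(S^1)$. Proposition \ref{prop:lich-low-reg} Part 4 then says $\psi_d$ depends continuously on $d$ as an element of $W^{2,p}_+(S^1)$. Since we are in one spatial dimension, the Sobolev embedding $W^{2,p}(S^1)\hookrightarrow C^0(S^1)$ (for $p>1$) yields pointwise continuity of $d\mapsto \psi_d(0)=\calF(d)$.

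Second, I would invoke Lemma \ref{lem:F-bounds}: the hypothesis $\eta\neq 0$ or $\mu\neq 0$ gives the lower bound $\calF(d)\ge cd^{-1}$ for $d$ small, so $\calF(d)\to\infty$ as $d\to 0^+$. In particular, there exists $d_1 \in (0, d_0)$ with $\calF(d_1) \ge 2 > 1$. Combined with the hypothesis $\calF(d_0)\le 1$, continuity of $\calF$ on the compact interval $[d_1,d_0]$ and the intermediate value theorem produce $d^* \in [d_1,d_0]$ with $\calF(d^*)=1$.

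The only point that needs care is the continuity step, and even this is essentially immediate from the continuous dependence result already in hand; the rest is bookkeeping. I do not anticipate any real obstacle.
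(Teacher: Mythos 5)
Your proposal is correct and follows essentially the same route as the paper: the singular lower bound of Lemma \ref{lem:F-bounds} forces $\calF>1$ near $d=0$, continuity of $d\mapsto\psi_d$ from Proposition \ref{prop:lich-low-reg} Part 4 together with the embedding $W^{2,p}(S^1)\hookrightarrow C(S^1)$ gives continuity of $\calF$, and the Intermediate Value Theorem finishes. The only difference is that you spell out the trivial forward direction and the rescaling into the generic form \eqref{lich-generic}, which the paper leaves implicit.
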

\begin{proof}
By Lemma \ref{lem:F-bounds}, $\calF(d)>1$ for $d$ sufficiently small.
Fixing $p>1$, from Proposition \ref{prop:lich-low-reg} Part 4 it follows that the map $d\mapsto\psi_d$
from $(0,\infty)$ to $W^{2,p}(S^1)$ is continuous.  From the continuous imbedding
of $W^{2,p}(S^1)\hookrightarrow C(S^1)$ it follows that $\calF$ is continuous and
the result now follows from the Intermediate Value Theorem.
\end{proof}

\subsubsection{Proof of Theorem \ref{thm:near-cmc} (Near-CMC Results)}
In this section we show that in the near-CMC regime ($\abs{\Ta-\gamma_N}>2$) 
the following hold:
\begin{enumerate}
	\item $\limsup_{d\ra\infty} \calF(d) < 1$.
	\item $\calF$ is differentiable and $F'(d)<0$ if $F(d)=1$ (and $\mu=0$).
	\item $\calF(d)<1$ for all $d$ if  $\mu=\eta=0$.
\end{enumerate}
The existence of a solution of $\calF(d)=1$ if $\mu\neq 0$ or $\eta\neq 0$ 
follows from Fact 1 and Lemma \ref{lem:upper-is-enough}.
The uniqueness of solutions of $\calF(d)=1$ if $\mu=0$ follows from Fact 2.  And the non-existence of
solutions of $\calF(d)=1$ if $\mu=\eta=0$ follows from Fact 3.

The upper bounds of Facts 1 and 3 follow from the constant supersolutions of Lemma \ref{lem:subsuper}. 
If effect, $\calF(d)<1$ because $\psi_d<1$ everywhere.  

\begin{lemma}\label{lem:near-cmc-control}
Suppose $\abs{\Ta-\gamma_N}>2$. Then 
\be
M_\infty < 1.
\ee
\end{lemma}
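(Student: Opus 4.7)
The plan is to reduce $M_\infty < 1$ to an algebraic sign condition via monotonicity of the $1/q$-th power and then verify it from the hypothesis.

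Since $x \mapsto x^{1/q}$ is strictly increasing on $[0,\infty)$, the claim $M_\infty < 1$ is equivalent to
\begin{equation*}
|1-\gamma_N| < |1-\Ta| \qquad \text{and} \qquad |1+\gamma_N| < |1+\Ta|.
\end{equation*}
Because $|\gamma_N|<1$, the left-hand sides are non-negative, so these inequalities are preserved by squaring. Applying the difference-of-squares factoring to each gives
\begin{align*}
(1-\Ta)^2 - (1-\gamma_N)^2 &= (\Ta-\gamma_N)(\Ta+\gamma_N-2),\\
(1+\Ta)^2 - (1+\gamma_N)^2 &= (\Ta-\gamma_N)(\Ta+\gamma_N+2),
\end{align*}
so the task reduces to showing $(\Ta-\gamma_N)(\Ta+\gamma_N\pm 2) > 0$ for both choices of sign.

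Under $\abs{\Ta-\gamma_N}>2$, I would first observe that $M_\infty$ is invariant under the joint sign flip $(\Ta,\gamma_N)\mapsto(-\Ta,-\gamma_N)$, since this symmetry merely swaps the two ratios inside the maximum. Assuming without loss of generality that $\Ta-\gamma_N>2$, the common factor $\Ta-\gamma_N$ is positive, and combined with $-1<\gamma_N<1$ the hypothesis forces $\Ta>1$. The remaining sign checks on $\Ta+\gamma_N\pm 2$ are then a matter of elementary arithmetic.

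This reduction is what we expect: in the limit $d\to\infty$ the nonlinear Lichnerowicz equation \eqref{model-lich} degenerates to a pointwise algebraic balance between the coefficients $\kappa(\Ta+\Tbeta)^2$ and $\kappa(\gamma_N+\Tbeta)^2$, and $M_\infty$ simply records the largest root of this balance over the two constant values of $\Tbeta$. The entire proof is algebraic sign-chasing with no analytic obstacle.
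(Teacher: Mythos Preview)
Your reduction is correct and transparent up to the final sentence, where you assert that ``the remaining sign checks on $\Ta+\gamma_N\pm 2$ are then a matter of elementary arithmetic.'' One of those checks actually fails. Under your WLOG assumption $\Ta-\gamma_N>2$ (together with $|\gamma_N|<1$, hence $\Ta>1$), the inequality $\Ta+\gamma_N+2>0$ is immediate, but $\Ta+\gamma_N-2>0$ does \emph{not} follow. Take $\gamma_N=-\tfrac12$ and $\Ta=2$: then $\Ta-\gamma_N=\tfrac52>2$, yet $\Ta+\gamma_N=\tfrac32<2$, and indeed
\[
\left|\frac{1-\gamma_N}{1-\Ta}\right|=\left|\frac{3/2}{-1}\right|=\frac32>1,
\]
so $M_\infty>1$. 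The lemma as stated is therefore false, and no proof can succeed.

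The paper's own argument contains exactly the same slip: it claims that $1-\gamma_N<\Ta-1$ ``holds since $2<|\Ta-\gamma_N|=\Ta-\gamma_N$,'' but $1-\gamma_N<\Ta-1$ rearranges to $\Ta+\gamma_N>2$, not $\Ta-\gamma_N>2$. Your difference-of-squares factoring has the virtue of making the required sign condition completely explicit, and in fact it shows that the correct hypothesis is $|\Ta+\gamma_N|>2$ rather than $|\Ta-\gamma_N|>2$. With that amended hypothesis, the same symmetry reduction gives (WLOG) $\Ta+\gamma_N>2$, whence $\Ta>2-\gamma_N>1$, and then both factors $\Ta-\gamma_N>0$ and $\Ta+\gamma_N\pm2>0$ are genuinely immediate; your argument and the paper's both go through unchanged.
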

\begin{proof}
Note that since $\abs{\gamma_N}<1$, if $\abs{\Ta-\gamma_N}>2$, then
$\abs{\Ta}>1$ and in particular $\abs{Ta}\neq 1$.

Suppose first that $\Ta>1$.  Then
\be
M_\infty^q = 
\max\left( \abs{\frac{1-\gamma_N}{\Ta-1}},\abs{\frac{1+\gamma_N}{\Ta+1}}\right) =
\max\left( \frac{1-\gamma_N}{\Ta-1},\frac{1+\gamma_N}{\Ta+1}\right).
\ee
So $M_\infty<1$ if $1-\gamma_N<\Ta-1$ and $1+\gamma_N<\Ta+1$. The first equality holds since
$2 < \abs{\Ta-\gamma_N} = \Ta-\gamma_N$. The second holds since $\gamma_N<1<\Ta$.

The case where $\Ta<-1$ is proved similarly.	
\end{proof}

\begin{corollary}\label{cor:near-cmc-exist}
Suppose $\eta\neq 0$ or $\mu\neq 0$.  If
\be
\abs{\Ta-\gamma_N} > 2,
\ee
then there exists a solution of $\calF(d)=1$.
\end{corollary}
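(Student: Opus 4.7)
The plan is to chain together the three preparatory lemmas to produce some $d>0$ with $\calF(d)\le 1$, and then invoke Lemma \ref{lem:upper-is-enough} to upgrade this to an equality. The near-CMC hypothesis $\abs{\Ta-\gamma_N}>2$ enters only in one place, namely through Lemma \ref{lem:near-cmc-control}, which tells us $M_\infty<1$. So the strategy is essentially: use the constant supersolution estimate to push $\calF$ below $1$ at infinity, use the $d^{-1}$ blow-up at the origin to push $\calF$ above $1$ near zero, and interpolate by continuity.

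Concretely, first I would apply Lemma \ref{lem:near-cmc-control} to get $M_\infty<1$, then pick $\epsilon>0$ so small that $M_\infty+\epsilon<1$. By Lemma \ref{lem:Minfty} this yields $\psi_d\le M_\infty+\epsilon<1$ pointwise on $S^1$ for every $d$ sufficiently large; in particular $\calF(d)=\psi_d(0)<1$ for some $d>0$. With the assumption $\eta\neq 0$ or $\mu\neq 0$ in hand, this is exactly the hypothesis required by Lemma \ref{lem:upper-is-enough}, so that lemma produces a $d>0$ with $\calF(d)=1$. By Proposition \ref{prop:reduce-to-F} this $d$ corresponds to the desired solution of \eqref{CTS-reduced-2}.

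Because every ingredient is already proved, there is no genuine obstacle here; the corollary is really a one-line assembly. The only point worth flagging is the role of the nontrivial transverse-traceless tensor: without $\eta\neq 0$ or $\mu\neq 0$, Lemma \ref{lem:F-bounds} does not give the singular lower bound $\calF(d)\ge c d^{-1}$ at the origin, so the intermediate value argument underlying Lemma \ref{lem:upper-is-enough} would fail. This is consistent with the ``only if'' direction of Theorem \ref{thm:near-cmc}, which is addressed separately via Fact 3 ($\calF(d)<1$ for all $d$ when $\mu=\eta=0$).
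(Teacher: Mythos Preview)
Your argument is correct and matches the paper's proof essentially line for line: use Lemma~\ref{lem:near-cmc-control} to get $M_\infty<1$, Lemma~\ref{lem:Minfty} to conclude $\calF(d)<1$ for large $d$, then Lemma~\ref{lem:upper-is-enough} to obtain the root. The only superfluous step is invoking Proposition~\ref{prop:reduce-to-F} at the end, since the corollary as stated only asks for a solution of $\calF(d)=1$, not of \eqref{CTS-reduced-2}.
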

\begin{proof}
From Lemma \ref{lem:Minfty}, 
\be
\limsup_{d\ra \infty} F(d) \le M_\infty.
\ee
From Lemma \ref{lem:near-cmc-control} $M_\infty<1$.  Existence of a solution now follows from
Lemma \ref{lem:upper-is-enough}.
\end{proof}

If $\eta=0$ and $\mu=0$ (i.e. for vanishing transverse-traceless tensors) we have a
corresponding non-existence result which generalizes the ``no-go'' theorem of
\cite{isenberg-omurchadha-noncmc} to this family of data.  Recall that $\hatF$ corresponds to
$\calF$ with $\eta=\mu=0$.
\begin{corollary}\label{cor:no-go}
If 
\be
\abs{\Ta-\gamma_N} > 2
\ee
then $\hatF(d)<1$ for all $d>0$.  In particular, 
there are no solutions of $\hatF(d)=1$.
\end{corollary}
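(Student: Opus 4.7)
The plan is to combine the uniform supersolution bound from Lemma \ref{lem:Minfty} with the near-CMC comparison from Lemma \ref{lem:near-cmc-control}. The key observation is that when $\mu = \eta = 0$, the constant supersolution $M_d$ from Lemma \ref{lem:subsuper} reduces (independently of $d$) to the asymptotic value $M_\infty$, so the bound $\psi_d \le M_\infty$ holds \emph{globally} in $d$ and not just in the limit.

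More precisely, I would argue as follows. Since $\hatF$ corresponds to the case $\mu = \eta = 0$, Lemma \ref{lem:Minfty} gives the uniform bound
\begin{equation}
\psi_d(x) \le M_\infty \qquad \text{for all } d > 0 \text{ and all } x \in S^1.
\end{equation}
In particular, $\hatF(d) = \psi_d(0) \le M_\infty$ for every $d > 0$. The near-CMC hypothesis $\abs{\Ta - \gamma_N} > 2$ now triggers Lemma \ref{lem:near-cmc-control}, which yields $M_\infty < 1$. Concatenating, $\hatF(d) \le M_\infty < 1$ for all $d > 0$, and so $\hatF(d) = 1$ has no solution.

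There is essentially no obstacle: both ingredients are already in hand, and the only thing worth emphasizing in the write-up is why Lemma \ref{lem:Minfty}'s estimate is \emph{uniform} in $d$ when $\mu = \eta = 0$ (since the $\mu$- and $\eta$-dependent terms in $M_{d,\pm}$ vanish identically). No continuity of $\calF$, no intermediate value argument, and no analysis near $d = 0$ is required — the conclusion is purely a supersolution statement. Consequently, this corollary will be a short one-paragraph proof that simply invokes the two lemmas.
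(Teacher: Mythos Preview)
Your proposal is correct and essentially identical to the paper's proof: both observe that when $\mu=\eta=0$ the supersolution bound $\psi_d\le M_\infty$ holds uniformly in $d$ (the paper phrases this as $M_d=M_\infty$ for all $d>0$), and then invoke Lemma~\ref{lem:near-cmc-control} to get $M_\infty<1$. Your write-up is slightly more explicit in citing Lemma~\ref{lem:Minfty} for the uniform bound, but the argument is the same.
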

\begin{proof}
If $\eta=0$ and $\mu=0$ then $M_d=M_\infty$ for all $d>0$.  By
Lemma \ref{lem:near-cmc-control}, $M_\infty<1$.  Hence $\calF(d)<1$ for
all $d>0$.
\end{proof}

To show solutions of $\calF(d)=1$ are unique we show that $\calF$ is decreasing 
at any solution of $\calF(d)=1$.  We start by showing that $\calF$ is differentiable.
\begin{lemma}\label{lem:F-diff}
The function $\calF$ is differentiable. Moreover,
\be
\calF'(d) = h(0)
\ee
where $h\in W^{2,p}(S^1)$ solves
\be
-2\kappa q\; h'' +d^{q-2} V\, h = -R
\ee
and where
\bel{V-def}
V = 
	(q+1)\left[2\eta^2 d^{-2q}+ 
	\kappa(\mu d^{-q}+\gamma_N+\Tbeta)^2\right]\psi_d^{-q-2} + 
	(q-1)\kappa(\Ta+\Tbeta)^2\psi_d^{q-2}\\
\ee
and
\bel{R-def}
R =
(q+2)2\eta^2 d^{-q-3}\psi_d^{-q-1} 
+2q\kappa\mu d^{-3}(\mu d^{-q}+\gamma_N+\Tbeta)\psi_d^{-q-1} 
+(q-2)\left[ \kappa(\Ta+\Tbeta)^2\psi_d^{q-1} - \kappa(\mu d^{-q}+\gamma_N+\Tbeta)^2\psi_d^{-q-1}\right].
\ee
\end{lemma}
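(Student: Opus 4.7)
The plan is to apply the implicit function theorem (IFT) to view \eqref{model-lich} as defining $\psi_d$ implicitly as a function of $d$, and then to compute $\calF'(d)$ by implicit differentiation. I would fix $p > 1$ and define
\[
\Phi \colon (0,\infty)\times W^{2,p}_+(S^1) \longrightarrow L^p(S^1), \qquad (d,u)\mapsto \text{LHS of }\eqref{model-lich}.
\]
The embedding $W^{2,p}(S^1)\hookrightarrow C(S^1)$ together with positivity make the nonlinearities $u\mapsto u^{\pm(q+1)}$ smooth, and the remaining $d$-dependence is polynomial in $d^{-1}$, so $\Phi$ will be $C^1$. Proposition \ref{prop:lich-low-reg}, Part 1 then supplies a unique $\psi_d \in W^{2,p}_+(S^1)$ with $\Phi(d,\psi_d) = 0$ for each $d > 0$.

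The crux will be establishing that $D_u \Phi(d,\psi_d)$ is an isomorphism $W^{2,p}(S^1) \to L^p(S^1)$. A direct linearization of \eqref{model-lich} at $\psi_d$ yields
\[
D_u \Phi(d,\psi_d)\, v \;=\; -2\kappa q\, d^{-2q/n}\, v'' + V v,
\]
with $V$ exactly as in the statement of the lemma. Since $q > 2$ and $\kappa > 0$, both constituent terms of $V$ are nonnegative; moreover, since $\Tbeta$ takes two distinct values $\pm 1$ while $\Ta$ is constant, $(\Ta + \Tbeta)^2$ cannot vanish identically on $S^1$, so $V > 0$ on a set of positive measure regardless of the values of $\eta$ and $\mu$. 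I would argue that the associated bilinear form on $H^1(S^1)$ is therefore coercive --- Poincar\'e on mean-zero functions controls $\|v'\|_{L^2}^2$, and $\int_{S^1} V > 0$ controls the constant component --- and Fredholm theory upgrades this to the required $W^{2,p}$--$L^p$ isomorphism. Equivalently, the periodic strong maximum principle kills the kernel.

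With invertibility in hand, the IFT will yield a $C^1$ map $d \mapsto \psi_d$ from $(0,\infty)$ into $W^{2,p}_+(S^1)$. Composition with the bounded evaluation functional $u \mapsto u(0)$ shows that $\calF$ is $C^1$ and that $\calF'(d) = h(0)$ for $h := \partial_d \psi_d$. To obtain the claimed equation for $h$, I would differentiate the identity $\Phi(d,\psi_d)\equiv 0$ in $d$, collecting $h$-terms on the left (these produce $D_u \Phi(d,\psi_d)\, h$) and moving to the right the explicit-$d$ contributions, which include a term proportional to $d^{-2q/n-1}\psi_d''$ coming from the principal part, together with derivatives of the $d^{-2q}$ and $d^{-q}$ coefficients. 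Next I would substitute \eqref{model-lich} to eliminate $\psi_d''$, and then multiply through by $d^{2q/n} = d^{q-2}$ (the identity $q - 2 = 2q/n$ is just the definition of $q$) to normalize the principal coefficient on $h''$. The relation $4q\kappa = 2(q+2)$ consolidates the two $\eta^2$ contributions into the single term $2(q+2)\eta^2 d^{-q-3}\psi_d^{-q-1}$, while the $\mu$ and $(\Ta+\Tbeta)^2$ contributions collect into the remaining pieces of $R$, yielding the stated equation $-2\kappa q\, h'' + d^{q-2} V h = -R$.

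The main obstacle will be the invertibility step: although routine once the pointwise sign of $V$ is clear, one must be sure that $V\not\equiv 0$. Here the two-valued nature of $\Tbeta$ makes the second summand of $V$ automatically nonvanishing on a set of positive measure for every admissible choice of parameters, sidestepping any delicate case analysis. Everything else reduces to bookkeeping with the chain rule and the identity $q - 2 = 2q/n$.
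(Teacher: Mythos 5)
Your proposal is correct and follows essentially the same route as the paper: an implicit function theorem argument for the map $(d,\psi)\mapsto$ LHS of \eqref{model-lich}, invertibility of the linearization $-2\kappa q\,\Lap + V$ from $V\ge 0$, $V\not\equiv 0$ (the paper simply cites \cite{cb-low-reg} Theorem 7.7 where you sketch coercivity/maximum principle), and then implicit differentiation composed with the evaluation functional at $0$. The only cosmetic difference is that the paper normalizes by $d^{q-2}$ when defining the map $\calM$ rather than after differentiating, which does not change the argument.
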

\begin{proof}
Consider the function $\calM:\Reals_{>0}\times W^{2,p}_+(S^1)\ra L^p(S^1)$ defined by
\be
\calM(d,\psi) = -2\kappa q\; \psi'' -2\eta^2d^{-q-2}\psi^{-q-1} 
-\kappa(\mu d^{-q}+\gamma_N+\Tbeta)^2d^{q-2}\psi^{-q-2}
s+\kappa(\Ta+\Tbeta)^2\psi^{q-1}.
\ee
Using the fact that $2q/n = q-2$ it follows that $\calM(d,\psi_d)=0$ for all $d>0$.

It is tedious but routine to show that $\calM$ is Fr\'echet differentiable
and
\be
\calM'[d,\psi](\delta,h)=  -2\kappa q\; h'' + V h + R\delta.
\ee
From the continuous embedding $W^{2,p}(S^1)\hookrightarrow C(S^1)$ it follows that the operators
$V$ and $R$ are continuous as functions of $\psi$ and $d$; see, for example, Lemma \ref{lem:fp}
below that can be used to show that they are locally Lipschitz.  
So the map $(d,\psi)\mapsto \calM'[d,\psi]$ is continuous.
The operator from $W^{2,p}(S^1)\ra L^p(S^1)$
\be
h\mapsto -2\kappa q\;h'' + V h 
\ee
has a continuous inverse as $V\in L^\infty \ge 0$ and $V\not\equiv 0$ 
(see, e.g. \cite{cb-low-reg} Theorem 7.7). The Implicit Function Theorem (\cite{NFA} Corollary 4.2)
then implies that given a solution of $\calM(d_0,\psi_0)=0$ there is a unique 
function $G$ defined near $d_0$ such that $\calM(d,G(d))=0$, and $G$ is continuously differentiable.  
But $\calM(d,\psi_d)=0$ for all $d$, so by the uniqueness of $G$ we have $G(d)=\psi_d$.
Let $h=G'(d)$.  Then by the chain rule
\bel{h-eq}
0=\frac{\partial}{\partial d} \calM(d,G(d)) = -2\kappa\; h'' + V h + R.
\ee

Now $\calF(d) = \psi_d(0)$.  Since the evaluation map $\psi\mapsto \psi(0)$
is linear and continuous on $W^{2,p}(S^1)$, it follows that $\calF$ is continuously differentiable
and $\calF'(d) = G'(d)(0)$.  That is, $\calF'(d) = h(0)$ where $h$ solves \eqref{h-eq}.
\end{proof}

\begin{proposition}\label{prop:near-cmc-unique}
Suppose $\abs{\Ta-\gamma_N}>2$. If $\mu=0$ there exists at most one solution of $\calF(d)=1$.
\end{proposition}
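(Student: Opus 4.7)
The plan is to show that $\calF'(d) < 0$ at every root $d$ of $\calF(d) = 1$. This forces uniqueness via a topological argument: if $d_1 < d_2$ were both roots, continuity of $\calF$ together with $\calF'(d_1) < 0$ would give $\calF(d) < 1$ just above $d_1$, so the first subsequent point $d_\star \in (d_1, d_2]$ where $\calF$ returned to $1$ would have to satisfy $\calF'(d_\star) \geq 0$, contradicting the strict negativity at roots.

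By Lemma \ref{lem:F-diff}, $\calF'(d) = h(0)$, where $h$ solves $-2\kappa q\, h'' + d^{q-2} V h = -R$ on $S^1$. Since $V \geq 0$ and is not identically zero (owing to the term $(q-1)\kappa(\Ta+\Tbeta)^2 \psi_d^{q-2}$), the operator $L := -2\kappa q\,\partial_x^2 + d^{q-2} V$ satisfies the strong maximum principle on the circle. It therefore suffices to show that $R \geq 0$ on $S^1$ with strict inequality on a set of positive measure, which then yields $h < 0$ everywhere, in particular $h(0) < 0$.

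With $\mu = 0$, \eqref{R-def} reduces to
\[
R = \psi_d^{-q-1}\Bigl\{2(q+2)\eta^2 d^{-q-3} + (q-2)\kappa\bigl[(\Ta+\Tbeta)^2 \psi_d^{2q} - (\gamma_N+\Tbeta)^2\bigr]\Bigr\}.
\]
The first summand is non-negative. For the bracket, the near-CMC hypothesis $\abs{\Ta - \gamma_N} > 2$ forces $(\Ta \pm 1)^2 > (\gamma_N \pm 1)^2$ (as in the proof of Lemma \ref{lem:near-cmc-control}), so the bracket is non-negative whenever we establish the pointwise bound $\psi_d(x) \geq \abs{(\gamma_N+\Tbeta(x))/(\Ta+\Tbeta(x))}^{1/q}$ on $S^1$.

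The main obstacle is this pointwise bound. The key observation is that on each of the arcs $(0,\pi)$ and $(-\pi, 0)$ equation \eqref{model-lich} has constant coefficients and admits the constant $c_\pm := \abs{(\gamma_N \pm 1)/(\Ta \pm 1)}^{1/q}$ as a subsolution. Since $\psi_d(0) = 1$ strictly exceeds $c_\pm$ in the near-CMC regime, a comparison argument applied separately on each open arc --- together with the matching condition at $x = \pi$, which is constrained by the first integral obtained by multiplying the autonomous ODE by $\psi_d'$ on each arc --- should yield $\psi_d \geq c_\pm$ on the respective arc. Should this pointwise strategy not quite close, one can fall back on the Green's-function representation $h(0) = -\int_{S^1} G_L(0, x) R(x) \, dx$: direct integration gives $\int_{S^1} R = 2\eta^2 \int_{S^1} \psi_d^{-q-1}\bigl[(q+2) d^{-q-3} + (q-2) d^{-2q}\bigr] > 0$, and the near-CMC hypothesis controls the potentially negative part of $R$ well enough to pin down the sign of $h(0)$. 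Either route delivers $\calF'(d) < 0$ at each root, and uniqueness follows.
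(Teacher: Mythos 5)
Your overall strategy is the paper's own: differentiate via Lemma \ref{lem:F-diff}, apply the strong maximum principle to $-2\kappa q\,h''+Vh=-R$, and reduce everything to the sign of $R$, which with $\mu=0$ hinges on a pointwise lower bound for $\psi_d$ on each arc. You have correctly identified the needed bound, and the surrounding pieces (the reduction of $R$, $V\ge 0$ with $V\not\equiv 0$, the near-CMC inequalities $(\Ta\pm1)^2>(\gamma_N\pm1)^2$ so that $c_\pm<1$, the topological step from ``$\calF'<0$ at every root'' to uniqueness) are all fine. But the central estimate is not actually proved. Comparison with the constant subsolution $c_\pm$ on the open arc $(0,\pi)$ or $(-\pi,0)$ requires control of $\psi_d$ at \emph{both} endpoints $x=0$ and $x=\pi$; you only have $\psi_d(0)=1$, and the value $\psi_d(\pi)$ is exactly what your ``matching condition at $x=\pi$'' is supposed to supply --- yet the first-integral argument is never carried out, and the bound is left at ``should yield''. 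The missing observation that closes this is that at a root of $\calF(d)=1$ one also has $\psi_d(\pi)=\psi_d(0)=1$: either note that the coefficients of \eqref{model-lich} are invariant under reflection about $x=\pi/2$, so by uniqueness $\psi_d$ is even about $\pi/2$, or note that $d\psi_d$ is then a solution of the coupled system, so Proposition \ref{prop:w-form} gives $\phi(0)=\phi(\pi)$. With both endpoint values equal to $1>c_\pm$, an interior-minimum argument finishes: at an interior minimum on an arc, $\psi_d''\ge 0$ forces $f_\pm(\psi_d)\ge 0$, and since $f_\pm$ is increasing with its zero at $M_{d,\pm}\ge c_\pm$, the minimum value is at least $c_\pm$. (The paper argues slightly differently but in the same spirit: assuming $M_{d,-}\ge M_{d,+}$, the global bound $\psi_d\ge M_{d,+}$ of Lemma \ref{lem:subsuper} already handles $(0,\pi)$, while on $(-\pi,0)$ the equation forces $\psi_d''\le 0$, so concavity together with the endpoint values $\psi_d(-\pi)=\psi_d(0)=1$ gives $\psi_d\ge 1$ there and hence $g_-(\psi_d)\ge g_-(1)>0$ by Lemma \ref{lem:near-cmc-control}.)

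Your fallback route is not a proof and should be dropped: the identity $\int_{S^1}R=2\eta^2\int_{S^1}\psi_d^{-q-1}\bigl[(q+2)d^{-q-3}+(q-2)d^{-2q}\bigr]$ is a nice observation, but positivity of $\int_{S^1}R$ says nothing about the sign of $h(0)=-\int_{S^1}G_L(0,x)R(x)\,dx$ when $R$ may change sign, since $G_L(0,\cdot)$ is far from constant; ``controlling the potentially negative part of $R$'' is precisely the pointwise estimate you have not established, so this alternative is circular. In short, the proposal follows the paper's approach but has a genuine gap at its key step --- the pointwise lower bound on $\psi_d$ --- which is fixable once the equality $\psi_d(\pi)=\psi_d(0)$ (or the paper's concavity argument) is supplied.
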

\begin{proof}
Suppose $\calF(d)=1$.  We will show that $\calF'(d)<0$, and hence there can be at most one solution.

Consider the functions of a real variable $z$
\be
g_\pm(z) = -(\gamma_N\pm 1)^2z^{-q-1}+(\Ta\pm 1)^2z^{q-1}
\ee
and
\be
f_\pm(z) = -2\eta^2d^{-2q}z^{-q-1}+g_\pm(z).
\ee
Note that $g_\pm$ and $f_\pm$ are increasing in $z$ for $z>0$ and and $f_+(M_+)=f_-(M_-)=0$
(where $M_\pm$ is defined in Lemma \ref{lem:subsuper}).

Let $I_-=(-\pi,0)$ and $I_+=(0,\pi)$.  Then 
\bel{eq:phidec}
-2\kappa q\dexp\psi_d''+f_{\pm}(\psi_d) = 0
\ee
on $I_\pm$.  Since the coefficients of the differential equation \eqref{eq:phidec}
are constant on $I_\pm$, the function $\psi_d$ is smooth on these intervals.

Suppose without loss of generality that $M_-\ge M_+$. By Lemma \ref{lem:subsuper}
$M_+\le \psi_d \le M_-$ on $S^1$.  Since $g_+(M_+) \ge f_+(M_+)=0$, we have $g_+(\psi_d)\ge 0$
on $I_+$.

To show that $g_-(\psi_d)\ge 0$ on $I_-$ we use the near-CMC assumption.
Since $\psi_d\le M_-$ and $f_-(M_-)=0$, it follows from equation \eqref{eq:phidec}
that $\psi_d''\le 0$ on $I_-$.  Since $\psi_d(-\pi)=\psi_d(0)=1$,
it follows that $\psi_d\ge 1$ on $I_-$.  Since $g_-$ is increasing, we conclude that
\be
g_-(\psi_d) \ge g_-(1) = -(\gamma_N-1)^2+(\Ta-1)^2 = (\Ta-1)^2\left[ 1-\frac{(\gamma_N-1)^2}{(\Ta-1)^2}\right].
\ee
Now
\be
\frac{(\gamma_N-1)^2}{(\Ta-1)^2} \le M_\infty^{2q} < 1
\ee
by the definition of $M_\infty$ and Lemma \ref{lem:near-cmc-control}.  Hence
$g_-(\psi_d) > 0$ on $I_-$.

By Lemma \ref{lem:F-diff}, $\calF'(d)=h(0)$ where 
\be
-2\kappa q\;h'' + V h = -R
\ee
and where $V$ and $R$ are defined in equations \eqref{V-def} and \eqref{R-def}.  Since $\mu=0$,
\be
R = (q+2)2\eta^2d^{-2q}d^{-q-3}\psi_d^{-q-1}+(q-2)d^{q-3}\kappa g_\pm(\psi_d)
\ee
on $I_\pm$.  Since $g_\pm(\psi_d)\ge 0$ and $g_-(\psi_d)>0$ we conclude that
\be
R\ge 0,\qquad R\not\equiv 0.
\ee
Since $V\ge 0$ and $V\not\equiv 0$, the strong maximum principle (\cite{epde} Theorem 9.6)
then implies that $h<0$ on $S^1$. In particular, $\calF'(d)= h(0) < 0$.
\end{proof}

Corollaries \ref{cor:near-cmc-exist} and \ref{cor:no-go}, together
with Propositions \ref{prop:near-cmc-unique} and \ref{prop:reduce-to-F} imply
Theorem \ref{thm:near-cmc} -- in the near-CMC regime $\abs{\Ta-\gamma_N}>2$ 
there exists a solution of \eqref{CTS-reduced-2}
if and only if the TT-tensor is not identically zero.  If $\mu=0$ the solution
is unique.  Although we have not determined uniqueness if $\mu\neq0$, we note
that Proposition \ref{prop:near-cmc-unique} is the first uniqueness result
for the conformal method that does not make use of a bound for $|\nabla\tau|$.

\subsubsection{Proof of Theorem \ref{thm:exceptional} (Exceptional Case: $\Ta=\gamma_N$)}
The value $\Ta=\gamma_N$ is special. We have a partial result
that is parallel to the exceptional Case 4 of Theorem \ref{thm:cmc}.
\begin{lemma}\label{lem:exceptional}
Suppose $\Ta=\gamma_N$. If $\mu=\eta=0$ then $\calF(d)=1$ for all $d>0$ and hence there is
a one-parameter family of solutions.  On the other hand, if $\mu=0$ but $\eta\neq 0$, then
there are no solutions.
\end{lemma}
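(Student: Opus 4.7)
\textbf{Proof plan for Lemma \ref{lem:exceptional}.} The plan is to evaluate equation \eqref{model-lich} at the constant function $\psi \equiv 1$ under the hypotheses $\Ta=\gamma_N$, $\mu=0$, and then read off both conclusions from Proposition \ref{prop:lich-low-reg} and Proposition \ref{prop:reduce-to-F}.

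When additionally $\eta = 0$, setting $\Ta = \gamma_N$ and $\mu = 0$ causes the two nonlinear terms in \eqref{model-lich}, namely $-\kappa(\gamma_N+\Tbeta)^2\psi_d^{-q-1}$ and $+\kappa(\gamma_N+\Tbeta)^2\psi_d^{q-1}$, to coincide at $\psi_d=1$, while the derivative term also vanishes. Hence $\psi\equiv 1$ is a positive solution of \eqref{model-lich}, and by the uniqueness clause of Proposition \ref{prop:lich-low-reg} Part 1 it is the only one. Therefore $\calF(d)=\psi_d(0)=1$ for every $d>0$, and Proposition \ref{prop:reduce-to-F} produces one solution of \eqref{CTS-reduced-2} for each $d$, yielding the claimed one-parameter family.

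When $\eta\neq 0$ (still with $\mu=0$, $\Ta=\gamma_N$), the same evaluation at $\psi\equiv 1$ now leaves the strictly negative residual $-2\eta^2 d^{-2q}$. Thus $\psi\equiv 1$ is a strict subsolution of \eqref{model-lich}, so Proposition \ref{prop:lich-low-reg} Part 2 gives $\psi_d\ge 1$ pointwise on $S^1$. It remains to upgrade this to the strict inequality $\psi_d(0)>1$; that would force $\calF(d)>1$ for every $d>0$ and, via Proposition \ref{prop:reduce-to-F}, rule out solutions of \eqref{CTS-reduced-2}.

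The main obstacle is precisely this strict inequality, since the classical strong maximum principle is not directly available at the coefficient jump at $x=0$. I would handle it by contradiction: if $\psi_d(0)=1$, then $0$ is a global minimum, and the embedding $W^{2,\infty}(S^1)\hookrightarrow C^1(S^1)$ gives $\psi_d'(0)=0$. On $I_+=(0,\pi)$ the coefficients of \eqref{model-lich} are constant, so $\psi_d$ is smooth on $I_+$ and the equation holds classically up to $x=0^+$. At any point of $I_+$ where $\psi_d=1$, the two nonlinear terms cancel and the equation forces $\psi_d'' = -\eta^2 d^{-2q}/(\kappa q\, d^{-2q/n}) < 0$; by continuity this strict negativity persists in a right neighborhood of $0$. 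Integrating from $0$ then gives $\psi_d'(x)<\psi_d'(0)=0$ and hence $\psi_d(x)<1$ for small $x>0$, contradicting $\psi_d\ge 1$. This contradiction completes the second part.
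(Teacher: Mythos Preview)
Your argument is correct. For the first part ($\mu=\eta=0$) you match the paper exactly: both observe that $\psi_d\equiv 1$ solves \eqref{model-lich} and invoke uniqueness. For the second part ($\eta\neq 0$) you and the paper both start by noting that the constant $1$ is a subsolution, but you diverge on how to obtain the strict inequality $\calF(d)>1$. The paper simply observes that $1+\epsilon$ is \emph{also} a subsolution for $\epsilon>0$ sufficiently small (depending on $d$), which immediately gives $\psi_d\ge 1+\epsilon>1$ everywhere via Proposition~\ref{prop:lich-low-reg} Part~2. Your route---assuming $\psi_d(0)=1$, deducing $\psi_d'(0)=0$ from $C^1$ regularity, then using the one-sided ODE on $I_+$ to force $\psi_d''<0$ near $0^+$ and derive a contradiction with $\psi_d\ge 1$---is valid but more laborious, and the phrasing ``at any point of $I_+$ where $\psi_d=1$'' is slightly misleading since the relevant computation is really the one-sided limit of $\psi_d''$ at $0^+$. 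The paper's perturbation trick buys you the strict inequality in one line and sidesteps any concern about the maximum principle at the coefficient discontinuity.
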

\begin{proof}
If $\Ta=\gamma_N$ and $\mu=\eta=0$ then the unique solution of \eqref{model-lich} is clearly $\psi_d=1$
(for any $d$).  Hence $\phi$ solves \eqref{heaviside-const} if and only if $\phi$ is a positive constant.

On the other hand, suppose that $\mu=0$ and $\eta\neq 0$. Then the constant $1$ is evidently a 
subsolution of \eqref{model-lich}, as is $1+\epsilon$ for $\epsilon$ sufficiently small. 
Hence $\calF(d)>1$ for all $d>0$.
\end{proof}

Theorem \ref{thm:exceptional} follows from Lemma \ref{lem:exceptional} and 
Proposition \ref{prop:reduce-to-F}.

\subsubsection{Proof of Theorem \ref{thm:small-TT} (Small-TT Results)}
In this section we wish to show that solutions of $\calF(d)=1$ 
exist for small, nonzero, transverse-traceless tensors (i.e. if $\mu$ and $\eta$
are small but not both zero). From Lemma \ref{lem:exceptional} we know that if 
$\Ta=\gamma_N$, then
there are no solutions of $\calF(d)=1$ when $\mu=0$ and $\eta\neq0$.  So
we cannot expect to find small-TT solutions if $\Ta=\gamma_N$.
We show here that if $\gamma_N=0$, then this is the only obstacle.
We also obtain a partial result for $\gamma_N\neq 0$, showing small-TT solutions exist if $\abs{\Ta}>\abs{\gamma_N}$.

Recall that $\hatF(d)=\hatpsi{d}(0)$ where $\hatpsi{d}$ is defined analogously to $\psi_d$, but using $\mu=\eta=0$. We will establish the following facts:
\begin{enumerate}
\item If $\abs{\gamma_N}<\abs{\Ta}$, then $\lim_{d\ra 0^+} \hatF(d) < 1$.
\item For any fixed $d>0$, $\calF(d)$ approaches $\hatF(d)$ as $\mu$ and $\eta$
approach zero.
\end{enumerate}
So if $\abs{\gamma_N}<\abs{\Ta}$, and if $\mu$ and $\eta$ are sufficiently small, there is
a $d$ such that $\calF(d)<1$.  If in addition $\mu\neq0$ or $\eta\neq 0$, then Lemma \ref{lem:upper-is-enough} implies that there is at least one solution of $\calF(d)=1$.

If $\mu\neq 0$ or $\eta\neq 0$, Lemma \ref{lem:F-bounds} shows that $\calF(d)\ra\infty$ as $d\ra0$,
but that this singularity is not present if $\eta=\mu=0$.
In this case, the term $-\dexp\Lap \hatpsi{d}$ dominates equation \eqref{model-lich}
as $d\ra 0$ and we expect the solutions to be nearly constant.  The following lemma
computes the value of this constant, which is less than one if $\abs{\gamma_N}<\abs{\Ta}$.
\begin{lemma}\label{lem:psi-to-const}
Let
\be
\Psi_{N,\Ta} = \left[\frac{1+\gamma_N^2}{1+\Ta^2}\right]^{\frac{1}{2q}}.
\ee
If $\abs{\Ta}\neq 1$ then
\be
\hatpsi{d} \xrightarrow[d\ra 0]{} \Psi_{N,\Ta}
\ee
in $W^{2,p}(S^1)$ and hence uniformly on $S^1$.  In particular 
\be
\lim_{d\ra 0^+} \hatF(d) = \Psi_{N,\Ta}.
\ee
\end{lemma}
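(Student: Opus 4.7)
The approach will exploit the fact that with $\mu = \eta = 0$, equation \eqref{model-lich} is a singularly perturbed ODE in which the coefficient $d^{-2q/n}$ multiplying $\hatpsid''$ blows up as $d \to 0^+$. This forces $\hatpsid$ to converge to a constant, which I will identify by integrating the equation around $S^1$.

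First I would secure uniform bounds on $\hatpsid$. Since $|\Ta| \neq 1$, Lemma \ref{lem:Minfty} applied with $\mu = \eta = 0$ gives $0 < m_\infty \le \hatpsid \le M_\infty$ for all $d > 0$. Solving equation \eqref{model-lich} for $\hatpsid''$ then yields
\[
\hatpsid'' = \frac{d^{2q/n}}{2q}\Bigl[-(\gamma_N+\Tbeta)^2\,\hatpsid^{-q-1} + (\Ta+\Tbeta)^2\,\hatpsid^{q-1}\Bigr],
\]
and the uniform $L^\infty$ control shows $\|\hatpsid''\|_{L^\infty(S^1)} = O(d^{2q/n}) \to 0$. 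By periodicity, $\int_{S^1}\hatpsid'\,dx = 0$, so applying the Poincar\'e inequality on $S^1$ to the mean-zero function $\hatpsid'$ gives $\|\hatpsid'\|_{L^\infty} \le C\|\hatpsid''\|_{L^\infty} \to 0$. Thus $\{\hatpsid\}$ is bounded in $W^{2,p}(S^1)$, and any subsequential uniform limit must be a constant $c$ with $m_\infty \le c \le M_\infty$.

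To pin down $c$, I would integrate \eqref{model-lich} around $S^1$. The term $\int_{S^1}\hatpsid''\,dx$ vanishes by periodicity, leaving
\[
\int_{S^1}(\Ta+\Tbeta)^2\,\hatpsid^{q-1}\,dx = \int_{S^1}(\gamma_N+\Tbeta)^2\,\hatpsid^{-q-1}\,dx.
\]
Direct evaluation gives $\int_{S^1}(\Ta+\Tbeta)^2\,dx = 2\pi(1+\Ta^2)$ and $\int_{S^1}(\gamma_N+\Tbeta)^2\,dx = 2\pi(1+\gamma_N^2)$. Passing to the uniform limit along any sequence $d_k \to 0^+$ yields $c^{q-1}(1+\Ta^2) = c^{-q-1}(1+\gamma_N^2)$, i.e.\ $c^{2q} = (1+\gamma_N^2)/(1+\Ta^2)$, so $c = \Psi_{N,\Ta}$. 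Since every subsequential limit equals the same constant, the full family $\hatpsid$ converges to $\Psi_{N,\Ta}$ in $W^{2,p}(S^1)$ (and in $C(S^1)$ via $W^{2,p}(S^1)\hookrightarrow C(S^1)$), and evaluating at $x=0$ gives $\hatF(d) \to \Psi_{N,\Ta}$.

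The step most prone to subtlety is identifying $c$: one must verify that the integrated identity really singles out a unique positive value, which is immediate since $c \mapsto c^{2q}$ is strictly monotone on $(0,\infty)$ and the $L^\infty$ lower bound guarantees $c \ge m_\infty > 0$. The hypothesis $|\Ta| \neq 1$ enters only to invoke Lemma \ref{lem:Minfty} for the $L^\infty$ bounds; the limiting formula itself is well-defined for all $\Ta$.
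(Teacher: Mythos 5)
Your proposal is correct and follows essentially the same route as the paper: uniform bounds from Lemma \ref{lem:Minfty}, smallness of $\hatpsi{d}''$ forced by the singular coefficient $d^{-2q/n}$, reduction to a constant limit (the paper does this by splitting off the average $A_d$ via Poincar\'e, you by bounding $\hatpsi{d}'$, which is the same idea), and identification of the constant by integrating the equation over $S^1$ and computing $\int_{S^1}(\Ta+\Tbeta)^2 = 2\pi(1+\Ta^2)$ and $\int_{S^1}(\gamma_N+\Tbeta)^2 = 2\pi(1+\gamma_N^2)$, followed by a subsequence/uniqueness-of-limit argument. No gaps worth noting.
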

\begin{proof}
Recall that $\hatpsi{d}$ is the solution of
\bel{lich-hat-2}
-2d^{-\frac{nq}{2}} \kappa q\;  \hatpsi{d}'' - \kappa(\gamma_N+\Tbeta)^2 \hatpsi{d}^{-q-1} +
\kappa(\Ta+\Tbeta)^2\hatpsi{d}^{q-1} =0.
\ee
From Lemma \ref{lem:Minfty} (since $\mu=\eta=0$),
$0<m_\infty \le \hatpsid \le M_\infty$ 
for all $d$, and consequently there exists a positive constant 
$C$ such that
\be
\abs{- \frac{1}{2q}(\gamma_N+\Tbeta)^2 \hatpsid^{-q-1} + \frac{1}{2q}(\Ta+\Tbeta)^2\hatpsid^{q-1}} \le C
\ee
for all $d>0$.  
Since $\hatpsid$ satisfies \eqref{lich-hat-2} it follows that
\be
||\hatpsid''||_{L^p}^p \le 2\pi C d^{q-2}
\ee
for all $d>0$.  

The Poincar\'e inequality implies that there is a constant $c_p$ such that if $u\in W^{2,p}(S^1)$,
\be
||u||_{W^{2,p}(S^1)} \le c_p\left(||u''||_{L^p} + \left|\int_{S^1} u \right|\right).
\ee
Let $A_d=\frac{1}{2\pi}\int_{S^1} \hatpsi{d}$, and let $\epsilon_d = \hatpsi{d} - A_d$. 
Then
\be
||\epsilon_d||_{W^{2,p}(S^1)} \le c_p\left(||\epsilon_d''||_{L^p} + \left|\int_{S^1} \epsilon_d \right|\right)
= c_p||\hatpsid''||_{L^p(S^1)}  \le (2\pi c_p C\; d^{2q/n})^{1/p} \ra 0
\ee
as $d\ra 0$. 

We will now show that $A_d\ra\Psi_{N,\Ta}$ as $d\ra 0$.  Since $\hatpsid = A_d+\epsilon_d$,
it then follows that that $\hatpsid \ra \Psi_{N,\Ta}$ in $W^{2,p}(S^1)$ as $d\ra 0$.

Let $(d_k)$ be any positive sequence converging to zero.  Since 
\be
m\le A_{d_k} \le M,
\ee
some subsequence $\{A_{d_{k_l}}\}$ converges to a constant $A\in [m,M]$. Moreover,
$\hatpsi{d_{k_l}}\ra A$ uniformly.

Then $\hatpsi{d_{k_l}} = A_{d_{k_l}}+\epsilon_{d_k} \ra A$ in $W^{2,p}(S^1)$.  
For all $d$, $\int_{S^1} \hatpsi{d}'' = 0$ and hence
\be
\int_{S^1} \kappa(\gamma_N+\Tbeta)^2 \hatpsi{d}^{-q-1}-\kappa(\Ta+\Tbeta)^2\hatpsi{d}^{q-1}=0.
\ee
Using the uniform convergence of $\hatpsi{d_{k_l}}$ to $A$ and the fact that
$0<m\le \hatpsid\le M$ for all $d$  we conclude that
\be
A^{-q-1}\int_{S^1} \kappa(\gamma_N+\Tbeta)^2 - A^{q-1}\int_{S^1} \kappa(\Ta+\Tbeta)^2 =0.
\ee
Now 
\be
\int_{S^1} (\gamma_N+\Tbeta)^2 = \pi\left[ (\gamma_N+1)^2+(\gamma_N-1)^2\right] = 2\pi\left[ 1+\gamma_N^2\right].
\ee
Similarly,
\be
\int_{S^1} (\Ta+\Tbeta)^2 = 2\pi\left[ 1+\Ta^2\right].
\ee
Hence
\be
A = \left[\frac{1+\gamma_N^2}{1+\Ta^2}\right]^{\frac{1}{2q}} = \Psi_{N,\Ta}.
\ee
The uniqueness of the limit $A$ now implies that $A_d \ra \Psi_{N,\Ta}$ as $d\ra 0$.  
\end{proof}

\begin{proposition}\label{prop:small-TT}
Suppose $\abs{\Ta}>\abs{\gamma_N}$ and $\abs{\Ta}\neq 1$. Then there exists at least one solution of $\calF(d)=1$ if
\begin{enumerate}
\item $\eta\neq 0$ or $\mu\neq 0$ and
\item $\abs{\eta}$ and $\abs{\mu}$ are sufficiently small.
\end{enumerate}
\end{proposition}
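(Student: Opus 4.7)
The plan is to exploit the fact, already established in Lemma~\ref{lem:psi-to-const}, that as $d\to 0^+$ the function $\hatF$ (corresponding to $\mu=\eta=0$) approaches the explicit constant
\[
\Psi_{N,\Ta} = \left[\frac{1+\gamma_N^2}{1+\Ta^2}\right]^{\frac{1}{2q}}.
\]
Under the hypothesis $\abs{\Ta} > \abs{\gamma_N}$ we have $\Ta^2 > \gamma_N^2$, so $\Psi_{N,\Ta} < 1$ strictly. In particular one can choose (and then fix) a value $d_0>0$ with $\hatF(d_0) < 1$.

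The next step is a continuity-in-parameters argument at this fixed $d_0$. Writing equation \eqref{model-lich} in the generic Lichnerowicz form \eqref{lich-generic}, the coefficients $\alpha$ and $\beta$ depend continuously on $(\mu,\eta) \in \Reals^2$ as elements of $L^\infty(S^1)$ for fixed $d = d_0$. Proposition~\ref{prop:lich-low-reg}, Part~4 then gives continuous dependence of the solution $\psi_{d_0}$ on $(\mu,\eta)$ in $W^{2,p}_+(S^1)$, and hence uniformly on $S^1$ by the Sobolev embedding. Evaluating at $0$ shows that $\calF(d_0) \to \hatF(d_0) < 1$ as $(\mu,\eta) \to (0,0)$. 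Thus for all sufficiently small $(\mu,\eta)$ we still have $\calF(d_0) < 1$.

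Finally, provided at least one of $\mu,\eta$ is nonzero, Lemma~\ref{lem:F-bounds} implies $\calF(d) \to \infty$ as $d\to 0^+$, so in particular $\calF(d) > 1$ for some small $d$. Combining this with $\calF(d_0) \le 1$ and the continuity of $\calF$ (which is the content of Lemma~\ref{lem:upper-is-enough} via the Intermediate Value Theorem), we obtain a solution of $\calF(d) = 1$ for these small but nonzero transverse-traceless data.

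The conceptual content is modest: all the machinery is already in place. The only mildly delicate point is ensuring that the continuity in $(\mu,\eta)$ is uniform enough to preserve the strict inequality $\calF(d_0) < 1$ for a neighborhood of $(0,0)$, which is precisely what Proposition~\ref{prop:lich-low-reg} Part~4 delivers. There is no need to follow $d_0$ as the parameters shrink, since the singular behaviour of $\calF$ near $d=0$ for any nonzero TT-tensor automatically supplies the companion bound needed to invoke Lemma~\ref{lem:upper-is-enough}.
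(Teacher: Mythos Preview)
Your proof is correct and follows essentially the same approach as the paper: use Lemma~\ref{lem:psi-to-const} to find a fixed $d_0$ with $\hatF(d_0)<1$, invoke Proposition~\ref{prop:lich-low-reg} Part~4 to pass this inequality to $\calF(d_0)$ for small $(\mu,\eta)$, and conclude via Lemma~\ref{lem:upper-is-enough}. The extra remarks you include about Sobolev embedding and Lemma~\ref{lem:F-bounds} make explicit what the paper leaves implicit, but the argument is the same.
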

\begin{proof}
Since $\abs{\Ta}>\abs{\gamma_N}$ it follows that the constant $\Psi_{N,\Ta}$ from Lemma
\ref{lem:psi-to-const} is less than 1.  In particular, $\hatF(d)<1$ for $d$ sufficiently small.
Fix a particular value of $d$ such that this holds.  By Proposition \ref{prop:lich-low-reg} Part
4 it follows that $\calF(d)\ra\hatF(d)$ as $(\eta,\mu)\ra(0,0)$. In particular, $\hatF(d)<1$ if $\mu$
and $\eta$ are sufficiently small.  The existence result now follows from 
Lemma \ref{lem:upper-is-enough}.
\end{proof}

\subsubsection{Proof of Theorem \ref{thm:a-gt-1} (Non-vanishing Mean Curvature)}
\label{sec:nonvanish}

From the definition of the mean curvatures $\tau_t$, we see that $\tau_t$ has
constant sign if $\abs{t}>1$, but changes sign if $\abs{t}\le 1$.  In this
section we wish to show that there are solutions of $\calF(d)=1$ so long as 
$\tau_t$ has constant sign.  

Recall that our near-CMC existence result Corollary \ref{cor:near-cmc-exist}
was obtained by showing that $\psi_d(x)<1$ for all
$x\in S^1$ if $d$ is  sufficiently large and $\abs{\Ta-\gamma_N}>2$.  We only need to show, however, 
that $\calF(d)=\psi_d(0)<1$ if $d$ is sufficiently large. Section \ref{sec:asympt} contains an asymptotic analysis
that allows us to compute the exact value of  $\lim_{d\ra\infty} \calF(d)$ (as well
as the speed of the convergence).
Assuming the results of Section \ref{sec:asympt} for now, we show in this section that:
\begin{enumerate}
\item $\lim_{d\ra\infty} \hatF(d) < 1$ if $\abs{t}>1$.
\item $\lim_{d\ra\infty} \hatF(d) = 1$ if $\abs{t}<1$.
\item $\lim_{d\ra\infty} \calF(d) = \lim_{d\ra\infty} \hatF(d)$ if $\abs{\Ta}\neq 1$.
\end{enumerate}
In particular, if $\abs{t}>1$, then $\calF(d)<1$ for some $d>0$.  If $\mu\neq 0$ or $\eta\neq0$,
then Lemma \ref{lem:upper-is-enough} then implies that there is a solution of $\calF(d)=1$.

\begin{definition}\label{def:rapid}
We say that $f(x)\ra L$ {\bf rapidly} at infinity if 
\be
\lim_{x\ra\infty} \abs{f(x)-L}x^n = 0
\ee
for all $n\in \Nats$.  

We say that $f(x)\ra L$ {\bf rapidly} at 0 if 
\be
\lim_{x\ra0} \abs{f(x)-L}x^{-n} = 0
\ee
for all $n\in \Nats$.  
\end{definition}

Recall that $\hatF(d)=\hatpsi{d}(0)$ where $\hatpsi{d}$ is defined analogously to $\psi_d$,
but with $\eta=\mu=0$.
\begin{proposition}\label{prop:lich-limit-hat}
Assume that $\abs{\Ta}\neq 1$.
Then 
\be
\lim_{d\ra\infty} \hatpsi{d}(0) = 
\begin{cases} 
1 & \abs{\Ta}<1 \\
\abs{\Ta}^{-\frac{1}{q}} & \abs{\Ta}> 1,
\end{cases}
\ee
and this convergence is rapid.
\end{proposition}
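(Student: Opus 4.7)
The plan is to recognize \eqref{lich-hat-2} as a singular perturbation problem in $d^{-1}$: the coefficient $d^{-2q/n}=d^{-(q-2)}$ of $\hatpsi{d}''$ vanishes as $d\to\infty$, so away from the jumps of $\Tbeta$ at $\{0,\pi\}$ the solution should approach the pointwise algebraic root
\[
\psi^{\pm}_{\infty} \;:=\; \left(\frac{1\pm\gamma_N}{|1\pm\Ta|}\right)^{1/q}
\]
of the nonlinear terms on $I_\pm$. The value $\hatpsi{d}(0)$ is therefore selected by how the solution interpolates between $\psi^-_{\infty}$ and $\psi^+_{\infty}$ across a thin boundary layer at the origin, and the proof will compute that interpolation value explicitly.

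To resolve the boundary layer I would rescale: set $\delta_d := d^{-(q-2)/2}$ and $\Psi_d(y) := \hatpsi{d}(\delta_d y)$. Since $2q/n=q-2$, the small coefficient is exactly cancelled by $\delta_d^{-2}$, and on $|y|<\pi/\delta_d$ the rescaled $\Psi_d$ satisfies
\[
-2\kappa q\,\Psi_d'' - \kappa(\gamma_N+\operatorname{sgn}(y))^2\Psi_d^{-q-1} + \kappa(\Ta+\operatorname{sgn}(y))^2\Psi_d^{q-1} = 0,
\]
with a second jump of $\operatorname{sgn}$ at $|y|=\pi/\delta_d$ coming from periodicity on $S^1$. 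The uniform bounds $m_\infty\le\Psi_d\le M_\infty$ from Lemma \ref{lem:Minfty} combined with the elliptic estimates of Proposition \ref{prop:lich-low-reg} give $W^{2,p}_{\mathrm{loc}}$ bounds independent of $d$, so by Arzel\`a--Ascoli some subsequence converges in $C^1_{\mathrm{loc}}(\Reals)$ to a bounded positive function $\Psi$ solving the same equation on all of $\Reals$. The conservative structure on each half-line, together with boundedness, forces $\Psi(\pm\infty)=\psi^{\pm}_{\infty}$.

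The limit $\Psi(0)$ is then pinned down by first-integral matching. On $\pm y>0$ the coefficients are constant, so
\[
E_\pm(\Psi,\Psi') \;:=\; -\kappa q\,(\Psi')^2 + \tfrac{\kappa(1\pm\gamma_N)^2}{q}\Psi^{-q} + \tfrac{\kappa(1\pm\Ta)^2}{q}\Psi^{q}
\]
is conserved, and the asymptotic condition yields $E_\pm = C_\pm := \tfrac{2\kappa(1\pm\gamma_N)|1\pm\Ta|}{q}$. Using continuity of $\Psi$ and $\Psi'$ at $y=0$ and subtracting $E_+=E_-$ eliminates $(\Psi'(0))^2$ and leaves a quadratic in $z:=\Psi(0)^q$,
\[
\Ta\,z^2 - K\,z + \gamma_N = 0, \qquad K = \begin{cases}\Ta+\gamma_N & |\Ta|<1,\\ \operatorname{sgn}(\Ta)(1+\gamma_N\Ta) & |\Ta|>1,\end{cases}
\]
which factors as $(z-1)(\Ta z-\gamma_N)=0$ when $|\Ta|<1$ and as $(\Ta z-\operatorname{sgn}(\Ta))(z-\operatorname{sgn}(\Ta)\gamma_N)=0$ when $|\Ta|>1$. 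A short cross-multiplication argument (comparing $\gamma_N/\Ta$ and $\operatorname{sgn}(\Ta)\gamma_N$ to the endpoint values $(1\pm\gamma_N)/|1\pm\Ta|$) shows that the primary root $z_\star=1$ or $z_\star=|\Ta|^{-1}$ always lies in $[m_\infty^q,M_\infty^q]$, whereas the secondary root either coincides with $z_\star$ or falls outside that interval. The uniform barrier $m_\infty\le\Psi(0)\le M_\infty$, inherited from $\Psi_d$, therefore forces $\Psi(0)=1$ if $|\Ta|<1$ and $\Psi(0)=|\Ta|^{-1/q}$ if $|\Ta|>1$. Because this value is independent of the subsequence, the full sequence $\hatpsi{d}(0)=\Psi_d(0)$ converges to $\Psi(0)$.

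Rapidity of convergence follows from the hyperbolicity of the asymptotic fixed points: the linearization of the limiting ODE at $(\psi^{\pm}_{\infty},0)$ has eigenvalues $\pm\lambda_\pm$ with $\lambda_\pm>0$, so the heteroclinic $\Psi$ satisfies $|\Psi(y)-\psi^{\pm}_{\infty}|=O(e^{-\lambda_\pm|y|})$ as $y\to\pm\infty$. The sole discrepancy between $\Psi_d$ and $\Psi$ originates at the far-away sign jump $|y|=\pi/\delta_d=\pi d^{(q-2)/2}$; propagating that perturbation back along the stable/unstable manifolds of the saddles into $y=0$ yields $|\Psi_d(0)-\Psi(0)|=O(e^{-\lambda\pi d^{(q-2)/2}})$, which beats any polynomial in $d^{-1}$ and establishes rapidity in the sense of Definition \ref{def:rapid}. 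The main obstacle of the plan is the root selection in the matching quadratic: this is exactly where the dichotomy $|\Ta|\lessgtr 1$ enters, through the absolute values $|1\pm\Ta|$ in the asymptotic energies $C_\pm$, and one must verify carefully with the uniform barriers that exactly one root is admissible in the non-degenerate regime.
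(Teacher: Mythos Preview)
Your approach is correct in outline and genuinely different from the paper's.  Both arguments rest on the same blow-up picture: rescale so that the coefficient of $\psi''$ becomes order one, obtain a limiting heteroclinic problem on $\Reals$ with piecewise-constant coefficients, and read off the value at the jump by first-integral matching.  The difference lies in how the limit is reached and how rapidity is obtained.

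The paper proceeds \emph{constructively}: it first builds the heteroclinic profile $v$ on $\Reals$ explicitly (Proposition~\ref{prop:lich-R}), using the signed first-order reduction $U'=\sqrt{2/q}\,[U^{-q/2}-U^{q/2}]$ of Proposition~\ref{prop:Utemplate}.  Because that reduction already encodes which branch of the separatrix one is on, matching $v'(0^+)=v'(0^-)$ gives a \emph{linear} equation in $c^q$ with the unique solution $c^q=(|\alpha_+|+|\alpha_-|)/(|\beta_+|+|\beta_-|)$; no root selection is needed.  The paper then forms approximate solutions $w_\epsilon$ from $v$ with a tiny $C^1$ correction at the far boundary, shows the residual $\calN_\epsilon(w_\epsilon)$ is rapidly small (Lemma~\ref{lem:err-to-zero}), bounds $\|\calN_\epsilon'[w_\epsilon]^{-1}\|$ polynomially in $\epsilon^{-1}$ (Proposition~\ref{prop:inv}), and invokes Newton's method.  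Convergence \emph{and} rapidity fall out in one stroke.

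Your route is the soft counterpart: extract a subsequential $C^1_{\rm loc}$ limit $\Psi$ by compactness, identify $\Psi(\pm\infty)$ from boundedness and the saddle structure, and compute $\Psi(0)$ by subtracting the two energy identities.  Squaring the first integral costs you the sign information, which is why you land on a quadratic in $z=\Psi(0)^q$ with a spurious root; your exclusion of that root via $m_\infty\le\Psi(0)\le M_\infty$ is correct (the spurious root corresponds to $\Psi'(0^+)=-\Psi'(0^-)$, forcing $\Psi(0)\notin(m_\infty,M_\infty)$), but the paper's signed matching sidesteps this entirely.  The part of your plan that is least developed is rapidity: compactness alone gives no rate, and ``propagating the perturbation at $|y|=\pi/\delta_d$ back along the stable manifold'' is a correct heuristic but not yet a proof.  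Making it rigorous requires essentially the same exponential estimates the paper obtains for free from the Newton iteration, so you would end up reproving Lemma~\ref{lem:err-to-zero} and Proposition~\ref{prop:inv} in some form.  In short, your argument trades the paper's explicit construction for a compactness step, at the price of an extra root-selection check and a rapidity argument that still needs to be filled in.
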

\begin{proof}
Assuming the results of Section \ref{sec:asympt}, it follows 
from Theorem \ref{thm:lich-limit}
applied to equation \eqref{lich-hat-2} (taking $\epsilon=\sqrt{2\kappa q} d^{-\frac{q}{n}}$) that
\be
\lim_{d\ra\infty} \psi_d(0) = 
\left[ 
\frac{ \kappa\abs{\gamma_N+1}+\kappa\abs{\gamma_N-1}} {\kappa\abs{\Ta+1}+\kappa\abs{\Ta-1}}
\right]^{\frac{1}{q}}
\ee
and this convergence is rapid. Note that since $\abs{\gamma_N} < 1$, 
$\abs{1+\gamma_N}+\abs{1-\gamma_N} = 2$.  
If $\abs{\Ta}<1$ then $\abs{1+\Ta}+\abs{1-\Ta} = 2$, otherwise
$\abs{1+\Ta}+\abs{1-\Ta} = 2\abs{\Ta}$.  The result now follows.
\end{proof}

We would like to establish a corresponding limit without the hypothesis $\eta=\mu=0$.
For large values of $d$ the contribution of the terms involving $\eta$ and $\mu$
in equation \eqref{model-lich} are small.
So we expect that $\hatpsi{d}$ should be a good approximation for $\psi_d$, and
we expect to obtain the same limit.
To make this idea precise, we will show that small perturbations of $\hatpsi{d}$ are sub-
and supersolutions of the equation for $\psi_d$. 

Recall from Lemma \ref{lem:Minfty} that $0<m_\infty\le\hatpsi{d}\le M_\infty$
for all $d>0$.  We define
$\calG_d:[-m_\infty/2,M_\infty] \ra L^\infty$ by 
\be
\calG_d(K) = \calN_d(\hatpsid+K).
\ee
where $\calN_d: W^{2,p}_+(S^1)\ra L^p(S^1)$ is the nonlinear Lichnerowicz operator 
\be
\calN_d(w) = 
-2\kappa q \dexp  w'' - 2\eta^2 d^{-2q} w^{-q-1}- \kappa[\mu d^{-q}+\Tbeta+\gamma_N  ]^2 w^{q-1}.
\ee
So $\hatpsid+K$ is a sub- or supersolution of \eqref{model-lich} if and only if $\calG_d(K) \le 0$
or $\ge 0$ almost everywhere.  

Using the fact that $\hatpsi{d}$ solves equation \eqref{lich-hat-2}
we can write
\be
\calG_d(K) = \calD(K)+\calE(K)
\ee
where
\bel{calDE}
\begin{aligned}
\calD(K) &= (\Ta+\Tbeta)^2\left[ (\hatpsid+K)^{q+1}-\hatpsid^{q+1}\right] \\
\calE(K) &= (\gamma_N+\Tbeta)^2\hatpsid^{-q-1}-\left[2\eta^2d^{-2q}+(\mu d^{-q} + \gamma_N+\Tbeta)^2)\right]^2(\hatpsid+K)^{-q-1}.
\end{aligned}
\ee

\begin{lemma}\label{lem:FTC}
There exist positive constants $D_-$, $D_+$, $E_-$ and $E_+$ such that
\bel{ineq:E}
\begin{alignedat}{2}
E_- K & \;\le\; (\gamma_N+\Tbeta)^2 \left[ \hatpsid^{-q-1}-(\hatpsid+K)^{-q-1}\right] &\;\le\; E_+ K &\quad\;\; K\ge 0\\
E_+ K & \;\le\; (\gamma_N+\Tbeta)^2 \left[ \hatpsid^{-q-1}-(\hatpsid+K)^{-q-1}\right] &\;\le\; E_- K &\quad\;\; K\le 0.
\end{alignedat}
\ee
and
\bel{ineq:D}
\begin{alignedat}{2}
D_- K &\;\le\; \calD(K)&\;\le\; D_+ K &\quad\;\; K \ge 0\\
D_+ K &\;\le\;  \calD(K) &\;\le\; D_- K &\quad\;\; K \le 0\\
\end{alignedat}
\ee
for all $d>1$ and all $K\in [-m_\infty/2,M_\infty]$.
\end{lemma}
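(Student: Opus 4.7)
The plan is to derive both sets of inequalities from a single Fundamental Theorem of Calculus argument, applied pointwise on $S^1$ to the smooth scalar functions $z\mapsto z^{-q-1}$ and $z\mapsto z^{q+1}$. First I would invoke the uniform bounds $m_\infty \le \hatpsi{d} \le M_\infty$ supplied by Lemma \ref{lem:Minfty} (with $\mu=\eta=0$), which hold for every $d>0$. Combined with the admissible range $K\in[-m_\infty/2,M_\infty]$, these guarantee that $\hatpsi{d}(x)+sK\in[m_\infty/2,\,2M_\infty]$ for every $x\in S^1$, $s\in[0,1]$ and $d>1$, confining the arguments of the two scalar functions to a compact subinterval of $(0,\infty)$ on which their derivatives are pinched between positive constants independent of $d$ and $K$.

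For \eqref{ineq:E} I would write
\[
\hatpsi{d}^{-q-1}-(\hatpsi{d}+K)^{-q-1} \;=\; (q+1)\,K\int_0^1 (\hatpsi{d}+sK)^{-q-2}\,ds,
\]
combine the resulting pointwise estimate $(2M_\infty)^{-q-2}\le (\hatpsi{d}+sK)^{-q-2}\le (m_\infty/2)^{-q-2}$ with the bound $(1-|\gamma_N|)^2\le(\gamma_N+\Tbeta)^2\le(1+|\gamma_N|)^2$ (positive below since $|\gamma_N|<1$), and read off constants $0<E_-\le E_+$ independent of $d$ and $K$. Distinguishing the sign of $K$ flips the roles of the upper and lower bounds and yields both lines of \eqref{ineq:E}. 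For \eqref{ineq:D} the same device gives
\[
\calD(K) \;=\; K\int_0^1 (q+1)(\Ta+\Tbeta)^2(\hatpsi{d}+sK)^{q}\,ds.
\]
The pointwise factor $(\Ta+\Tbeta)^2$ is bounded above by $(|\Ta|+1)^2$ and, under the standing hypothesis $|\Ta|\neq 1$ that governs the section in which Lemma \ref{lem:FTC} will be invoked, below by $\min\{(\Ta-1)^2,(\Ta+1)^2\}>0$; the factor $(\hatpsi{d}+sK)^{q}$ is pinched between $(m_\infty/2)^{q}$ and $(2M_\infty)^{q}$. This extracts constants $0<D_-\le D_+$ independent of $d>1$ and $K\in[-m_\infty/2,M_\infty]$, and \eqref{ineq:D} again follows by sign casework.

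The argument is essentially routine; the one thing to check carefully is that the constants can be chosen independently of $d$, and the potential obstacle to that uniformity — uniform two-sided control on $\hatpsi{d}$ — has already been resolved by Lemma \ref{lem:Minfty}. The restriction $d>1$ plays no essential role in the estimates above and is retained because it is inherited from the contexts in which Lemma \ref{lem:FTC} will subsequently be applied.
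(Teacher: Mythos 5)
Your proposal is correct and follows essentially the same route as the paper's proof: the integral (FTC) representation of the difference of powers, the pinching $\hatpsid+sK\in[m_\infty/2,2M_\infty]$ coming from Lemma \ref{lem:Minfty}, two-sided bounds on the coefficient $(\gamma_N+\Tbeta)^2$ (resp.\ $(\Ta+\Tbeta)^2$), and sign casework in $K$. The only difference is cosmetic: the paper writes out only \eqref{ineq:E} and declares \eqref{ineq:D} ``similar,'' whereas you treat both explicitly and correctly flag that positivity of $D_-$ rests on the standing hypothesis $\abs{\Ta}\neq 1$.
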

\begin{proof}
First consider the expression $f_A(h)=A^{-q-1}-(A+h)^{-q-1}$ for $A\in [m_\infty,M_\infty]$ and $h\in[-m_\infty/2,M_\infty]$.
Then
\be
f_A(h) = \int_0^1 (q+1) (A+th)^{-q-2}\;dt\; h.
\ee
If $h\ge 0$ then 
\be
(q+1)(2M_\infty)^{-q-2} h \le f_A(h) \le (q+1)(m_\infty/2)^{-q-2} h.
\ee
If $h\le 0$ then 
\be
(q+1)(m_\infty/2)^{-q-2} h \le f_A(h) \le (q+1)(2M_\infty)^{-q-2} h.
\ee
Inequalities \eqref{ineq:E} now follow letting $E_+ =\max[(\gamma_N-1)^2,(\gamma_N+1)^2] (q+1) (m_\infty/2)^{-q-2}$
and $E_-=\min[(\gamma_N-1)^2,(\gamma_N+1)^2] (q+1) (2M_\infty)^{-q-2}$.

The argument for inequality \eqref{ineq:D} is similar.
\end{proof}

\begin{proposition}\label{prop:lich-limit-same}
There exists a constant $c>0$ such that
\bel{limit-order}
|| \hatpsid-\psi_d||_\infty < c d^{-q}
\ee
for all $d$ sufficiently large.  In particular, 
\be
\lim_{d\ra\infty} \calF(0) = \lim_{d\ra\infty} \psi_d(0) = \lim_{d\ra\infty} \hatpsid(0).
\ee
\end{proposition}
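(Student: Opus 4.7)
The plan is to construct, for each sufficiently large $d$, explicit constant super- and subsolutions of the equation $\calN_d(w)=0$ of the form $\hatpsi{d}\pm cd^{-q}$ and then invoke the comparison principles of Proposition \ref{prop:lich-low-reg}, parts 2 and 3, to sandwich $\psi_d$ between them. By the discussion preceding \eqref{calDE}, $\hatpsi{d}+K$ is a supersolution (respectively a subsolution) of \eqref{model-lich} exactly when $\calG_d(K)\ge 0$ (respectively $\le 0$) almost everywhere, so the task reduces to estimating $\calG_d(\pm cd^{-q})$.

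Expanding $(\mu d^{-q}+\gamma_N+\Tbeta)^2=(\gamma_N+\Tbeta)^2+2\mu d^{-q}(\gamma_N+\Tbeta)+\mu^2 d^{-2q}$ in the definition \eqref{calDE} of $\calE(K)$ rewrites
\[
\calE(K) = \kappa(\gamma_N+\Tbeta)^2\bigl[\hatpsi{d}^{-q-1}-(\hatpsi{d}+K)^{-q-1}\bigr] + R_d(K),
\]
where
\[
R_d(K) = -\bigl[2\eta^2 d^{-2q}+2\kappa\mu d^{-q}(\gamma_N+\Tbeta)+\kappa\mu^2 d^{-2q}\bigr](\hatpsi{d}+K)^{-q-1}
\]
satisfies $||R_d(K)||_\infty\le C_1 d^{-q}$ uniformly in $K\in[-m_\infty/2,M_\infty]$ and $d\ge 1$; this uses the two-sided bound $0<m_\infty\le\hatpsi{d}\le M_\infty$ from Lemma \ref{lem:Minfty} to control $(\hatpsi{d}+K)^{-q-1}$ in $L^\infty$ independently of $d$ and $K$. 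Combining this with Lemma \ref{lem:FTC} applied both to $\calD(K)$ and to the leading term of $\calE(K)$ yields
\[
\calG_d(K)\ge (D_-+\kappa E_-)K-C_1 d^{-q}\quad\text{for }K\ge 0,
\]
and the reverse inequality $\calG_d(K)\le (D_-+\kappa E_-)K+C_1 d^{-q}$ for $K\le 0$.

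Fixing $c:=C_1/(D_-+\kappa E_-)+1$ and restricting to $d$ so large that $cd^{-q}\le m_\infty/2$, we obtain $\calG_d(cd^{-q})>0$ and $\calG_d(-cd^{-q})<0$, so that $\hatpsi{d}+cd^{-q}$ is a supersolution and $\hatpsi{d}-cd^{-q}$ is a subsolution of \eqref{model-lich}. Proposition \ref{prop:lich-low-reg} then forces $\hatpsi{d}-cd^{-q}\le\psi_d\le\hatpsi{d}+cd^{-q}$ on $S^1$, which is exactly \eqref{limit-order}; evaluating at $x=0$ and invoking Proposition \ref{prop:lich-limit-hat} identifies the limits in the second statement. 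The main delicate point is the uniform remainder estimate on $R_d$, for which one crucially uses the $d$-independent two-sided bounds on $\hatpsi{d}$ from Lemma \ref{lem:Minfty}; this is precisely why perturbing about the simpler auxiliary function $\hatpsi{d}$, rather than working directly with $\psi_d$, is the natural strategy.
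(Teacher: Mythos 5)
Your proposal is correct and follows essentially the same route as the paper: both decompose $\calG_d=\calD+\calE$, control the increments via Lemma \ref{lem:FTC} and the $d$-independent bounds $m_\infty\le\hatpsi{d}\le M_\infty$, bound the extra $\eta,\mu$ terms by $O(d^{-q})$, and use constant shifts of $\hatpsi{d}$ of size $O(d^{-q})$ as sub- and supersolutions together with Proposition \ref{prop:lich-low-reg}. The only cosmetic difference is that you take symmetric shifts $\pm cd^{-q}$ with an explicitly isolated remainder $R_d(K)$, while the paper picks slightly asymmetric $K_\pm(d)$ directly from the corresponding inequalities.
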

\begin{proof}
For each $d$ sufficiently large, we will find constants $K_-(d)$ and $K_+(d)$ 
that are $O(d^{-q})$ and that satisfy $\calG_d(K_-(d))<0$ and $\calG_d(K_+(d))>0$.
Assuming this for the moment, we see that 
$\hatpsid + K_-(d)$ and $\hatpsid+K_+(d)$ are sub- and supersolutions of \eqref{model-lich}
and hence $\hatpsid+K_-(d)\le \psi_d \le \hatpsid+K_+(d)$ for $d$ sufficiently large.
The asymptotics of $K_\pm(d)$ then imply inequality \eqref{limit-order}.
		
Notice that $\calD(K)$ has the same sign as $K$. So $\calG_d(K)>0$ if
$K>0$ and $\calE(K)>0$.  Now if $0<K\le M_\infty$ then Lemma \ref{lem:FTC} implies
\be
\begin{aligned}
\calE(K) &= (\gamma_N+\Tbeta)^2[\hatpsid^{-q-1}-(\hatpsid+K)^{-q-1}] -
\left[2\eta^2 d^{-2q}+(\mu d^{-q}+\Tbeta+\gamma_N)^2)\right]^2(\hatpsid+K)^{-q-1} \\
& \ge {E_-} K - \left[(2\eta^2 +\mu^2)d^{-2q} + 4|\mu|d^{-q}\right] (m_\infty/2)^{-q-1}.
\end{aligned}
\ee
Let 
\be
K_+(d) = \frac{\left[(2\eta^2 +\mu^2)d^{-q} + 4|\mu|\right] (m_\infty/2)^{-q-1} }{E_-} d^{-q}.
\ee
Then $0<K_+(d)\le M_\infty$ if $d$ is sufficiently large, and we have $\calE(K_+(d))\ge 0$
and $\calG_d(K_+(d))\ge 0$ also.

On the other hand, if $-m_\infty/2\le K<0$ then Lemma \ref{lem:FTC} implies
\be
\begin{aligned}
\calE(K) &\le E_- K -(2\eta^2 +\mu^2)d^{-2q}(2M_\infty)^{-q-1} + 4|\mu|d^{-q} (m_\infty/2)^{-q-1}.
\end{aligned}
\ee
Let
\be
K_-(d) = \frac{4|\mu|(m_\infty/2)^{-q-1} }{E_-} d^{-q}.
\ee
Then $-m_\infty\le K_-(d)<0$ if $d$ is sufficiently large. We then have $\calE(K_+(d))\le 0$
and $\calG_d(K_-(d))\le 0$ also.

Since $K_-(d)$ and $K_+(d)$ are both $O(d^{-q})$. we have proved the desired result.
\end{proof}

We now summarize the argument that, along with Proposition \ref{prop:reduce-to-F}, 
proves Theorem \ref{thm:a-gt-1}.
\begin{proposition}\label{prop:non-vanishing-tau}
Suppose $\abs{\Ta}>1$.  If $\eta\neq 0$ or $\mu\neq 0$, there exists at least one solution
of $\calF(d)=1$.
\end{proposition}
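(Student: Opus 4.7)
The plan is to combine three ingredients already assembled in this section: the limit computation for $\hatF$ at infinity, the comparison between $\calF$ and $\hatF$ at infinity, and the continuity/intermediate-value argument packaged in Lemma \ref{lem:upper-is-enough}. Each gives one step of the proof, so no new analytic work is required.

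First, since $\abs{\Ta}>1$ in particular $\abs{\Ta}\neq 1$, so Proposition \ref{prop:lich-limit-hat} applies and yields
\begin{equation*}
\lim_{d\to\infty} \hatF(d) = \abs{\Ta}^{-1/q} < 1.
\end{equation*}
Next, Proposition \ref{prop:lich-limit-same} identifies the limit of $\calF(d)$ at infinity with that of $\hatF(d)$, so
\begin{equation*}
\lim_{d\to\infty} \calF(d) = \abs{\Ta}^{-1/q} < 1.
\end{equation*}
In particular there exists $d_0>0$ with $\calF(d_0)<1$.

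Finally, since $\eta\neq 0$ or $\mu\neq 0$, the hypotheses of Lemma \ref{lem:upper-is-enough} are satisfied, and that lemma concludes (via the singular behaviour $\calF(d)\ge c d^{-1}$ for $d$ small together with continuity of $\calF$) that the existence of any $d$ with $\calF(d)\le 1$ suffices to produce a root of $\calF(d)=1$ by the intermediate value theorem. Applying it to $d_0$ produces the desired solution.

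There is no genuine obstacle here: all the real work has already been done in Propositions \ref{prop:lich-limit-hat} and \ref{prop:lich-limit-same} (the latter being where the perturbative sub/supersolution argument of order $d^{-q}$ is carried out), and in Lemma \ref{lem:upper-is-enough}. The only thing to watch is that the hypothesis $\abs{\Ta}>1$ is used in two ways — it ensures the limit value $\abs{\Ta}^{-1/q}$ is strictly less than $1$ (which would fail for $\abs{\Ta}<1$, where the limit is $1$), and it guarantees $\abs{\Ta}\neq 1$ so that Proposition \ref{prop:lich-limit-hat} applies — so no separate boundary case needs to be treated.
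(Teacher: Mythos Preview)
Your proof is correct and follows exactly the same approach as the paper's own proof: combine Propositions \ref{prop:lich-limit-hat} and \ref{prop:lich-limit-same} to obtain $\lim_{d\to\infty}\calF(d)=\abs{\Ta}^{-1/q}<1$, then invoke Lemma \ref{lem:upper-is-enough}. Your additional commentary on how the hypothesis $\abs{\Ta}>1$ is used is accurate but not strictly needed.
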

\begin{proof}
By Propositions \ref{prop:lich-limit-hat} and \ref{prop:lich-limit-same}, if $\abs{\Ta}>1$ then
\be
\lim_{d\ra\infty} \calF(d) =\left[ \frac{1}{|\Ta|}\right]^{\frac1 q} < 1.
\ee
So $\calF(d)<1 $ for $d$ sufficiently large.  Since $\eta\neq 0$ or $\mu\neq 0$, Lemma
\ref{lem:upper-is-enough} now implies there exists a solution of $\calF(d)=1$.
\end{proof}

\subsubsection{Proof of Theorem \ref{thm:critical-eta} (Nonexistence/non-uniqneness)}
\label{sec:critical}
In this section we restrict our attention to the case $\mu=0$, so that $\eta$ alone controls
the size of the transverse-traceless tensor. We show that if $\abs{\Ta}<1$, (i.e. when $\tau_t$ changes sign),
then there is a critical threshold $\eta_0\ge 0$ for the size of $\eta$.  If $\eta>\eta_0$,
then there are no solutions of $\calF(d)=1$, whereas if $\eta<\eta_0$ there are at least two.
In some cases we can show that $\eta_0>0$ and hence there are multiple 
solutions for small values of $\eta$.

The choice of $\eta$ plays a critical role in this section, so we use the notation 
$\calF_{[\eta]}$ to distinguish different functions $\calF$ corresponding
to different values of $\eta$. Since  $\calF_{[\eta]}(d)$ only depends on $\eta^2$,
we can assume that $\eta\ge0$. We will show the following facts (assuming $\mu=0$ and 
$\abs{t}<1$):
\begin{enumerate}
	\item $\lim_{d\ra\infty} \calF_{[\eta]}(d) = 1$, and this limit is approached from above.
	\item For any fixed $d>0$, $\calF_{[\eta]}(d)$ is strictly increasing in $\eta$.
	\item On any finite interval $(0,d_0]$ we can find $\eta$ sufficiently large so that $\calF_[\eta](d)>1$ on $(0,d_0]$.
\end{enumerate}
The idea of the proof proceeds as follows.
Picking an arbitrary $\eta>0$, Fact 1 implies
$\calF_{[\eta]}(d)>1$ for $d$ larger than some $d_0$.  Using Fact 3 we 
then increase $\eta$ to ensure that $\calF_{[\eta]}(d)>1$ on $(0,d_0]$.  Fact 2
ensures that after having increased $\eta$, we still have the condition $\calF_{[\eta]}(d)>1$
for $d>d_0$. So $\calF_{[\eta]}>1$ for all $d>0$ and there are no solutions of $\calF_{[\eta]}(d)=1$.
The existence of a critical value of $\eta$ follows from Fact 2:
if no solutions exist for some $\eta$, then $\calF_{[\eta]}(d)>1$ for all
$d$ and raising the value of $\eta$ maintains this inequality.  On the other hand, 
since $\calF_{[\eta]}(d)>1$ for $d$ large (by Fact 1) and for $d$ near zero (since $\calF_{[\eta]}$
is singular there), if $\calF_{[\eta]}(d)<1$ for some $d$, then there will be at least two solutions.

\begin{proposition}\label{prop:raising}
For fixed $d$, the value of $\calF_{[\eta]}(d)$ is strictly increasing in $\eta$. Moreover,
\bel{eq:eta-subsol}
\calF_{[\eta]}(d) \ge \left[\frac{2\eta^2}{\kappa(1+|\Ta|)^2}\right]^{\frac{1}{2q}}d^{-1}.
\ee
\end{proposition}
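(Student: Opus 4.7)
The plan is to prove both claims by comparison arguments built on Proposition \ref{prop:lich-low-reg}. First I address the lower bound \eqref{eq:eta-subsol} by exhibiting a positive constant subsolution of equation \eqref{model-lich}. With $\mu=0$, that equation is a Lichnerowicz equation of the form \eqref{lich-generic} with $\alpha^2$ proportional to $2\eta^2 d^{-2q} + \kappa(\gamma_N+\Tbeta)^2$ and $\beta^2$ proportional to $\kappa(\Ta+\Tbeta)^2$. A positive constant $c$ is a subsolution precisely when $\beta^2 c^{2q}\le \alpha^2$ pointwise, i.e.
$$\kappa(\Ta+\Tbeta)^2 c^{2q} \;\le\; 2\eta^2 d^{-2q} + \kappa(\gamma_N+\Tbeta)^2.$$
Since $|\Ta+\Tbeta|\le 1+|\Ta|$ and the right-hand side is bounded below by $2\eta^2 d^{-2q}$, it suffices to require $\kappa(1+|\Ta|)^2 c^{2q}\le 2\eta^2 d^{-2q}$. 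Choosing the largest such $c$ gives the constant $c_\ast = [2\eta^2/(\kappa(1+|\Ta|)^2)]^{1/(2q)}d^{-1}$, and Proposition \ref{prop:lich-low-reg} Part 2 then yields $\psi_d\ge c_\ast$ on $S^1$. Evaluating at $x=0$ gives \eqref{eq:eta-subsol}.

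For strict monotonicity, let $0\le \eta_1<\eta_2$ and write $\psi_i$ for the solution of \eqref{model-lich} associated with $\eta_i$. Substituting $\psi_1$ into the $\eta_2$-equation, the only residual term is $-2(\eta_2^2-\eta_1^2)d^{-2q}\psi_1^{-q-1}<0$, so $\psi_1$ is a strict subsolution. By Proposition \ref{prop:lich-low-reg} Part 2, $\psi_1\le \psi_2$ on $S^1$. To upgrade this to a strict inequality at $x=0$, set $h=\psi_2-\psi_1\ge 0$. Subtracting the two Lichnerowicz equations and applying the fundamental theorem of calculus to $\psi_2^{-q-1}-\psi_1^{-q-1}$ and $\psi_2^{q-1}-\psi_1^{q-1}$ (exactly as in the proof of Lemma \ref{lem:FTC}) yields
$$-2\kappa q\, d^{-2q/n}\, h'' + V h \;=\; 2(\eta_2^2-\eta_1^2)d^{-2q}\psi_1^{-q-1},$$
where $V = (q+1)\alpha_2^2\xi^{-q-2} + (q-1)\kappa(\Ta+\Tbeta)^2\zeta^{q-2}$ for measurable functions $\xi,\zeta$ sandwiched between $\psi_1$ and $\psi_2$. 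Since $q>2$, both $\psi_1$ and $\psi_2$ are uniformly positive on $S^1$ by Lemma \ref{lem:subsuper}, and $\eta_2>\eta_1$, the coefficient $V$ is nonnegative in $L^\infty(S^1)$ and the right-hand side is strictly positive. The strong maximum principle (\cite{epde} Theorem 9.6) then forces $h>0$ everywhere on $S^1$, and in particular $\calF_{[\eta_2]}(d)-\calF_{[\eta_1]}(d) = h(0) > 0$.

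The main point requiring care is verifying the sign of $V$, but this is immediate from the explicit mean-value formula together with the uniform positivity of $\psi_1,\psi_2$. The monotonicity step could alternatively be proved variationally (by showing that the solution of \eqref{lich-generic} depends monotonically on $\alpha^2$ and checking strict monotonicity of $\psi_d(0)$ by a lower-order perturbation), but the comparison route above is the most direct and reuses machinery already developed.
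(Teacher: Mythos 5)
Your proof is correct and takes essentially the same route as the paper: the lower bound comes from the same constant subsolution $\bigl[2\eta^2/(\kappa(1+|\Ta|)^2)\bigr]^{1/(2q)}d^{-1}$ together with Proposition \ref{prop:lich-low-reg} Part 2, and the monotonicity from substituting $\psi_{d,1}$ into the $\eta_2$-equation and comparing. The only divergence is the strictness step, where the paper simply observes that $\psi_{d,1}+\epsilon$ remains a subsolution for small $\epsilon>0$, while you linearize the difference and apply the strong maximum principle -- slightly heavier, but valid and in the same spirit as the paper's own argument in Proposition \ref{prop:near-cmc-unique}.
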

\begin{proof}
Fix $d>0$ and suppose $0\le \eta_1 \le \eta_2$.  Let $\psi_{d,1}$ and $\psi_{d,2}$ be the corresponding
solutions of \eqref{model-lich}.  Then substituting $\psi_1$ into the equation for $\psi_2$ we
have
\be
-2\kappa q \dexp \psi_{d,1}'' - 2\eta_2^2d^{-2q}\psi_{d,1} -\kappa[\mu d^{-q}+\Tbeta+\gamma_N]^2\psi_d^{-q-1} +\kappa(\Ta+\Tbeta)^2\psi_d^{q-1} =
2(\eta_1^2-\eta_2^2)d^{-2q}\psi_{d,1} < 0.
\ee
So $\psi_{d,1}$ is a subsolution of the equation for $\psi_{d,2}$ and $\psi_{d,1}\le \psi_{d,2}$.
A similar computation shows that 
$\psi_{d,1}+\epsilon$ is also a subsolution for $\epsilon>0$ sufficiently small an hence
$\psi_{d,1}<\psi_{d,2}$ everywhere.  In particular, $\calF_{[\eta_1]}(d) < \calF_{[\eta_2]}(d)$.

To obtain the estimate \eqref{eq:eta-subsol} we note that a constant $k$ is a subsolution of \eqref{model-lich} if
\be
-2\eta^2d^{-2q}k^{-q-1} + \kappa(\Ta+\Tbeta)^2k^{q-1} \le 0.
\ee
This holds in particular if 
\be
-2\eta^2d^{-2q}k^{-q-1} + \kappa(1+|\Ta|)^2k^{q-1} \le 0
\ee
and therefore if 
\be
k^{2q} = \frac{2\eta^2}{\kappa (1+|\Ta|)^2}d^{-2q}.
\ee
Since $\calF_{[\eta]}(d)\ge k$ if $k$ is a subsolution of \eqref{model-lich}, we have established
inequality \eqref{eq:eta-subsol}.
\end{proof}

\begin{proposition}\label{prop:force-from-below}
Suppose $\mu=0$ and $\eta\neq 0$.  
Then there exists a constant $c>0$ such that
\bel{lim-below}
\psi_d \ge \hatpsid+ c d^{-2q}
\ee
for all $d$ sufficiently large.
\end{proposition}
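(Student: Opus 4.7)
The plan is to build a constant perturbation $u = \hatpsid + cd^{-2q}$ that is a subsolution of the equation for $\psi_d$, and then invoke Proposition~\ref{prop:lich-low-reg} Part 2 to conclude $u\le \psi_d$.

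Write the full nonlinear operator (with $\mu=0$) as
\be
\calN_d(w) = -2\kappa q \dexp w'' - 2\eta^2 d^{-2q} w^{-q-1} - \kappa(\gamma_N+\Tbeta)^2 w^{-q-1} + \kappa(\Ta+\Tbeta)^2 w^{q-1},
\ee
so that $\calN_d(\psi_d)=0$ and $u$ is a subsolution precisely when $\calN_d(u)\le 0$. Substituting $u=\hatpsid+K$ for a positive constant $K$, I would use $K''=0$ together with the equation satisfied by $\hatpsid$ to rewrite
\be
\calN_d(u) = \kappa(\gamma_N+\Tbeta)^2\left[\hatpsid^{-q-1}-u^{-q-1}\right] + \kappa(\Ta+\Tbeta)^2\left[u^{q-1}-\hatpsid^{q-1}\right] - 2\eta^2 d^{-2q} u^{-q-1}.
\ee
The first two terms on the right are positive and linear in $K$ to leading order, while the third is the negative forcing supplied by the transverse-traceless term; the key to the proof is that the latter has the fixed scale $d^{-2q}$ which we match with our perturbation.

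By Lemma~\ref{lem:Minfty} we have the $d$-independent two-sided bound $m_\infty\le\hatpsid\le M_\infty$, and as soon as $K\le M_\infty$ (which holds for $d$ large once $K=cd^{-2q}$ with $c$ fixed) we get $u\in[m_\infty,2M_\infty]$. A fundamental-theorem-of-calculus computation entirely analogous to Lemma~\ref{lem:FTC} then yields a constant $A>0$ depending only on $m_\infty$, $M_\infty$, $q$, $\gamma_N$, and $\Ta$, with
\be
\kappa(\gamma_N+\Tbeta)^2\left[\hatpsid^{-q-1}-u^{-q-1}\right] + \kappa(\Ta+\Tbeta)^2\left[u^{q-1}-\hatpsid^{q-1}\right] \le AK,
\ee
while the trivial lower bound $u^{-q-1}\ge (2M_\infty)^{-q-1}$ gives $-2\eta^2 d^{-2q}u^{-q-1}\le -2\eta^2(2M_\infty)^{-q-1}d^{-2q}$.

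Choosing $c = \eta^2(2M_\infty)^{-q-1}/A > 0$ (positive because $\eta\neq 0$) and $K=cd^{-2q}$, these estimates combine to give $\calN_d(u) \le -\eta^2(2M_\infty)^{-q-1}d^{-2q} < 0$ for all $d$ sufficiently large. Since $u>0$ for such $d$, Proposition~\ref{prop:lich-low-reg} Part 2 applied to $u$ as a subsolution and $\psi_d$ as the unique solution then yields $\hatpsid+cd^{-2q}\le\psi_d$, which is \eqref{lim-below}. The only real obstacle is the bookkeeping needed to produce the mean-value constant $A$ uniformly in $d$; this is a direct consequence of the uniform two-sided bound on $\hatpsid$. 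Structurally, the proof works because the $O(K)$ positive errors and the $O(d^{-2q})$ negative forcing have matching scales, so that $c$ can be chosen small enough to defeat the errors while remaining independent of $d$.
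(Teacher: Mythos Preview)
Your argument is correct and is essentially the paper's own proof: you perturb $\hatpsid$ by a positive constant $K=cd^{-2q}$, use the uniform bounds from Lemma~\ref{lem:Minfty} and the mean-value estimates of Lemma~\ref{lem:FTC} to bound the positive error terms by $AK$, and then balance against the $-2\eta^2(2M_\infty)^{-q-1}d^{-2q}$ forcing to obtain a subsolution. The paper packages the same computation through the functions $\calG_d=\calD+\calE$ and takes $c=2\eta^2(2M_\infty)^{-q-1}/(D_++E_+)$, which is your $A$ up to notation.
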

\begin{proof}
We use the function $\calG_d:[-m_\infty/2,M_\infty]\ra L^\infty$ defined in Section \ref{sec:nonvanish}.
Recall that $\hatpsi{d}+K$ is a subsolution of the equation for $\psi_d$ if $\calG_d(K)\le 0$
almost everywhere.  Recall also that $\calG_d$ can be written
\be
\calG_d(K) = \calD(K)+\calE(K)
\ee
where $\calD$ and $\calE$ are defined in equations \eqref{calDE}.  

If $0<K\le M_\infty$, then by Lemma \ref{lem:FTC}
\be
\calD(K) \le   D_+ K.
\ee
for a certain constant $D_+>0$.
Also,
\be
\begin{aligned}
\calE(K) &= (\gamma_N+\Tbeta)^2[\hatpsid^{-q-1}-(\hatpsid+K)^{-q-1}] -
2\eta^2 d^{-2q}(\hatpsid+K)^{-q-1} \\
&\le E_+ K - 2\eta^2 (2M_\infty)^{-q-1} d^{-2q} 
\end{aligned}
\ee
for a certain constant $E_+>0$.
Let 
\be
K_- = \frac{2\eta^2(2M_\infty)^{-q-1}}{D_+ + E_+} d^{-2q}.
\ee
If $d$ is sufficiently large, then $0<K_-\le M_\infty$ and we then have
\be
\begin{aligned}
	\calG_d(K_-) &= \calD(K_-)+\calE(K_-) \\
	&\le D_+ K_- + E_+ K_- - 2\eta^2 (2M_\infty)^{-q-1} d^{-2q} \\
	& = 0.
\end{aligned}
\ee
So $\hatpsi{d}+K_-$ is a subsolution, and we have obtained inequality \eqref{lim-below} with $c=2\eta^2(2M_\infty)^{-q-1}/(D_+ + E_+)$.
\end{proof}

The following Proposition formalizes the arguments made at the start of this section
and, along with Proposition \ref{prop:reduce-to-F}, completes the proof of Theorem
\ref{thm:critical-eta}.
\begin{proposition}\label{prop:critical-eta}
Suppose $\abs{\Ta}<1$ and $\mu=0$.  There exists $\eta_0\ge0$ such that if $0<\abs{\eta}<\eta_0$
there exists at least two solutions of $\calF(d)=1$, while if $\abs\eta>\eta_0$ there are no solutions.
If $\abs{\Ta}>\gamma_N$ then $\eta_0>0$.
\end{proposition}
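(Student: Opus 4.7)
The plan is to define
\[
A = \{\eta \ge 0 : \calF_{[\eta]}(d) > 1 \text{ for all } d > 0\}
\]
and set $\eta_0 = \inf A$, with $\eta_0 = \infty$ if $A = \emptyset$. Since $\calF_{[\eta]}$ depends only on $\eta^2$ I may assume $\eta \ge 0$ throughout. Proposition \ref{prop:raising} makes $\eta \mapsto \calF_{[\eta]}(d)$ strictly increasing for each fixed $d$, so $A$ is upward closed. Consequently, if $\eta > \eta_0$ one can pick $\eta' \in A$ with $\eta_0 \le \eta' < \eta$, and monotonicity gives $\calF_{[\eta]}(d) > \calF_{[\eta']}(d) > 1$ for every $d > 0$, so that $\eta \in A$ and no solution of $\calF_{[\eta]}(d) = 1$ exists. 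This yields the non-existence half of the statement as soon as $\eta_0 < \infty$.

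To show $\eta_0 < \infty$, I would work in two stages. Fix an auxiliary $\eta^* > 0$. Proposition \ref{prop:force-from-below} gives $\calF_{[\eta^*]}(d) \ge \hatF(d) + c\,d^{-2q}$ for all sufficiently large $d$, while Proposition \ref{prop:lich-limit-hat} applied with $|\Ta| < 1$ says $\hatF(d) \to 1$ \emph{rapidly}, so $|\hatF(d) - 1|$ decays faster than $d^{-2q}$. Combining these gives a threshold $d_0$ with $\calF_{[\eta^*]}(d) > 1$ for all $d \ge d_0$; by monotonicity the same inequality persists for every $\eta \ge \eta^*$ on $[d_0, \infty)$. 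For the remaining interval $(0, d_0]$, the subsolution estimate \eqref{eq:eta-subsol} yields $\calF_{[\eta]}(d) \ge [2\eta^2/(\kappa(1+|\Ta|)^2)]^{1/(2q)} d^{-1}$, and choosing $\eta$ sufficiently large (depending on $d_0$) forces this lower bound above $1$ throughout $(0, d_0]$. Such an $\eta$ lies in $A$, so $\eta_0 < \infty$.

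For the multiplicity claim when $0 < \eta < \eta_0$, I would choose an auxiliary value $\eta''$ with $\eta < \eta'' < \eta_0$; by definition of $\eta_0$ we have $\eta'' \notin A$, so there exists $d_*$ with $\calF_{[\eta'']}(d_*) \le 1$, and strict monotonicity from Proposition \ref{prop:raising} upgrades this to $\calF_{[\eta]}(d_*) < 1$. Since $\eta > 0$, Lemma \ref{lem:F-bounds} forces $\calF_{[\eta]}(d) \to \infty$ as $d \to 0$, and the argument of the preceding paragraph gives $\calF_{[\eta]}(d) > 1$ for $d$ sufficiently large. Continuity of $\calF_{[\eta]}$ (as used in Lemma \ref{lem:upper-is-enough}) together with the intermediate value theorem then produce one root of $\calF_{[\eta]}(d) = 1$ in $(0, d_*)$ and a second root in $(d_*, \infty)$, giving the desired pair of solutions. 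Finally, when $|\Ta| > |\gamma_N|$, Proposition \ref{prop:small-TT} applied with $\mu = 0$ furnishes a solution for all sufficiently small $\eta > 0$, putting such $\eta$ outside $A$ and forcing $\eta_0 > 0$.

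The main delicate point is manufacturing the \emph{strict} inequality $\calF_{[\eta]}(d_*) < 1$ out of the mere failure of the defining condition for $\eta_0$; this is exactly why I interpose the auxiliary parameter $\eta''$ between $\eta$ and $\eta_0$ and appeal to strict monotonicity rather than simply using the $\eta$ of interest. A secondary quantitative issue is that the $d^{-2q}$ gap between $\calF_{[\eta]}$ and $\hatF$ furnished by Proposition \ref{prop:force-from-below} must beat the approach of $\hatF$ to its limit, which is precisely what the \emph{rapid} convergence clause of Proposition \ref{prop:lich-limit-hat} (Definition \ref{def:rapid}) guarantees.
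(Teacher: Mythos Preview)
Your argument is correct and follows essentially the same route as the paper: define the set $A$ of $\eta$ for which $\calF_{[\eta]}>1$ everywhere, show it is nonempty (via Proposition~\ref{prop:force-from-below}, the rapid convergence in Proposition~\ref{prop:lich-limit-hat}, and the estimate~\eqref{eq:eta-subsol}) and upward closed (via Proposition~\ref{prop:raising}), set $\eta_0=\inf A$, and then use the auxiliary intermediate value $\eta''$ together with strict monotonicity to obtain the strict inequality needed for the two-root argument. Your remarks on the delicate points---the need for the intermediate $\eta''$ and the competition between the $d^{-2q}$ gap and the rapid decay of $\hatF-1$---identify precisely the places where care is required.
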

\begin{proof}
We first show that $\psi_d(0) > 1$ for $d$ sufficiently large.  
From Proposition \ref{prop:lich-limit-same} we know that $\lim_{d\ra\infty}\hatpsid(0) = 1$ and that this convergence
is rapid.  On the other hand, from Proposition \ref{prop:force-from-below} there is
a positive constant $c$ such
that $\psi_d(0) > \hatpsid(0) + c d^{-2q}$.  Hence 
\be
\psi_d(0)-1 \ge (\hatpsid(0)-1) +c d^{-2q} = \left[(\hatpsid(0)-1)d^{2q}+c\right] d^{-2q}.
\ee
From the rapid convergence we have $(\hatpsid(0)-1)d^{2q}\ra 0$ as $d\ra \infty$ and
hence $\psi_d(0) > 1 $ for $d$ large enough.

To show that there are no solutions for $\eta$ sufficiently large, fix a given $\eta_1$ and
pick $d_0$ so that if $d>d_0$ then $\calF_{[\eta_1]}(d)>1$. From inequality
\eqref{eq:eta-subsol} we can find $\eta_2$ so that $\calF_{[\eta_2]}(d)>1$
for all $d\in (0,d_0]$.  Letting $\eta=\max(\eta_1,\eta_2)$, it follows from 
Proposition \ref{prop:raising} that $\calF_{[\eta]}(d)>1$ for all $d>1$.

Let $A=\inf\{\eta\ge 0: \calF_{[\eta]}(d)>1\quad\text{for all $d>0$}\}$; we have just
shown that $A$ is nonempty.
Suppose $\eta\in A$ and $\eta'\ge\eta$.  Proposition \ref{prop:raising} implies that
for any $d>0$, $\calF_{[\eta']}(d)\ge \calF_{[\eta]}(d)  >1$ and hence $\eta'\in A$.
Let $\eta_0=\inf A$.  If $\eta>\eta_0$ then $\eta\in A$ and there are no solutions
of $\calF_{[\eta]}(d)=1$.

Suppose $0<\eta<\eta_0$, and pick $\eta'$ so $\eta<\eta'<\eta_0$.  Then $\eta'\not\in A$
and for some $d_0$, $\calF_{[\eta']}(d_0)\le 1$.   
By Proposition \ref{prop:raising}, $\calF_{[\eta]}(d_0) < \calF_{[\eta']}(d_0) \le 1$.  From
Lemma  \ref{lem:F-bounds} we know that $\calF_{[\eta]}(d)>1$ for $d$ sufficiently small, and we 
have already shown that
$\calF_{[\eta]}(d)>1$ for $d$ sufficiently large.  From the continuity of $F$ it follows that
there are at least two solutions of $\calF_{[\eta]}(d)=1$, one for $d<d_0$ and one for $d>d_0$.

Proposition \ref{prop:small-TT} implies that $\eta_0>0$ if $\abs{\Ta}>\abs{\gamma_N}$;
if $\eta_0=0$ then there can only be solutions of \eqref{model-lich} if $\eta=0$.
\end{proof}

We have now proved all the results of Section \ref{sec:summary}, up to the asymptotic analysis cited
in the proof of Proposition \ref{prop:lich-limit-hat}.

\subsection{Sensitivity with respect to a coupling coefficient}\label{sec:sensitive}
The results of the previous sections depend in a sensitive way on a coupling constants in equations 
\eqref{CTS-reduced}.  Consider the following variation of the Einstein constraint equations:
\be
\begin{aligned}
R_{h}-\abs{K}_{h}^2+ \tr_{h} K ^2 &= 0 &\qquad\\
\div_{h} K -(1+\epsilon)\,d\,\tr_{h} K &= 0.
\end{aligned}
\ee
The case $\epsilon=0$ corresponds with the standard constraint equations.  Repeating the analysis
above for these perturbed constraint equations the analogue of equation \eqref{model-lich} is
\bel{model-lich-eps}
-2 \kappa q \dexp \psi_d'' - 2\eta^2 d^{-2q} \psi_d^{-q-1}- \kappa[(\gamma_N+\Tbeta)(1+\epsilon)+d^{-q}\mu  ]^2 \psi_d^{-q-1}
+\kappa(\Ta+\Tbeta)^2\psi_d^{q-1} = 0.
\ee
One readily shows that estimate \eqref{F-lower-sing} of Lemma \ref{lem:F-bounds} holds for
this equation, as does Lemma \ref{lem:upper-is-enough}, so long as $\epsilon>-1$.
Hence there exists a solution of the constraints for this data if and only if $\calF(d)\le 1$ for some $d>0$.

Recall that for the standard conformal method (i.e. when $\epsilon=0$), $\lim_{d\ra\infty} \calF(d) = 1$ if $\abs{\Ta}<1$. Since we are seeking solutions of $\calF(d)=1$,  it is as if there is a
solution of $\calF(d)=1$ at $d=\infty$. Adjusting $\epsilon$ affects the value of this limit.
We will show that when $\epsilon<0$, $\lim_{d\ra\infty} \calF(d)<1$, and 
the solution at $d=\infty$ becomes a true solution.  On the other hand, for $\epsilon>0$, 
$\lim_{d\ra\infty} \calF(d)>1$ and this allows for there to be no
solutions at all of $\calF(d)=1$ for sufficiently small transverse-traceless tensors.

We first show that when $\epsilon <0$, we have existence under rather general conditions,
and lose the non-existence results of Theorems \ref{thm:exceptional} and \ref{thm:critical-eta}.
\begin{proposition}
Suppose $-1<\epsilon<0$ and $\Ta\neq 1$.  If either $\mu\neq 0$ or $\eta\neq 0$
then there exists at least one solution of equation \eqref{model-lich-eps}.
\end{proposition}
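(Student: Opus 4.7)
The plan is to adapt the existence strategy of Proposition \ref{prop:non-vanishing-tau} to the perturbed equation \eqref{model-lich-eps}. The preceding discussion already notes that, as long as $\epsilon > -1$, the singular lower bound of Lemma \ref{lem:F-bounds} and the reduction of Lemma \ref{lem:upper-is-enough} both carry over. Hence it suffices to exhibit a single $d > 0$ for which the corresponding $\calF(d)$ (now built from \eqref{model-lich-eps}) satisfies $\calF(d) \le 1$. I will do this by computing $\lim_{d\to\infty}\calF(d)$ and showing it is strictly less than $1$.

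First, I would introduce the associated auxiliary function $\hatF$ obtained by setting $\mu = \eta = 0$ in \eqref{model-lich-eps}. The equation for $\hatpsi{d}$ is then precisely \eqref{lich-hat-2} with the coefficient $\gamma_N + \Tbeta$ rescaled to $(1+\epsilon)(\gamma_N + \Tbeta)$. Applying Theorem \ref{thm:lich-limit} exactly as in the proof of Proposition \ref{prop:lich-limit-hat} yields
\[
\lim_{d\to\infty}\hatF(d) \;=\; \left[\frac{(1+\epsilon)\bigl(|\gamma_N+1|+|\gamma_N-1|\bigr)}{|\Ta+1|+|\Ta-1|}\right]^{1/q}.
\]
Since $|\gamma_N| < 1$ and $1+\epsilon > 0$, the numerator equals $2(1+\epsilon) < 2$; the denominator equals $2$ when $|\Ta| \le 1$ and $2|\Ta|$ when $|\Ta| > 1$, so (excluding $|\Ta| = 1$, where the asymptotic analysis degenerates) the limit is at most $(1+\epsilon)^{1/q} < 1$.

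Second, I would extend Proposition \ref{prop:lich-limit-same} to the $\epsilon$-perturbed setting to conclude that $\lim_{d\to\infty}\calF(d) = \lim_{d\to\infty}\hatF(d)$ whenever $\mu \neq 0$ or $\eta \neq 0$. The sub- and supersolution construction $\hatpsi{d} + K_\pm(d)$ with $K_\pm(d) = O(d^{-q})$ goes through unchanged in spirit: only the coefficient $\gamma_N + \Tbeta$ appearing in the definitions of $\calD$ and $\calE$ is rescaled by $1+\epsilon$, so Lemma \ref{lem:FTC} is recovered with slightly adjusted constants $D_\pm, E_\pm$, and the same choice of $K_\pm(d) = O(d^{-q})$ produces sub- and supersolutions of \eqref{model-lich-eps}. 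Combining the two steps gives $\calF(d) < 1$ for all sufficiently large $d$, and the perturbed analogue of Lemma \ref{lem:upper-is-enough} then furnishes a root of $\calF(d) = 1$, hence a solution of \eqref{model-lich-eps}.

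The only real friction I anticipate is bookkeeping: verifying that Theorem \ref{thm:lich-limit} (the content of Section \ref{sec:asympt}) indeed accommodates the new piecewise-constant coefficient $(1+\epsilon)(\gamma_N + \Tbeta)$, and that the hypothesis on $\Ta$ in the statement is in fact $|\Ta| \neq 1$ rather than $\Ta \neq 1$, since $\Ta = -1$ degenerates the denominator in precisely the same way. Modulo these checks, the proof is a streamlined re-application of the tools already developed in Sections \ref{sec:nonvanish} and \ref{sec:critical}.
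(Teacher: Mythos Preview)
Your proposal is correct and follows essentially the same route as the paper: compute $\lim_{d\to\infty}\hatF(d)$ via the asymptotic analysis (Proposition~\ref{prop:lich-limit-hat} adapted to the rescaled coefficient $(1+\epsilon)(\gamma_N+\Tbeta)$), observe it is strictly below $1$ since $0<1+\epsilon<1$, then invoke the perturbed analogue of Proposition~\ref{prop:lich-limit-same} to pass the limit to $\calF$ and conclude via Lemma~\ref{lem:upper-is-enough}. Your remark that the hypothesis should read $|\Ta|\neq 1$ rather than $\Ta\neq 1$ is well taken---both values $\Ta=\pm 1$ are excluded by the asymptotic machinery, and the paper's stated condition appears to be a slip.
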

\begin{proof}
Following the the arguments leading to Proposition \ref{prop:lich-limit-hat} we see that
\be
\lim_{d\ra\infty} \hatpsi{d}(0) = 
\begin{cases} 
|1+\epsilon|^{\frac{1}{q}} & \abs{\Ta}<1 \\
|1+\epsilon|^{\frac{1}{q}} |\Ta|^{-\frac{1}{q}} & \abs{\Ta}> 1.
\end{cases}
\ee
Since $|1+\epsilon| < 1$, we see that for any choice of $\Ta\neq 1$,  
$\hatpsi{d}(0)<1$ for $d$ sufficiently large.  The arguments of Section \ref{sec:critical}
can then be repeated to show that $\lim_{d\ra\infty}\psi_d(0)=\lim_{d\ra\infty}\hatpsi{d}(0)$
and hence $\psi_d(0)<1$ for $d$ sufficiently large.
Hence there exists at least one solution.
\end{proof}

Raising the value of the coupling coefficient, i.e. when $\epsilon>0$, we lose the small-TT result \ref{thm:small-TT}.
\begin{proposition}
Suppose $\epsilon>0$. If $\Ta$ is sufficiently close to $\gamma_N$, and if $\mu=0$, then there does not
exist a solution of \eqref{model-lich-eps}.
\end{proposition}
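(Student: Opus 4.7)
The plan is to construct a single constant $k > 1$ that is a subsolution of equation~\eqref{model-lich-eps} uniformly in $d > 0$. Equation~\eqref{model-lich-eps} is still of the form~\eqref{lich-generic} when $\epsilon > -1$, so Proposition~\ref{prop:lich-low-reg} Part 2 applies verbatim, and such a $k$ will force $\psi_d \ge k > 1$ everywhere on $S^1$. Consequently $\calF(d) = \psi_d(0) > 1$ for every $d > 0$, and the extension of Proposition~\ref{prop:reduce-to-F} already invoked in this section then shows that no solution of~\eqref{model-lich-eps} exists.

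To find $k$, I substitute a positive constant into the left side of~\eqref{model-lich-eps} with $\mu = 0$ and require the result to be nonpositive. Using $\Tbeta \equiv \pm 1$ on the two halves of $S^1$ and clearing by $k^{q+1}$, the subsolution inequality separates into
\[
\kappa(\Ta \pm 1)^2 k^{2q} \;\le\; 2\eta^2 d^{-2q} + \kappa(1+\epsilon)^2(\gamma_N \pm 1)^2,
\]
which must hold for \emph{both} choices of sign. Each right-hand side is decreasing in $d$, so taking the infimum over $d > 0$ yields the cleaner uniform-in-$d$ requirement
\[
k^{2q} \;\le\; (1+\epsilon)^2 \min\!\left\{ \frac{(\gamma_N + 1)^2}{(\Ta + 1)^2}, \; \frac{(\gamma_N - 1)^2}{(\Ta - 1)^2} \right\}.
\]

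At $\Ta = \gamma_N$ the right-hand side equals $(1+\epsilon)^2 > 1$, and since $|\gamma_N| < 1$ both denominators $(\Ta \pm 1)^2$ stay bounded away from zero for $\Ta$ in a small neighborhood of $\gamma_N$. Hence the right-hand side depends continuously on $\Ta$ in this regime and remains strictly greater than $1$ when $\Ta$ is close enough to $\gamma_N$; one can then choose $k > 1$ with $k^{2q}$ strictly below this bound to complete the argument. The only potentially delicate point is that $\eta$ is left entirely unconstrained by the hypothesis, but this is automatic: the nonnegative term $2\eta^2 d^{-2q}$ enters only on the right-hand side of the subsolution inequality and so can only \emph{relax} the condition on $k$. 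The key conceptual content of the argument is that for $\epsilon > 0$ the amplified momentum coefficient $(1+\epsilon)|\gamma_N + \Tbeta|$ uniformly dominates $|\Ta + \Tbeta|$ near $\Ta = \gamma_N$, and it is exactly this imbalance between the two sides of the Lichnerowicz equation that forces $\psi_d$ strictly above $1$ throughout $S^1$.
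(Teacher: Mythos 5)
Your proposal is correct and follows essentially the same route as the paper: the paper likewise exhibits a constant subsolution $1+\delta$ (slightly above $1$, uniform in $d$) of \eqref{model-lich-eps}, using that for $\epsilon>0$ the term $-(1+\epsilon)^2(\gamma_N+\Tbeta)^2+(\Ta+\Tbeta)^2$ is strictly negative at $\Ta=\gamma_N$ and by continuity for $\Ta$ nearby, with the $\eta$-term only helping. Your version merely makes the admissible constant $k$ explicit via the min of the two sign cases, which is a harmless quantitative refinement of the same argument.
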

\begin{proof}
We will show that $\phi=1+\delta$ is a subsolution of \eqref{model-lich-eps} for any $d>0$ if 
$\delta>0$ is sufficiently small and  $\Ta$ is 
sufficiently close to $\gamma_N$. Having shown this we
conclude that $\calF(d)\ge 1+\delta$ for all $d>0$ and hence there are no solutions.

Note that $\phi=1+\delta$ is a subsolution (for $\mu=0$) if
\bel{delta-sub}
-2\eta^2d^{-2q}(1+\delta)^{-q-1}-(1+\epsilon)^2(\gamma_N+\Tbeta)^2(1+\delta)^{-q-1}+(\Ta+\Tbeta)^2(1+\delta)^{q-1} \le 0.
\ee
First, consider the case $\delta=0$.  We then wish to show that
\be
-2\eta^2d^{-2q}-(1+\epsilon)^2(\gamma_N+\Tbeta)^2+(\Ta+\Tbeta)^2 \le 0.
\ee
Since $\epsilon>0$, 
\be
-(1+\epsilon)^2(\gamma_N+\Tbeta)^2+(\Ta+\Tbeta)^2
\ee
is strictly negative if $\Ta=\gamma_N$. Hence the left-hand side of \eqref{delta-sub} is negative if $\delta=0$, and it
is easy to see that it remains negative if $\delta>0$ is sufficiently small.  For any such $\delta$,
we observe that this condition also holds for $\Ta$ sufficiently close to $\gamma_N$.
\end{proof}

\section{A singularly perturbed Lichnerowicz equation}\label{sec:asympt}
The most interesting results of Section \ref{sec:roughfamily} concerning 
non-existence/non-uniqueness depend on the asymptotic analysis of this section.
We consider the singularly perturbed Lichnerowicz equation
\begin{equation}\label{sing-lich}
-\epsilon^2 u_\epsilon'' - \alpha^2 u_\epsilon^{-q-1} + \beta^2 u_\epsilon^{q-1} = 0
\end{equation}
on $S^1$, which we take to be $[-\pi,\pi]$ with endpoints identified.
We assume that the functions $\alpha$ and $\beta$ are constant on the intervals
$I_-=(-\pi,0)$ and $I_+=(0,\pi)$ taking on the values $\alpha_\pm$ and $\beta_\pm$.
Proposition \ref{prop:lich-low-reg} implies that 
there exists a (unique) 
solution $u_\epsilon\in W^{2,\infty}_+(S^1)$ of \eqref{sing-lich}
so long as one of $\alpha_{\pm}\neq 0$ and one of $\beta_{\pm}\neq 0$.
By uniqueness of the solution we note that it is even about $x=\pi/2$.

\begin{figure}
\begin{center}
\includegraphics{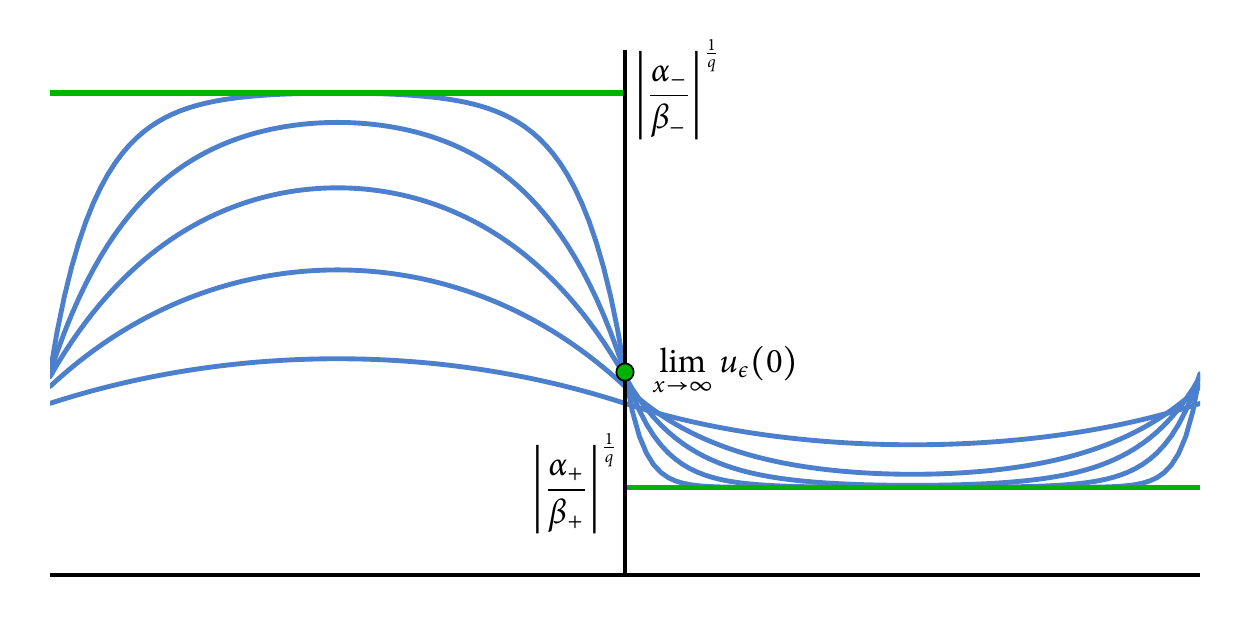}
\caption{Functions $u_\epsilon$ and their limit as $\epsilon\ra0$.}
\end{center}
\label{fig:asympt}
\end{figure}
As $\epsilon\ra 0$, equation \eqref{sing-lich} becomes an algebraic equation for $u_\epsilon$
and we expect that, away from the points of discontinuity of $\alpha$ and $\beta$,
that $u_\epsilon$ converges to the algebraic solution 
$u_0=\abs{\alpha_\pm/\beta_\pm}^{1/q}$ on $I_\pm$; see Figure \ref{fig:asympt}.
We are concerned with the behaviour of $u_\epsilon$ at the point of discontinuity, 
i.e. $\lim_{\epsilon\ra 0^+} u_\epsilon(0)$.

The principal result of this section is the following.
\begin{theorem}\label{thm:lich-limit}
Suppose that $\beta_-\neq 0$ and $\beta_+\neq 0$.  Then
\bel{lich-limit-S1}
\lim_{\epsilon\ra 0} u_\epsilon(0) = \left[\frac{\abs{\alpha_+}+\abs{\alpha_-} }{\abs{\beta_+}+\abs{\beta_-}}\right]^{\frac{1}{q}}, 
\ee
and this convergence is rapid (as defined in Definition \ref{def:rapid}).
\end{theorem}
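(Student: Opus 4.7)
The plan is to exploit the fact that on each subinterval $I_\pm$ the coefficients are constant, so the ODE is autonomous and admits a first integral; combined with the symmetry of the equation and $C^1$ matching at $x=0$, this reduces the problem to an algebraic equation for the limit.

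\medskip

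\textbf{Setup via first integral and matching.}  By Proposition \ref{prop:lich-low-reg}, the unique solution $u_\epsilon$ is preserved under $x\mapsto\pi-x$ on $I_+$ (and $x\mapsto-\pi-x$ on $I_-$), hence $u_\epsilon'(\pm\pi/2)=0$. Multiplying the ODE by $u_\epsilon'$ shows that on each $I_\pm$ the Hamiltonian
\[
H_\pm(x) = \tfrac{\epsilon^2}{2}(u_\epsilon'(x))^2 - W_\pm(u_\epsilon(x)), \qquad W_\pm(u) := \tfrac{\alpha_\pm^2}{q}u^{-q} + \tfrac{\beta_\pm^2}{q}u^q,
\]
is constant, with the turning-point evaluation $H_\pm = -W_\pm(u_\epsilon(\pm\pi/2))$. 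Writing $v_\epsilon=u_\epsilon(0)$, $w_\epsilon=u_\epsilon(\pi/2)$, $z_\epsilon=u_\epsilon(-\pi/2)$ and using continuity of $u_\epsilon'$ at $x=0$ (from $u_\epsilon\in W^{2,\infty}$) yields the matching identity
\[
W_+(v_\epsilon) - W_+(w_\epsilon) = W_-(v_\epsilon) - W_-(z_\epsilon).
\]

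\medskip

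\textbf{Rapid convergence of the boundary-layer values.}  Let $r_\pm := |\alpha_\pm/\beta_\pm|^{1/q}$ be the unique minimizers of $W_\pm$, and note $W_\pm(r_\pm) = 2|\alpha_\pm\beta_\pm|/q$. Applying Proposition \ref{prop:lich-low-reg} (parts 2--3) with constant sub- and supersolutions gives the $\epsilon$-independent bound $\min(r_+,r_-) \le u_\epsilon \le \max(r_+,r_-)$. This confinement forces the half-trajectory on $I_+$ to be monotone between $v_\epsilon$ and $w_\epsilon$, both on the same side of $r_+$, and the energy identity produces the traverse-time formula
\[
\frac{\pi}{2} = \epsilon \int \frac{du}{\sqrt{2(W_+(u) - W_+(w_\epsilon))}}
\]
integrated along the monotone arc. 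Set $\delta_\epsilon := |w_\epsilon - r_+|$. Using $W_+(u) - W_+(r_+) \sim \tfrac12 W_+''(r_+)(u-r_+)^2$ near $r_+$ and the substitution $u - r_+ = \pm\delta_\epsilon \cosh(t)$, the integral splits into a local piece asymptotic to $\tfrac{1}{\sqrt{W_+''(r_+)}}\log(1/\delta_\epsilon)$ plus a regular remainder bounded uniformly in $\epsilon$. Equating with $\pi/(2\epsilon)$ forces $\delta_\epsilon = O(e^{-c_+/\epsilon})$ for some $c_+>0$; the analogous argument on $I_-$ gives $|z_\epsilon - r_-| = O(e^{-c_-/\epsilon})$.

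\medskip

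\textbf{Identification and uniqueness of the limit.}  Passing to the limit in the matching identity gives, for $u_0 := \lim v_\epsilon$,
\[
W_+(u_0) - W_-(u_0) = W_+(r_+) - W_-(r_-),
\]
with exponentially small error (by the $C^2$ smoothness of $W_\pm$ near $r_\pm$). Direct substitution verifies that $u_0 = \bigl[(|\alpha_+| + |\alpha_-|)/(|\beta_+| + |\beta_-|)\bigr]^{1/q}$ solves this equation. For uniqueness within $[\min(r_+,r_-),\max(r_+,r_-)]$: assuming $r_+ \le r_-$ (the other case is symmetric), on this interval $W_+$ is non-decreasing (we are right of its minimum) while $W_-$ is non-increasing (we are left of its minimum), so $W_+ - W_-$ is strictly monotone and has a unique inverse. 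Since the bound above keeps $v_\epsilon$ in this interval for every $\epsilon>0$, we conclude $v_\epsilon \to u_0$ rapidly. The main obstacle is the uniform logarithmic asymptotic analysis above---one must control both the near-$r_+$ singularity and the regular part of the integral uniformly in $\epsilon$, and also verify that $\pi/2$ is the only turning point of $u_\epsilon$ on $(0,\pi/2)$, a second turning point would force the trajectory to cross $r_+$ and contradict the one-sided confinement dictated by the energy identity.
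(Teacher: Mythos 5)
Your proposal is correct in outline, but it takes a genuinely different route from the paper. The paper proves Theorem \ref{thm:lich-limit} by matched asymptotics: it constructs the exact boundary-layer profile $v$ on $\Reals$ (Propositions \ref{prop:Utemplate} and \ref{prop:lich-R} --- note the reduction-of-order step there is the same first integral you use), glues $v(x/\epsilon)$ into an even approximate solution $w_\epsilon$ on $S^1$, shows the residual vanishes rapidly (Lemma \ref{lem:err-to-zero}), bounds the inverse of the linearization by $O(\epsilon^{-4})$ (Proposition \ref{prop:inv}), and runs Newton's method to produce a true solution near $w_\epsilon$, which must equal $u_\epsilon$ by uniqueness. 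You instead work directly with the exact solution: evenness about $\pm\pi/2$ gives turning points there, the Hamiltonian $\tfrac{\epsilon^2}{2}(u_\epsilon')^2 - W_\pm(u_\epsilon)$ is conserved on each of $I_\pm$, $C^1$ matching at $x=0$ gives $W_+(v_\epsilon)-W_+(w_\epsilon)=W_-(v_\epsilon)-W_-(z_\epsilon)$, the time-map identity forces the turning values $w_\epsilon,z_\epsilon$ exponentially close to the minimizers $r_\pm$, and the limit is the unique root of $W_+(u)-W_-(u)=W_+(r_+)-W_-(r_-)$ in the confinement interval, which is indeed the mediant $\left[\left(\abs{\alpha_+}+\abs{\alpha_-}\right)/\left(\abs{\beta_+}+\abs{\beta_-}\right)\right]^{1/q}$ (the algebra checks out). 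Your method is more elementary --- no resolvent estimate, no Newton iteration --- and yields an explicit exponential rate $O(e^{-c/\epsilon})$, stronger than ``rapid''; the price is that it leans heavily on one-dimensionality, piecewise-constant coefficients and the exact symmetry, whereas the paper's scheme also delivers the full boundary-layer profile of $u_\epsilon$ and is robust to perturbations where no first integral or symmetry is available.

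Two steps should be tightened. First, your reason for excluding a second turning point on $(0,\pi/2)$ (that the trajectory would have to cross $r_+$) does not exclude a second turning point at the value $w_\epsilon$ itself; rule this out by observing that the level set $\{\tfrac{\epsilon^2}{2}p^2=W_+(u)-W_+(w_\epsilon)\}$ on the relevant side of $r_+$ is a single non-closed arc meeting $\{p=0\}$ at exactly one point and carrying no equilibrium, so the orbit passes through $(w_\epsilon,0)$ only once (equivalently, an interior point with $u=w_\epsilon$, $u'=0$ would, by uniqueness for the autonomous ODE, make $u_\epsilon$ periodic on $I_+$, which is impossible on a non-closed level arc). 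Second, the logarithmic time-map estimate needs a bound uniform in $\epsilon$, e.g.\ $W_+(u)-W_+(w_\epsilon)\ge c\left[(u-r_+)^2-\delta_\epsilon^2\right]$ on the compact confinement interval (available since $W_+'>0$ strictly to the right of $r_+$ and $W_+''(r_+)>0$), which gives $\pi/(2\epsilon)\le C + c^{-1/2}\log(1/\delta_\epsilon)$ and hence the exponential smallness of $\delta_\epsilon$; and in the final step note that $\tfrac{d}{du}(W_+-W_-)$ is strictly positive on the whole closed interval between $r_+$ and $r_-$ (at each point at least one of $W_+'$, $-W_-'$ is strictly positive), so its inverse is Lipschitz and the exponentially small error in the matching identity transfers to $\abs{v_\epsilon-u_0}$, which also settles the existence of the limit rather than assuming it. Finally, like the paper's own construction, your argument implicitly requires $\alpha_\pm\neq 0$ so that $r_\pm>0$; this holds in the application made of the theorem.
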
	

To obtain the limit at zero, we use a blow-up argument, guessing an asymptotic form of the solution.
We start with a boundary value problem on $[0,\infty)$.
\begin{proposition}\label{prop:Utemplate}
Let $u_0>0$.  There exists a solution on $[0,\infty)$ of 
\bel{Uode}
-U''=U^{-q-1}-U^{q-1}
\ee
satisfying $U(0)=u_0$ and $\lim_{x\ra\infty} U(x) = 1$ (with $U$ converging rapidly to
its limit at $\infty$).  Moreover, $U$ satisfies the first order equation
\bel{Uprime}
U' = \sqrt{\frac2q}\left[ U^{-q/2}- U^{q/2}\right]
\ee
and $U'(x)\ra 0$ rapidly as $x\ra\infty$.
\end{proposition}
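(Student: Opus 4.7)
The plan is to exploit the fact that the autonomous ODE $-U''=U^{-q-1}-U^{q-1}$ admits a conserved energy, which will reduce the second-order problem to a first-order separable equation and simultaneously give the rapid decay.

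First I would multiply the ODE by $U'$ and integrate to obtain
\begin{equation*}
\tfrac{1}{2}(U')^2 = \tfrac{1}{q}\bigl(U^q+U^{-q}\bigr)+C.
\end{equation*}
Imposing the desired asymptotic conditions $U\to 1$ and $U'\to 0$ at infinity forces $C=-2/q$, and the algebraic identity $U^q+U^{-q}-2 = (U^{q/2}-U^{-q/2})^2$ then yields $(U')^2 = \tfrac{2}{q}(U^{q/2}-U^{-q/2})^2$. Choosing the sign of the square root so that $U$ decreases when $u_0>1$ and increases when $u_0<1$ (i.e.\ so that $U=1$ is approached), gives precisely \eqref{Uprime}.

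Next I would treat \eqref{Uprime} as an initial value problem with $U(0)=u_0$. The right-hand side $F(U)=\sqrt{2/q}(U^{-q/2}-U^{q/2})$ is smooth on $(0,\infty)$ with a unique zero at $U=1$, so standard ODE theory gives a local solution. A monotonicity argument (if $u_0>1$ then $U$ is strictly decreasing and bounded below by $1$; if $u_0<1$ then $U$ is strictly increasing and bounded above by $1$; if $u_0=1$ then $U\equiv 1$) keeps $U$ inside a compact subinterval of $(0,\infty)$, which gives global existence on $[0,\infty)$. Monotone convergence then produces a limit $L$, and the fact that $F(L)=0$ together with the above bounds forces $L=1$. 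Differentiating \eqref{Uprime} and substituting back verifies that $U$ satisfies the second-order equation, so no separate existence argument is needed for \eqref{Uode}.

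For the rapid decay claim, the key is the linearization of $F$ at $U=1$. Setting $W=U-1$ and Taylor expanding gives $F(1+W)=-\sqrt{2q}\,W + O(W^{2})$, so
\begin{equation*}
W' = -\sqrt{2q}\,W + O(W^{2}).
\end{equation*}
A comparison/Gr\"onwall argument on a neighborhood of $W=0$ then yields $|W(x)|\le C e^{-\sqrt{2q}\,x}$ for all $x\ge 0$ sufficiently large, and since $U'=F(U)=O(W)$, the derivative decays at the same exponential rate. Exponential decay is stronger than any polynomial bound, so both $U(x)\to 1$ and $U'(x)\to 0$ rapidly in the sense of Definition \ref{def:rapid}. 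The main subtlety to watch is the sign choice in extracting \eqref{Uprime} from the energy identity (it is branch-dependent on whether $u_0$ is above or below $1$), and making the Gr\"onwall step rigorous by first isolating a time $x_0$ after which $W$ is small enough that the $O(W^{2})$ remainder is dominated by half of the linear term.
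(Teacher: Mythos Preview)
Your proof is correct and follows essentially the same approach as the paper: both reduce the second-order equation to the first-order equation \eqref{Uprime} (the paper calls this ``reduction of order'' and writes down the separated integral directly, while you motivate it via the conserved energy and invoke standard ODE existence), and both obtain rapid convergence by linearizing \eqref{Uprime} at the equilibrium $U=1$ and applying Gronwall to get exponential decay. The only cosmetic difference is that the paper constructs $U$ explicitly as the inverse of $X(u)=\sqrt{q/2}\int_{u_0}^u v^{q/2}(1-v^q)^{-1}\,dv$, whereas you appeal to local existence plus a monotonicity/trapping argument; either route is fine.
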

\begin{proof}
We construct a solution by means of the method of reduction of order.

Suppose $0<u_0<1$.  Define 
\be
X(u) = \sqrt{\frac q2}\int_{u_0}^u \frac{v^{q/2}}{1-v^q}\; dv
\ee	
for $u_0<u<1$.  Note that $\lim_{u\ra 1^-} X(u) = \infty$.  Moreover,
$ X'(u) = \sqrt{\frac2q}\; u^{q/2}/(1-u^q) >0$.  Hence $X$ has an increasing
inverse function
\be
U:[0,\infty) \ra [u_0,1)
\ee
satisfying $U(0)=u_0$ and $\lim_{x\ra\infty} U(x) = 1$.  Moreover,
\bel{Uprime2}
U'(x) = \frac{1}{X'(U(x))} = \sqrt{\frac2q}\left[ U(x)^{-q/2}-U(x)^{q/2}\right].
\ee
An easy computation involving the chain rule and equation \eqref{Uprime} now shows that $U$
satisfies the ODE \eqref{Uode} and hence $U$ is the function we seek.

If $u_0>1$ one shows similarly that the inverse function of 
\be
X(u) = \sqrt{\frac2q}\int_{u}^{u_0} \frac{v^{q/2}}{v^q-1}\; dv
\ee 
defined on $(1,u_0]$ is the desired function.  When $u_0=1$, then $U(x)\equiv 1$ is the solution.

To show the rapid convergence at infinity we focus on the case $0<u_0<1$.  Let $W=1-U$,
so $W>0$ and $\lim_{x\ra\infty}W(x)=0$.  Now 
\be
W' = \sqrt\frac{2}{q}\left[ (1-W)^{q/2}-(1-W)^{-q/2}\right] = H(W) W
\ee
where $H$ is a continuous function near $0$ and 
\be
H(0) = \left.\frac{d}{dW}\right|_{W=0} \sqrt\frac{2}{q}\left[ (1-W)^{q/2}-(1-W)^{-q/2}\right] = -\sqrt{2q}.
\ee
Since $W(x)\ra 0$ as $x\ra\infty$, there exists $x_0$ so that if $x\ge x_0$, 
\be
H(W(x)) < -\sqrt{q}
\ee
Hence
\be
W'\le -\sqrt{q}\, W
\ee
for $x\ge x_0$ and by Gronwall's inequality
\be
W(x) \le W(x_0)\exp(-\sqrt{q}\, x).
\ee
Since $W\ge 0$ also, we conclude that $W$ converges rapidly to $0$ and $U$ converges rapidly to $1$.

The rapid convergence when $u_0>1$ is proved similarly, while the result is trivial if $u_0=1$.  
Finally, we note that
the rapid convergence of $U'$ to $0$ at infinity follows from the rapid convergence of $U$ to $1$ at 
infinity and equation \eqref{Uprime}.
\end{proof}

We now turn to a boundary value problem on $\Reals$ with piecewise constant coefficients.
Consider
\be\label{lich-R}
-v''-\alpha^2 v^{-q-1}+\beta^2 v^{q-1} = 0
\ee
on $\Reals$ where $\alpha$ and $\beta$ are equal to the constants $\alpha_\pm$ and $\beta_\pm$
on the intervals $(0,\infty)$ and $(-\infty,0)$.
\begin{proposition}\label{prop:lich-R}
Suppose $\beta_\pm\neq 0$. Let $L_\pm=\abs{\alpha_\pm/\beta_\pm}^{1/q}$.
There exists a solution  $v\in W^{2,\infty}_{\rm loc}(\Reals)$ of \eqref{lich-R} satisfying
\be
\lim_{x\ra\pm\infty} v(x) = L_\pm.
\ee
Moreover, $v$ converges rapidly to its limits at $\pm\infty$, $v'$ converges rapidly to $0$ at 
$\pm\infty$, and
\bel{v0}
v(0) = \left[\frac{\abs{\alpha_+}+\abs{\alpha_-}}{\abs{\beta_+}+\abs{\beta_-}}\right]^{\frac{1}{q}}.
\ee
\end{proposition}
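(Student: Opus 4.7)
The plan is to construct $v$ by gluing together two scaled copies of the template solution $U$ from Proposition \ref{prop:Utemplate}, one on each half-line, and choosing the common value $v(0)$ so that the first derivatives agree at the origin. Specifically, on each half-line I would try the ansatz $v(x) = L_\pm U_\pm(\lambda_\pm \abs{x})$, where $L_\pm = \abs{\alpha_\pm/\beta_\pm}^{1/q}$, $U_\pm$ is a solution of the template equation \eqref{Uode} on $[0,\infty)$ with $U_\pm(\infty)=1$, and $\lambda_\pm>0$ is a scale factor to be chosen. Substituting into \eqref{lich-R} on $(0,\infty)$ and using $\beta_\pm^2 L_\pm^{2q} = \alpha_\pm^2$, the equation reduces exactly to the template \eqref{Uode} when $\lambda_\pm = \abs{\beta_\pm}\, L_\pm^{(q-2)/2}$. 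With this choice, the rapid convergence of $v$ to $L_\pm$ at $\pm\infty$ and of $v'$ to $0$ at $\pm\infty$ are immediate from Proposition \ref{prop:Utemplate}.

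Next, I would determine $v(0)$ by imposing $C^1$ matching at $x=0$. Setting $u_\pm = U_\pm(0) = v(0)/L_\pm$, the first-order relation \eqref{Uprime} gives
\be
U_\pm'(0) = \sqrt{\tfrac{2}{q}}\left[u_\pm^{-q/2}-u_\pm^{q/2}\right],
\ee
and the matching condition $v'(0^+)=v'(0^-)$ becomes
\be
L_+\lambda_+\, U_+'(0) + L_-\lambda_-\, U_-'(0) = 0.
\ee
Using $\lambda_\pm L_\pm^{1-q/2} = \abs{\beta_\pm}$ and $\abs{\beta_\pm} L_\pm^q = \abs{\alpha_\pm}$, this simplifies to
\be
(\abs{\alpha_+}+\abs{\alpha_-})\, v(0)^{-q/2} = (\abs{\beta_+}+\abs{\beta_-})\, v(0)^{q/2},
\ee
which is equivalent to formula \eqref{v0}. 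Since $v(0)^q = (\abs{\beta_+}L_+^q + \abs{\beta_-}L_-^q)/(\abs{\beta_+}+\abs{\beta_-})$ is a convex combination of $L_\pm^q$, the value $v(0)$ lies between $L_-$ and $L_+$, so the two template solutions $U_\pm$ can indeed be chosen with the required monotonicity (one nondecreasing, the other nonincreasing, towards $1$). The resulting $v$ is smooth on each half-line and $C^1$ across $0$; reading $v''$ off equation \eqref{lich-R} shows $v''\in L^\infty_{\rm loc}$, so $v\in W^{2,\infty}_{\rm loc}(\Reals)$.

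The main obstacle is the algebraic bookkeeping that reduces the derivative-matching condition to the clean form \eqref{v0}; the key cancellation is that the factor $L_\pm\lambda_\pm$ combined with the $L_\pm^{-q/2}$ and $L_\pm^{q/2}$ coming from $U_\pm'(0)$ produces exactly $\abs{\alpha_\pm}$ and $\abs{\beta_\pm}$ respectively. Apart from this computation, the construction is essentially automatic: the template solution supplies the analytic content (existence and rapid convergence), and the rescaling reduces the variable-coefficient problem to two copies of it. Note that no uniqueness statement is being asserted, so it suffices to exhibit one such $v$.
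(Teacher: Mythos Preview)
Your proof is correct and follows essentially the same approach as the paper: both construct $v$ by gluing scaled copies of the template solution $U$ from Proposition~\ref{prop:Utemplate} on each half-line and determine $v(0)$ by matching one-sided derivatives using the first-order relation \eqref{Uprime}. Your scale factor $\lambda_\pm=\abs{\beta_\pm}L_\pm^{(q-2)/2}$ coincides with the paper's $\omega_\pm$ (the paper uses the identity $L_\pm^2\omega_\pm^2=\alpha_\pm^2 L_\pm^{-q}=\beta_\pm^2 L_\pm^q$ in place of an explicit formula), and the algebraic reduction to \eqref{v0} is the same; your remark that $v(0)$ lies between $L_-$ and $L_+$ is a correct extra observation but not needed, since Proposition~\ref{prop:Utemplate} supplies the template solution for every $u_0>0$.
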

\begin{proof}
Let $\omega_{\pm}= \left[\alpha_\pm^{q+1}\beta_\pm^{q-1}\right]^{\frac{1}{2q}}$.  
Given any $c>0$ we define
\be
v_c = \begin{cases} L_+ U_+(\omega_+ x) & x>0\\
L_- U_-(-\omega_- x) & x<0
\end{cases}
\ee
where $U_\pm$ is the solution of \eqref{Uode} provided by Proposition \ref{prop:Utemplate} satisfying
$U_\pm(0)= \frac{c}{L_\pm}$ and $\lim_{x\ra\infty} U_\pm(x) = 1$.
Then $v_c$ is continuous, satisfies the differential equation
\eqref{lich-R} on $(0,\infty)$ and $(-\infty,0)$,
and has the correct limiting behaviour at $\pm\infty$.
If for some $c$, $v_c$ is differentiable at $0$, then $v_c$ will be a weak solution on $\Reals$ and
by elliptic regularity the desired strong solution.

From Proposition \ref{prop:Utemplate} we have
\be
v_c'(0+) = L_+\omega_+U_+'(0) = L_+\omega_+\sqrt{\frac{2}{q}}\left[ \left(\frac{c}{L_+}\right)^{-q/2}-
\left(\frac{c}{L_+}\right)^{q/2}\right]
\ee
and similarly
\be
v_c'(0-) = -L_-\omega_-\sqrt{\frac{2}{q}}\left[ \left(\frac{c}{L_-}\right)^{-q/2}-
\left(\frac{c}{L_-}\right)^{q/2}\right].
\ee
Setting these quantities equal we obtain
\be
\left[ L_+\omega_+L_+^{-q/2}+L_-\omega_-L_-^{-q/2}\right]c^q = L_+\omega_+L_+^{q/2}+L_-\omega_-L_-^{q/2}.
\ee
From the definitions of $L_\pm$ and $\omega_\pm$ we have the identities
\be
L_\pm^2\omega_\pm^2 = \alpha_+\pm^2L_\pm^{-q} = \beta_\pm^2L_\pm^q
\ee
and hence
\be
c^q = \frac{\abs{\alpha_+}+\abs{\alpha_-}}{\abs{\beta_+}+\abs{\beta_-}}.
\ee
With this choice of $c$ we obtain a solution of \eqref{lich-R} satisfying equation \eqref{v0}.
\end{proof}

Using the function found in Proposition \ref{prop:lich-R} we can construct approximate solutions
of the differential equation \eqref{sing-lich}.  Our strategy for proving Theorem \ref{thm:lich-limit} will be 
to show that these approximate solutions improve as $\epsilon\ra 0$ and can be corrected using
Newton's method to obtain solutions satisfying the limit \eqref{lich-limit-S1}.

We form the approximate solutions first on $[-\pi/2, \pi/2]$, defining
\be
w_\epsilon(x) = v(x/\epsilon)  + h_\epsilon(x)
\ee
where $h_\epsilon$ will be a small correction term. We will pick $h_\epsilon$ 
so that $w_\epsilon'(\pm\pi/2)=0$ and hence can we can extend $w_\epsilon$
to be defined on $S^1$ by declaring it to be even about $x=\pi/2$.  

To define the correction term, we first let
\be
\zeta(x)=\begin{cases} \frac{1}{\pi}x^2 & 0\le x \le \pi/2 \\
                        0 & -\pi/2 < x \le 0
\end{cases}
\ee
and note that $\zeta'(\pi/2)=1$.  Let
\be
h_\epsilon(x)=-d_{\epsilon,+}\zeta(x) -d_{\epsilon,-}\zeta(-x)
\ee
where
\be
d_{\epsilon,\pm}=\frac{1}{\epsilon} v'(\pm \pi/(2\epsilon)).
\ee
With this choice of $h_\epsilon$, $w_\epsilon'(\pm\pi/2)=0$.

For $p>1$ we define the nonlinear Lichnerowicz operator 
$\calN_\epsilon: W^{2,p}(S^1)\ra L^p(S^1)$ by
\be
\calN_\epsilon(w) = -\epsilon^2 w'' - \alpha^2 w^{-q-1} + \beta^2w^{q-1}.
\ee
The error $\calE_\epsilon = \calN_\epsilon(w_\epsilon(x))$ is even about $x=\pi/2$ and
one readily computes that on $[-\pi/2,\pi/2]$,
\be
\calE_\epsilon=
-\alpha^2 \left[ (v(x/\epsilon)+h_\epsilon(x))^{-q-1}-v(x/\epsilon)^{-q-1} \right]
+\beta^2 \left[ (v(x/\epsilon)+h_\epsilon(x))^{q-1}-v(x/\epsilon)^{q-1}\right]
+\frac{2\epsilon^2}{\pi}\left[d_+\chi_+ + d_-\chi_-\right]
\ee
where $\chi_\pm$ are the characteristic functions of $(0,\pi)$ and $(-\pi,0)$ respectively.
\begin{lemma}\label{lem:err-to-zero}
	\be ||\calE_\epsilon||_{L^\infty(S^1)}\ra 0 \ee
	rapidly as $\epsilon\ra 0$.
\end{lemma}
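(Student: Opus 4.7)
The plan is to control each of the three contributions to $\calE_\epsilon$ separately and show that each decays rapidly in $\epsilon$. The whole strategy rides on the rapid decay of $v'$ at infinity established in Proposition \ref{prop:lich-R}. Since $\pm\pi/(2\epsilon) \to \pm\infty$ as $\epsilon\to 0$, the rapid convergence gives that $|v'(\pm\pi/(2\epsilon))|$ decays faster than any polynomial in $\epsilon$, so the coefficients
\[
d_{\epsilon,\pm} = \epsilon^{-1}\, v'(\pm\pi/(2\epsilon))
\]
still go to $0$ rapidly as $\epsilon\to 0$. Since $\zeta\in L^\infty([-\pi/2,\pi/2])$, this immediately gives $\|h_\epsilon\|_{L^\infty}\to 0$ rapidly, and also makes the explicit term $\frac{2\epsilon^2}{\pi}[d_+\chi_+ + d_-\chi_-]$ decay rapidly.

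Next I would handle the two nonlinear differences. The key point is that $v(x/\epsilon)$ is bounded uniformly above and away from zero: by Proposition \ref{prop:lich-R}, $v\in W^{2,\infty}_{\rm loc}(\Reals)$ is continuous and positive with $v(x)\to L_{\pm}>0$ as $x\to\pm\infty$, so there are constants $0<m\le M<\infty$ with $m\le v\le M$ on all of $\Reals$. Since $\|h_\epsilon\|_\infty\to 0$, for $\epsilon$ small enough we also have $m/2\le v(x/\epsilon)+h_\epsilon(x)\le 2M$ on $[-\pi/2,\pi/2]$. Then by the mean value theorem applied to $s\mapsto s^{-q-1}$ and $s\mapsto s^{q-1}$ on $[m/2,2M]$, there is a constant $C$ (depending only on $m$, $M$, $q$, and the uniform bounds $|\alpha_\pm|$, $|\beta_\pm|$) such that
\[
\bigl|\alpha^2[(v(x/\epsilon)+h_\epsilon(x))^{-q-1}-v(x/\epsilon)^{-q-1}]\bigr| + \bigl|\beta^2[(v(x/\epsilon)+h_\epsilon(x))^{q-1}-v(x/\epsilon)^{q-1}]\bigr| \le C\,|h_\epsilon(x)|.
\]

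Combining these estimates yields
\[
\|\calE_\epsilon\|_{L^\infty([-\pi/2,\pi/2])} \le C\|h_\epsilon\|_\infty + \frac{2\epsilon^2}{\pi}\bigl(|d_{\epsilon,+}|+|d_{\epsilon,-}|\bigr) \le C'\bigl(|d_{\epsilon,+}|+|d_{\epsilon,-}|\bigr),
\]
and by the evenness of $\calE_\epsilon$ about $x=\pi/2$ the same bound holds on all of $S^1$. Rapid decay of $d_{\epsilon,\pm}$ then gives rapid decay of $\|\calE_\epsilon\|_{L^\infty(S^1)}$.

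The main technical obstacle is the uniform positive lower bound on $v(x/\epsilon)+h_\epsilon(x)$, needed to keep the nonlinear terms (especially $u^{-q-1}$) controlled; this is handled cleanly by the uniform positivity of $v$ together with the already-established smallness of $h_\epsilon$. Everything else reduces to bookkeeping built on top of the rapid convergence of $v'$ at infinity from Proposition \ref{prop:lich-R}.
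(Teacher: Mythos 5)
Your proposal is correct and follows essentially the same route as the paper: rapid decay of $d_{\epsilon,\pm}$ from the rapid decay of $v'$ at infinity (hence rapid smallness of $h_\epsilon$ and of the $\frac{2\epsilon^2}{\pi}[d_+\chi_+ + d_-\chi_-]$ term), a Lipschitz/mean-value bound on $s\mapsto s^{-q-1}$ and $s\mapsto s^{q-1}$ using the uniform positive lower bound on $v+th_\epsilon$ for small $\epsilon$, and the evenness of $\calE_\epsilon$ about $x=\pi/2$ to pass from $[-\pi/2,\pi/2]$ to $S^1$. The only cosmetic difference is that the paper writes the difference of powers via the integral form of the fundamental theorem of calculus rather than the mean value theorem, which is equivalent.
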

\begin{proof}
From Proposition \ref{prop:lich-R} we know that $v'(x)\ra 0$ rapidly as $x\ra\infty$.
Consequently the constants $d_{\epsilon,\pm}$ converge rapidly to zero as $\epsilon \ra 0$.
Moreover, $d_{\epsilon,+}\chi+$ and $d_{\epsilon,-}\chi_-$ converge rapidly to
$0$ in $L^\infty(S^1)$.

Consider $F(v)=v^{-q-1}$.  Then $F(v+h)-F(v) = (-q-1) \int_0^1 (v+th)^{-q-2}dt\; h $ and
therefore
\be
\abs{ F(v+h)-F(v) }  \le (q+1)\max_{t\in [0,1]} (v+th)^{-q-2} |h|.
\ee
Now $v_\epsilon(x) \ge \min( L_+,L_-) > 0$ and $h_\epsilon$ converges rapidly to $0$ in 
$L^\infty([-\pi/2,\pi/2])$.  So there is an $m$ such that
\be
v_\epsilon+t h_\epsilon \ge m > 0.
\ee
for all $t\in[0,1]$ and all $\epsilon$ sufficiently small.
It follows that
\be
|| (v_\epsilon+h_\epsilon)^{-q-1}-v_\epsilon^{-q-1}||_{L^\infty([-\pi/2,\pi/2])} \le (q+1) m^{-q-2} ||h_\epsilon||_{L^\infty([-\pi/2,\pi/2])}
\ee
for $\epsilon$ sufficiently small.  From the rapid convergence of $h_\epsilon$ to zero we conclude that
\be
\alpha^2\left[(v_\epsilon+h_\epsilon)^{-q-1}-v_\epsilon^{-q-1}\right] \ra 0
\ee
rapidly in $L^\infty([-\pi/2,\pi/2])$ as $\epsilon\ra 0$.  A similar argument establishes
\be
\beta^2\left[(v_\epsilon+h_\epsilon)^{q-1}-v_\epsilon^{q-1}\right] \ra 0
\ee
rapidly as $\epsilon \ra 0$.  

We have considered all terms of $\calE_\epsilon$ and conclude that 
\be
||\calE_\epsilon||_{L^{\infty}([-\pi/2,\pi/2])}\ra 0
\ee
rapidly as $\epsilon\ra 0$. Since $\calE_\epsilon$ is even about $x=\pi/2$, we have the same
convergence in $L^{\infty}(S^1)$.
\end{proof}

For constants $0<m<M$ and $p>1$ we define the slab $S_{m,M}^p = \{ u\in W^{2,p}(S^1) : m\le u\le M\}$.
\begin{lemma}\label{lem:fp}
For $u\in W^{2,p}_+(S^1)$, let $F_r(u) = u^r$. There exists a constant $K(m,M,r)$ such that
\bel{Fr-is-Lip}
||F_r(u)-F_r(v)||_{L^p(S^1)} \le K(m,M,r) ||u-v||_{L^p(S^1)}
\ee
for all $u,v\in S^p_{m,M}$.

Let $L_{u,r}:W^{2,p}\ra L^p$ be the linear function
\be
L_{u,r} v = F_r(u) v.
\ee
The map $u\mapsto L_{u,r}$ is Lipschitz continuous on $S^p_{m,M}$.
\end{lemma}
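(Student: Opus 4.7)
The plan is to prove both statements by combining the fundamental theorem of calculus applied to $s\mapsto s^r$ with the one-dimensional Sobolev embedding $W^{2,p}(S^1)\hookrightarrow L^\infty(S^1)$, which is continuous since we are in one space dimension with $p>1$.

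For the Lipschitz estimate \eqref{Fr-is-Lip}, fix $u,v\in S^p_{m,M}$ and write the pointwise identity
\begin{equation*}
u(x)^r - v(x)^r = r\,\bigl(u(x)-v(x)\bigr)\int_0^1 \bigl[v(x)+s(u(x)-v(x))\bigr]^{r-1}\, ds.
\end{equation*}
The convex combination $v(x)+s(u(x)-v(x))$ lies in $[m,M]$ for every $s\in[0,1]$ and every $x$, so the integrand is bounded in absolute value by $C(m,M,r):=\max(m^{r-1},M^{r-1})$ (the max handles both cases $r\ge 1$ and $r<1$). This yields the pointwise estimate
\begin{equation*}
\abs{F_r(u)(x)-F_r(v)(x)} \le \abs{r}\, C(m,M,r)\, \abs{u(x)-v(x)},
\end{equation*}
and taking $L^p$ norms gives \eqref{Fr-is-Lip} with $K(m,M,r) = \abs{r}\, C(m,M,r)$.

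For the second claim, the operator norm of $L_{u,r}-L_{u',r}$ is computed by testing against $v\in W^{2,p}(S^1)$:
\begin{equation*}
\norm{(L_{u,r}-L_{u',r})v}_{L^p(S^1)} = \norm{(F_r(u)-F_r(u'))\,v}_{L^p(S^1)} \le \norm{F_r(u)-F_r(u')}_{L^p(S^1)}\,\norm{v}_{L^\infty(S^1)}.
\end{equation*}
By the embedding $W^{2,p}(S^1)\hookrightarrow L^\infty(S^1)$ there is a constant $C_{\rm emb}$ such that $\norm{v}_{L^\infty}\le C_{\rm emb}\norm{v}_{W^{2,p}}$. Combined with \eqref{Fr-is-Lip} and the continuous inclusion $W^{2,p}\hookrightarrow L^p$, we obtain
\begin{equation*}
\norm{L_{u,r}-L_{u',r}}_{W^{2,p}\to L^p} \le C_{\rm emb}\, K(m,M,r)\, \norm{u-u'}_{L^p(S^1)} \le C\, K(m,M,r)\, \norm{u-u'}_{W^{2,p}(S^1)},
\end{equation*}
which is the Lipschitz bound on the slab $S^p_{m,M}$.

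No step here is a genuine obstacle; the only bookkeeping point is uniformity of the bound on $[v+s(u-v)]^{r-1}$ across the slab, which is precisely what forces the use of $\max(m^{r-1},M^{r-1})$ rather than a single power, and the use of the embedding $W^{2,p}\hookrightarrow L^\infty$ (available because $S^1$ is one-dimensional) to convert the pointwise multiplication into a bounded operator from $W^{2,p}$ to $L^p$.
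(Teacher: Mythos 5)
Your proof is correct and follows essentially the same route as the paper: the integral (FTC) representation of $u^r-v^r$ with the convex combination confined to $[m,M]$, giving a pointwise Lipschitz bound, and then the estimate $\norm{(F_r(u)-F_r(u'))v}_{L^p}\le\norm{F_r(u)-F_r(u')}_{L^p}\norm{v}_{L^\infty}$ combined with the embedding $W^{2,p}(S^1)\hookrightarrow L^\infty(S^1)$ for the operator-norm Lipschitz bound. The only cosmetic difference is your constant $\abs{r}\max(m^{r-1},M^{r-1})$ versus the paper's $r(m^{r-1}+M^{r-1})$, which are interchangeable for this purpose.
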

\begin{proof}
Note that if $m\le x,y \le M$ then
\be
x^r-y^r =\int_0^1 r ((1-t) x + ty )^{r-1} \; dt\, (x-y)
\ee
and hence
\be
|x^r-y^r| \le r (m^{r-1}+M^{r-1}) \abs{x-y}.
\ee
Consequently
\be
||F_r(u)-F_r(v)||_{L^p(S^1)} \le r (m^{r-1}+M^{r-1}) ||u-v||_{L^p(S^1)}.
\ee
Inequality \eqref{Fr-is-Lip} now follows setting $K=r(m^{r-1}+M^{r-1})$.

If $u_1,u_2\in S_{m,M}^p$ and $v\in W^{2,p}$, then
\be
\begin{aligned}
|| L_{u_1,r}v - L_{u_2,r}v||_{L^p(S^1)} &= ||(F_r(u_1)-F_r(u_2))v||_{L^p(S^1)} \\
&\le ||F_r(u_1)-F_r(u_2)||_{L^p(S^1)} || v||_{L^\infty(S^1)}\\
&\le K(m,M,r)||u_1-u_2||_{L^p(S^1)} ||v||_{W^{2,p}(S^1)}\\
&\le K(m,M,r)||u_1-u_2||_{W^{2,p}(S^1)} ||v||_{W^{2,p}(S^1)}.
\end{aligned}
\ee
Hence $||L_{u_1,r}-L_{u_2,r}|| \le K(m,M,r)||u_1-u_2||_{W^{2,p}(S^1)}$ which
establishes the Lipschitz continuity.
\end{proof}

One readily shows that the linearization of 
$\calN_\epsilon$ at $w$ is the operator $\calN'_{\epsilon}[w]$ defined by
\be
\calN_\epsilon'[w] h = -\epsilon^2 h'' + [(q+1)\alpha^2 w^{-q-2} +(q-1)\beta^2 w^{q-2}] h.
\ee
As an immediate consequence of Lemma \ref{lem:fp} we see that $\calN_\epsilon'$ is Lipschitz
continuous.
\begin{corollary}\label{cor:Lip}
Suppose $0<m<M$.  There exists a constant $C(m,M)$ such that for  all $v,w\in S_{m,M}^p$, 
\be
||\calN_\epsilon'[v]-\calN_\epsilon'[w]||_{L(W^{2,p}(S^1),L^p(S^1))} < C(m,M) ||v-w||_{W^{2,p}(S^1)}.
\ee
\end{corollary}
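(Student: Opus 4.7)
The plan is to observe that the second-order part of $\calN_\epsilon'$ is independent of $w$, so the difference $\calN_\epsilon'[v] - \calN_\epsilon'[w]$ is purely a multiplication operator. Explicitly,
\[
(\calN_\epsilon'[v] - \calN_\epsilon'[w])h = (q+1)\alpha^2\bigl(v^{-q-2}-w^{-q-2}\bigr)h + (q-1)\beta^2\bigl(v^{q-2}-w^{q-2}\bigr)h,
\]
with no contribution from the $-\epsilon^2 h''$ terms. Thus the corollary reduces to controlling the operator norm of two multiplication operators in $L(W^{2,p}(S^1),L^p(S^1))$.

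Next I would apply the second part of Lemma \ref{lem:fp} directly. That lemma shows that $u\mapsto L_{u,r}$ is Lipschitz on $S_{m,M}^p$ with Lipschitz constant $K(m,M,r)$, where $L_{u,r}v = F_r(u)v = u^r v$. Multiplying through by the essentially bounded coefficients $(q+1)\alpha^2$ and $(q-1)\beta^2$ (which are uniformly bounded since $\alpha,\beta\in L^\infty$), the difference operator is a sum of two such Lipschitz-controlled pieces, one with exponent $r=-q-2$ and one with $r=q-2$. Therefore
\[
\|\calN_\epsilon'[v]-\calN_\epsilon'[w]\|_{L(W^{2,p},L^p)} \le \bigl[(q+1)\|\alpha\|_\infty^2 K(m,M,-q-2) + (q-1)\|\beta\|_\infty^2 K(m,M,q-2)\bigr]\|v-w\|_{W^{2,p}(S^1)},
\]
which gives the desired constant $C(m,M)$, uniform in $\epsilon$.

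There is really no serious obstacle here: the corollary is essentially a formal consequence of Lemma \ref{lem:fp}. The only point to be checked carefully is that when we write the difference $(\calN_\epsilon'[v]-\calN_\epsilon'[w])h$ as an $L^p$-function, we use the $W^{2,p}\hookrightarrow L^\infty$ embedding on $h$ (as done inside the proof of Lemma \ref{lem:fp}) so that $L^p$-Lipschitz continuity of $u\mapsto u^r$ transfers to an operator norm bound. Since both bounds $m\le v,w\le M$ keep us away from $0$, the negative-exponent term $v^{-q-2}-w^{-q-2}$ presents no integrability issue, and uniformity in $\epsilon$ is automatic because $\epsilon$ never enters the multiplication coefficients.
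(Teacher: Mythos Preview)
Your proposal is correct and follows the same approach as the paper, which simply states the corollary as an immediate consequence of Lemma~\ref{lem:fp} without further elaboration. Your write-up fills in exactly the details one would expect: the leading-order terms cancel in the difference, leaving two multiplication operators to which the second part of Lemma~\ref{lem:fp} applies with exponents $r=-q-2$ and $r=q-2$.
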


Our application of Newton's method requires an estimate of the size of ${\calN'_\epsilon}^{-1}$
as $\epsilon\ra 0$, which we obtain next.
\begin{proposition}  \label{prop:inv}
Let $V\in L^\infty(S^1)$ and consider the operator
\be
\calL_\epsilon=-\epsilon^2 \Lap + V
\ee
as a map from $W^{2,p}(S^1)$ to $L^p(S^1)$, where $p>1$.  Suppose there is a constant $m$ such
that $V\ge m > 0$.  Then $\calL_\epsilon$
is continuously invertible. Moreover, there is a constant $C$ such that if $\epsilon$ is
sufficiently small,
\bel{L-est}
|| \calL_{\epsilon}^{-1} || \le C\epsilon^{-4}.
\ee
\end{proposition}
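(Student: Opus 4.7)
The proposal has two parts: invertibility and the size estimate.  I would handle invertibility first by a Lax--Milgram argument.  The bilinear form
\begin{equation}
a_\epsilon(u,v) = \int_{S^1}\bigl(\epsilon^2 u'v' + Vuv\bigr)\,dx
\end{equation}
is bounded on $H^1(S^1)$ and, since $V\ge m>0$, is coercive with $a_\epsilon(u,u)\ge \min(\epsilon^2,m)\|u\|_{H^1}^2$.  Thus for any $f\in L^p\hookrightarrow H^{-1}$ there is a unique weak solution $u\in H^1(S^1)$ of $\calL_\epsilon u = f$.  Since $H^1(S^1)\hookrightarrow L^\infty$ in one dimension, the right-hand side $Vu-f$ of $\epsilon^2 u'' = Vu-f$ already lies in $L^p$, so $u\in W^{2,p}$ by a single bootstrap step and $\calL_\epsilon$ is continuously invertible.

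The heart of the estimate is an $L^p$ bound on $u$ that is \emph{independent} of $\epsilon$.  The plan is to multiply $\calL_\epsilon u = f$ by $|u|^{p-2}u$ and integrate by parts, yielding
\begin{equation}
(p-1)\epsilon^2\int_{S^1}(u')^2|u|^{p-2}\,dx + \int_{S^1}V|u|^p\,dx = \int_{S^1} f|u|^{p-2}u\,dx.
\end{equation}
Dropping the nonnegative first term, using $V\ge m$ on the left, and H\"older on the right gives $m\|u\|_{L^p}^p\le \|f\|_{L^p}\|u\|_{L^p}^{p-1}$, hence $\|u\|_{L^p}\le \|f\|_{L^p}/m$.  (For $1<p<2$ I would regularize $|u|^{p-2}\leadsto (u^2+\delta)^{(p-2)/2}$ and pass to the limit.)

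With this in hand the remaining estimates are immediate.  From $u''=\epsilon^{-2}(Vu-f)$ we get $\|u''\|_{L^p}\le C\epsilon^{-2}\|f\|_{L^p}$.  For the first derivative I would use that $u\in W^{2,p}(S^1)$ is periodic, so by Rolle's theorem $u'$ vanishes at some $x_0$; hence $u'(x)=\int_{x_0}^x u''\,dt$ and
\begin{equation}
\|u'\|_{L^p}\le \|u'\|_{L^\infty}\le \|u''\|_{L^1}\le C\|u''\|_{L^p}\le C\epsilon^{-2}\|f\|_{L^p}.
\end{equation}
Summing, $\|u\|_{W^{2,p}}\le C\epsilon^{-2}\|f\|_{L^p}$, which for $\epsilon\le 1$ implies the stated bound $\|\calL_\epsilon^{-1}\|\le C\epsilon^{-4}$ (in fact with room to spare).

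The single real obstacle is the $\epsilon$-independent $L^p$ bound on $u$; everything else is bookkeeping via the equation.  That step uses $V\ge m>0$ crucially, since otherwise the multiplier argument would only control the $H^1$ seminorm weighted by $\epsilon^2$, which blows up under inversion.  The looseness between my $\epsilon^{-2}$ and the stated $\epsilon^{-4}$ is harmless for the Newton iteration in which this proposition is used, and the weaker exponent leaves flexibility in case one prefers a less sharp 1D interpolation inequality to estimate $u'$.
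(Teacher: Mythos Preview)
Your proof is correct and in fact sharper than the paper's, yielding $\|\calL_\epsilon^{-1}\|\le C\epsilon^{-2}$ rather than $C\epsilon^{-4}$.  The approaches, however, are quite different.  The paper obtains the $W^{2,p}$ estimate by rescaling: it pulls the equation back to a circle $S^1_{1/\epsilon}$ of radius $1/\epsilon$, applies interior $L^p$ elliptic estimates on unit intervals with constants independent of the radius, averages these over all intervals, and then rescales back, picking up a factor $\epsilon^{-2}$ times $\|f\|_{L^p}+\|u\|_{L^p}$.  To close, it controls $\|u\|_{L^p}$ via Sobolev embedding into $W^{1,2}$ and an $H^1$ energy identity, which introduces a second factor $\epsilon^{-2}$ and accounts for the final $\epsilon^{-4}$.

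Your route bypasses all of this.  The $|u|^{p-2}u$ multiplier gives an $\epsilon$-\emph{independent} $L^p$ bound $\|u\|_{L^p}\le m^{-1}\|f\|_{L^p}$ directly from $V\ge m$, after which $u''$ is read off the equation and $u'$ is handled by the one-dimensional Rolle/FTC trick.  This is more elementary, exploits the one-dimensionality cleanly, and loses nothing.  The paper's scaling-and-averaging argument is the sort of thing that generalizes to higher dimensions and to less explicit operators, which may explain the choice, but for the statement as written your argument is both simpler and tighter.  The regularization remark for $1<p<2$ is appropriate; the only other point worth noting is that $u\in W^{2,p}(S^1)\hookrightarrow C^1(S^1)$ in one dimension, so the test function $|u|^{p-2}u$ and the integration by parts are unproblematic for $p\ge 2$.
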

\begin{proof}
The fact that $\calL_\epsilon$ is continuously invertible follows from standard elliptic
theory and the positivity of $V$. We turn our attention to obtaining the estimate \eqref{L-est}.

Let $S^1_r$ denote the circle of radius $r$, and let $i_r:S^1_r \ra S^1$ be the
natural diffeomorphism.  For a function $u$ define on $S^1$ let $u_r=u\circ i_r$. Suppose 
\be
-\epsilon^2 u'' + V u  = f
\ee
on $S^1$. Letting $r=1/\epsilon$ we then have
\be
-u_r'' + V_r u_r  = f_r.
\ee
Let $I$ be an interval of length $1$ in $S^1_r$ and let $I'$ be the interval
of length $1/2$ at the center of $I$.  From interior $L^p$ estimates (\cite{epde} Theorem 9.11)
we have
\be
||u_r||_{W^{2,p}(I')} \le C_1 \left[ ||f_r||_{L^p(I)} + ||u_r||_{L^p(I)}\right]
\ee
where $C_1$ depends on $||V||_\infty$ but does not depend on $I$ or $r$.
Averaging these interior estimates over all intervals $I$ in $S^1_r$ we obtain
\be
||u_r||_{W^{2,p}(S^1_r)} \le C_2 \left[ ||f_r||_{L^p(S^1_r)} + ||u_r||_{L^p(S^1_r)}\right].
\ee
where $C_2$ (and all subsequent constants $C_k$) is independent of $r$ (and $\epsilon$).
One readily verifies that for any function $w$ on $S^1$,
\be
||\nabla^k w||_{L^p(S^1)} = r^{k-\frac{1}{p}}||\nabla^k w_r||_{L^p(S^1_r)}.
\ee
Assuming that $r>1$ (i.e. $\epsilon <1 $) it then follows that
\be
|| u||_{W^{2,p}(S^1)} \le C_3 r^{2-\frac1p} || u_r||_{W^{2,p}(S^1_r)}.
\ee
and therefore
\bel{w2p-control-1}
\begin{aligned}
||u||_{W^{2,p}(S^1)} &\le r^{2-\frac{1}{p}} C_2C_3\left[ ||f_r||_{L^p(S^1_r)} + 
||u_r||_{L^p(S^1_r)}\right]\\
&= \epsilon^{-2} C_2C_3\left[ ||f||_{L^p(S^1)} + ||u||_{L^p(S^1)}\right].
\end{aligned}
\ee
By Sobolev embedding in $S^1$ we have for some constant $C_4$ 
\bel{sob}
||u||_{L^p(S^1)} \le C_4 ||u||_{W^{1,2}(S^1)}.
\ee
Suppose $\epsilon < \sqrt{m}$.  Then
\be
\begin{aligned}
||u||_{W^{1,2}(S^1)}^2 &= \int_{S^1} \abs{\nabla u}^2 + u^2 \\
                &\le \max(\epsilon^{-2},m^{-1})  \int_{S^1} \epsilon^2 \abs{\nabla u}^2 + V u^2 \\
                & = \max(\epsilon^{-2},m^{-1})  \int_{S^1} fu \\
                & \le \epsilon^{-2} ||f||_{L^p(S^1)} ||u||_{L^{p'}(S^1)}
\end{aligned}
\ee
where $p'$ is the conjugate exponent to $p$. By Sobolev embedding again we have
$||u||_{L^{p'}(S^1)}\le C_5 ||\phi||_{W^{1,2}(S^1)}$ and hence
\bel{w12-control}
||u||_{W^{1,2}(S^1)} \le C_5 \epsilon^{-2} ||f||_{L^p(S^1)}
\ee
Combining inequalities \eqref{w2p-control-1}, \eqref{sob}, and \eqref{w12-control} we obtain
\be
||u||_{W^{2,p}(S^1)} \le C_2C_3\left[\epsilon^{-2} +C_4C_5\epsilon^{-4}\right] ||f||_{L^p(S^1)}.
\ee
if $\epsilon<\min(1,\sqrt{m})$. Since $\epsilon < 1$, this establishes
inequality \eqref{L-est} with $C = C_2C_3(1+C_4C_5)$.
\end{proof}

We are now in a position to prove our main result of the section.
\begin{proof}[Theorem \ref{thm:lich-limit}]
The proof involves Newton's method, and we briefly recall the  required hypotheses here (\cite{NFA}). 
Let $X$ and $Y$ be Banach spaces, $x\in X$, $r>0$.  Let $\calN:B_r(x)\ra Y$
be a differentiable map with Lipschitz continuous derivative, i.e. there exists $k>0$
such that 
\be
|| \calN'[x_1]-\calN'[x_2]||_{L(X,Y)} \le k ||x_1-x_2||_X
\ee
for all $x_1,x_2\in B_r(x)$. Suppose $x$ is a point where $\calN'[x]$ has a continuous inverse.
Let $c_1=||\calN(x)||$ and let $c_2=||\calN'[x]^{-1}||$. If
$2kc_1^2c_2<1$ and $2c_1c_2<r$, then there exists a solution of $\calN(u) = 0$ satisfying
$||u-x||_X \le 2c_1c_2$.

We apply this method to the operators $\calN_\epsilon$.  Let $m=\inf v$ and $M=\sup v$ where
$v$ is the asymptotic solution found in Proposition \ref{prop:lich-R}.  Taking $\epsilon$
sufficiently small we can assure that $m/2 < w_\epsilon <2 M$.  
By the imbedding of $W^{2,p}(S^1)$ into $C^0(S^1)$ we can find an $r$ such that if $m/2<w<2M$ and
$u\in B_r(w)$, then $m/3<u<3M$.  Let $k$ be the Lipschitz constant for $\calN_\epsilon'$ on
$S_{m/3,3M}^p$ obtained in Corollary \ref{cor:Lip}.  So for $\epsilon$ sufficiently small,
$\calN'_\epsilon$ is Lipschitz continuous with constant $k$ on $B_r(w_\epsilon)$.

Let $c_1(\epsilon) = ||\calN_\epsilon(w_\epsilon)||_{L^p}$ and let $c_2(\epsilon) = ||{\calN'_\epsilon}^{-1}||$.  By Lemma \ref{lem:err-to-zero}
$c_1(\epsilon)$ converges rapidly to zero, while by Proposition \ref{prop:inv}, 
$c_2(\epsilon)$ is $O(\epsilon^{-4})$.  Hence $2kc_1c_2^2$ and $2c_1c_2$ converge
rapidly to zero, and for $\epsilon$ sufficiently small we obtain a solution of
$\calN_\epsilon(u_\epsilon)=0$ with $||u_\epsilon-w_\epsilon||_{W^{2,p}(S^1)}< 2c_1c_2$. By 
the continuous imbedding of $W^{2,p}(S^1)$ into $C^0(S^1)$ we have in particular that
$u_\epsilon(0)$ converges rapidly to $w_\epsilon(0)=v(0)$ as $\epsilon\ra 0$.  
Since $u_\epsilon$ is the unique solution of \eqref{sing-lich}, we have proved the result.
\end{proof}

\section{Conclusion}
By working with a concrete model problem, we have observed a number of new phenomena
for the vacuum conformal and CTS methods.  For certain conformal
data violating both a small-TT and a near-CMC condition we have shown that there 
cannot be a unique solution: there will either be no solutions or more than one.
For other small-TT data violating a near-CMC we have shown that there are multiple solutions.  
We have also found 
existence of certain solutions under a very weak near-CMC hypotheses ($\tau$ has constant sign),
dependence of the solution theory on the lapse function or conformal class
representative, and extreme sensitivity of the solution theory with respect to a coupling constant 
in the Einstein constraint equations.

This work was motivated by the following questions that arise from the Yamabe-positive
small-TT existence theorems of \cite{Holstetal08} and \cite{Maxwell09}:
\begin{enumerate}
	\item Is the small-TT hypothesis required to ensure existence for arbitrary mean curvatures?
	\item Are small-TT solutions necessarily unique?
	\item Can the Yamabe-positive restriction be relaxed?
\end{enumerate}
Our examples were obtained using a Yamabe-null background metric, and therefore do not
directly address questions 1) and 2).  The answers to these questions in the Yamabe-null
case, however, are that the small-TT hypothesis is necessary 
(at least for the existence of symmetric solutions for symmetric data), and that small-TT solutions
need not be unique. Moreover, our coefficient sensitivity results also suggest that if 
it is possible to extend the existence results of \cite{Holstetal08} and \cite{Maxwell09} to
Yamabe-null manifolds, the proof will be difficult.

These negative results suggest that the conformal and CTS methods do not lead to
a good parameterization scheme for solutions of the Einstein constraint equations.  
Since the conformal method, in its CMC formulation, is so successful, one is lead to
wonder if there is some other generalization of it that does lead to a parameterization.
This remains to be seen, and the model problem developed here could provide a useful
test case for investigating possible alternatives.

\subsection*{Acknowledgement}
I would like to thank Daniel Pollack and Jim Isenberg for useful
comments and discussions.

\ifjournal
\bibliographystyle{mrl}
\else
\bibliographystyle{amsalpha-abbrv}
\fi
\bibliography{mrabbrev,noncmc,rc,nonuniq}

\appendix
\section{The Lichnerowicz equation}\label{sec:lich}
We give the proof here of Proposition \ref{prop:lich-low-reg}
concerning solutions of the differential equation 
\bel{lich-generic-app}
-u'' -\alpha^2 u^{-q-1} + \beta^2 u^{q-1} = 0
\ee
on $S^1$.
\bgroup
\def\thetheorem{\ref{prop:lich-low-reg}}
\begin{proposition}
Suppose $\alpha$ and $\beta$ in equation \eqref{lich-generic-app}
belong to $L^\infty(S^1)$ and that $\alpha\not\equiv 0$ and $\beta\not\equiv 0$. 
Let $p>1$.
\begin{enumerate}
\item There exists a unique solution $u \in W^{2,p}_+(S^1)$, and $u\in W^{2,\infty}(S^1)$.
\item If $w\in W^{2,\infty}_+(S^1)$ is a subsolution of \eqref{lich-generic}, (i.e. $-w''-\alpha^2w^{-q-1}+\beta^2w^{q-1}\le 0$) then $w\le u$.
\item If $v\in W^{2,\infty}_+(S^1)$ is a supersolution of \eqref{lich-generic}, (i.e. $-v''-\alpha^2v^{-q-1}+\beta^2v^{q-1}\ge 0$) then $v\ge u$.
\item The solution $u\in W^{2,p}_+$ depends continuously on $(\alpha,\beta)\in L^\infty\times L^\infty$.
\end{enumerate}
\end{proposition}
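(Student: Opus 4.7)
My plan is to establish the four parts in order. I would use variational methods for existence (Part 1), the strong maximum principle for the sub- and supersolution comparisons (Parts 2 and 3, with uniqueness for Part 1 as a by-product), and uniform a priori estimates plus the comparison to obtain continuous dependence (Part 4). The analytic feature of the $S^1$ setting I would exploit is the continuous embedding $W^{1,2}(S^1)\hookrightarrow C^0(S^1)$, which lets integral bounds coming from a convex functional be promoted directly to pointwise bounds, bypassing any Moser iteration.

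For Part 1, I would minimize the strictly convex functional
$$\mathcal{J}(u) = \int_{S^1} \tfrac12 (u')^2 + \frac{\alpha^2}{q}u^{-q}+\frac{\beta^2}{q}u^{q}$$
over positive $W^{1,2}(S^1)$ functions; its Euler--Lagrange equation is precisely \eqref{lich-generic-app}. For a minimizing sequence $u_n$, each of the three non-negative summands of $\mathcal{J}(u_n)$ is uniformly bounded. Since $\beta\not\equiv 0$, there is a set $A\subset S^1$ of positive measure on which $\beta^2\ge c>0$; the bound $\int_A u_n^q\le C/c$ forces some $x_n\in A$ with $u_n(x_n)$ bounded above, which combined with the bound on $\int (u_n')^2$ and the Sobolev embedding yields $\|u_n\|_\infty\le M$ uniformly. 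The analogous argument using $\alpha\not\equiv 0$ and the bound on $\int \alpha^2 u_n^{-q}$ yields $u_n\ge m>0$ uniformly. The direct method now produces a unique minimizer $u\in W^{1,2}_+(S^1)$ (uniqueness by strict convexity), and regularity follows by feeding the pointwise bounds back into the equation: the right-hand side then lies in $L^\infty(S^1)$, so $u\in W^{2,\infty}(S^1)\subset W^{2,p}(S^1)$.

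For Parts 2 and 3, suppose $v\in W^{2,\infty}_+(S^1)$ is a supersolution and set $w=v-u$. The fundamental theorem of calculus applied to $g(x,s)=-\alpha^2 s^{-q-1}+\beta^2 s^{q-1}$ gives $g(x,v)-g(x,u)=c(x)w$ with
$$c(x)=\int_0^1\bigl[(q+1)\alpha^2 (u+tw)^{-q-2}+(q-1)\beta^2 (u+tw)^{q-2}\bigr]\,dt\ge 0,$$
and $c\not\equiv 0$ because $\alpha\not\equiv 0$. Subtracting the equations yields $-w''+c(x)w\ge 0$ on $S^1$. If $w$ were to attain a negative minimum at some $x_0\in S^1$, the strong minimum principle for $-\partial^2+c$ on the compact manifold $S^1$ would force $w$ to be globally constant, hence $w\equiv w(x_0)<0$, whence $-w''+cw=c\cdot w(x_0)\le 0$---a contradiction with $-w''+cw\ge 0$ since $c\not\equiv 0$. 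Thus $w\ge 0$, giving Part 3; Part 2 is symmetric, and the uniqueness claim in Part 1 follows by applying both to two purported solutions.

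For Part 4, given $(\alpha_n,\beta_n)\to(\alpha,\beta)$ in $L^\infty\times L^\infty$, the sets witnessing $\alpha\not\equiv 0$ and $\beta\not\equiv 0$ continue to witness $\alpha_n\not\equiv 0$ and $\beta_n\not\equiv 0$ with quantitatively comparable lower bounds for $n$ large, so the estimates $m\le u_n\le M$ from Part 1 hold uniformly in $n$; the equation then bounds $u_n$ uniformly in $W^{2,\infty}(S^1)$. Along a subsequence, Arzel\`a--Ascoli gives $C^1$-convergence, and passing to the limit in the equation and using it once more upgrades to $W^{2,p}(S^1)$-convergence to a solution $u^*$ of \eqref{lich-generic-app} for $(\alpha,\beta)$; uniqueness then forces $u^*=u$ and the full sequence converges. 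I expect the main obstacle to be the absence of any constant sub- or supersolution when $\alpha$ and $\beta$ are merely $L^\infty$ with possibly large zero sets: the classical monotone iteration on constant barriers is unavailable, which is precisely why the variational approach---paired with the 1D Sobolev embedding to convert integral bounds from $\mathcal{J}$ into the pointwise bounds required by the rest of the argument---is the natural route.
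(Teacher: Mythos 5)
Your Parts 2--3 are correct and essentially self-contained (a cleaner route than the paper, which gets the comparison by citing a sub-/supersolution sandwich theorem from \cite{cb-low-reg} together with uniqueness), and your overall strategy is genuinely different from the paper's, which lifts \eqref{lich-generic-app} to $S^n$, quotes the low-regularity existence and uniqueness theorems of \cite{cb-low-reg}, recovers the one-dimensional statement by a translation-symmetry argument, and obtains Part 4 (with differentiability, in fact) from the implicit function theorem. The gap is in your Part 1, at the uniform positive lower bound for the minimizing sequence. From the bound on $\int \alpha^2 u_n^{-q}$ you get a single point $y_n$ in a set where $\alpha^2\ge c'>0$ with $u_n(y_n)\ge m_0>0$, but this is \emph{not} ``the analogous argument'' to the upper bound: an upper bound at one point propagates globally via $u_n(x)\le u_n(x_n)+\sqrt{2\pi}\,\norm{u_n'}_{L^2}$, whereas the corresponding lower-bound inequality $u_n(x)\ge u_n(y_n)-\sqrt{2\pi}\,\norm{u_n'}_{L^2}$ has a right-hand side that is in general negative and gives nothing. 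Nothing in $\mathcal{J}$ penalizes $u_n\to 0$ on the zero set of $\alpha$, and the hypothesis $\alpha\not\equiv 0$ allows that set to be large; this is exactly the vanishing-coefficient difficulty you hoped the variational formulation would bypass, reappearing at the positivity step.

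The consequence is that the direct method only yields a nonnegative uniform limit $u$, possibly touching zero on $\{\alpha=0\}$; such a $u$ is not an interior point of the positive cone, two-sided variations are inadmissible, and the Euler--Lagrange equation \eqref{lich-generic-app} cannot be derived, so existence of a positive solution is not established. Part 4 then inherits the same defect, since it invokes ``the estimates $m\le u_n\le M$ from Part 1'' uniformly in $n$. Repairing this needs a genuinely new ingredient: for instance, minimize over the constraint sets $\{u\ge\delta\}$ and show by a comparison argument that the constraint is inactive for $\delta$ small; or, for Part 4, note that the actual solutions satisfy $-u_n''+\norm{\beta_n}_\infty^2 M^{q-2}u_n\ge 0$ and combine a weak Harnack inequality with a Chebyshev bound from $\int\alpha_n^2 u_n^{-q-1}=\int\beta_n^2 u_n^{q-1}$ to get $\inf u_n\ge m>0$; or simply delegate existence with merely $L^\infty$, possibly vanishing, coefficients to the cited results of \cite{cb-low-reg}, as the paper does, with continuity coming from the implicit function theorem since the linearization $-\partial_x^2+V$ with $V\ge 0$, $V\not\equiv 0$ is an isomorphism.
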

\egroup
\begin{proof}
We consider the differential equation \eqref{lich-generic-app} to hold on $S^n$ rather than $S^1$ so as to be 
able to cite existing work (recall that $n$ is related to $q$ by $q=2n/(n-2)$). That is, we consider
\bel{lich-high}
-\Lap u -\alpha^2 u^{-q-1}+\beta^2u^{q-1} = 0
\ee
on $(S^n,g)$, where $\alpha$ and $\beta$ depend only on $x^n$.  

Since $\alpha^2\not\equiv 0$ and $\beta^2\not\equiv 0$,
\cite{cb-low-reg} Theorem 4.10 and Corollary 4.11 imply that there exists a 
positive solution in $W^{2,p}$ for $p>n/2$.
Uniqueness of this solution follows from \cite{cb-low-reg} Theorem 4.9.  
From uniqueness we know that $u$ is a function of $x^n$ alone (otherwise
translation along $x^k$ with $1\le k\le n-1-$ would yield a different solution).
But then equation \eqref{lich-high} reduces to equation \eqref{lich-generic-app}.  
This establishes the existence in Part 1 for $p>1$ and uniqueness for $p>n/2$.
By an easy bootstrap, if $u$ is a solution of \eqref{lich-generic-app} in $W^{2,p}_+$ for
some  $1<p\le n/2$, then in fact $u\in W^{2,p}_+$ for all $p>1$ (and indeed also
$W^{2,\infty}$) and hence it is the unique solution.

Suppose $u_- \in W^{2,\infty}_+$ is a subsolution of \eqref{lich-generic-app}.  Then it is also
a subsolution of \eqref{lich-high}.  Let $u$ be the positive
solution of \eqref{lich-generic-app}. Arguing as in \cite{Maxwell09} Lemma 2 it follows
that $M u$ is a supersolution for any $M>1$.  Pick $M$ so that $u_-\le M u$.
Proposition 8.2 of \cite{cb-low-reg} implies there is a solution $v$ of \eqref{lich-high}
such that $u_-\le v \le M u$. By uniqueness of the solution it follows that $v=u$.
Hence $u_-\le u$ and we have proved Part 2.  Part 3 is proved similarly.

To show continuity, we use the Implicit Function Theorem.
Consider the map $\calN:W^{2,p}_+\times (L^\infty\times L^\infty)\mapsto L^p $ taking
\be
(u,\alpha,\beta) \mapsto -u''-\alpha^2 u^{-q-1} +\beta^2 u^{q-1}.
\ee
This map is evidently continuous (since $W^{2,p}$ is an algebra). One readily
shows that its Fr\'echet derivative at $(u,\alpha,\beta)$ with respect to $u$
in the direction $h$ is
\be
\calN'[u,\alpha,\beta] h = -h'' +[(q+1)\alpha^2 u^{-q-2} + (q-1)\beta^2 u^{q-2}]h
\ee
The continuity of the map $(u,\alpha,\beta)\mapsto \calN'[u,\alpha,\beta]$
follows from the fact that $W^{2,p}(S^1)$ is an algebra continuously embedded
in $C^0(S^1)$ along with Lemma \ref{lem:fp}.  
Since $\alpha\not\equiv 0$ and $\beta\not\equiv 0$ the potential
$V = [(q+1)\alpha^2 u^{-q-2} + (q-1)\beta^2 u^{q-2}]$ is not identically zero. 
By \cite{cb-low-reg} Theorem 7.7, $-\Lap + V:W^{2,p}\ra L^p$ 
is an isomorphism.  The Implicit Function Theorem (see, e.g. \cite{NFA} Theorem 4.1)
then implies that if $u_0$ is a solution for data $(\alpha_0,\beta_0)$, there
is a continuous map defined near $(\alpha_0,\beta_0)$ taking $(\alpha,\beta)$
to the corresponding solution of \eqref{lich-generic-app}.  This establishes Part 4.
\end{proof}

We remark that the hypothesis $u_\pm \in W^{2,\infty}$ in Parts 2 and 3 can be 
weakened; we make it only for convenience so as to be able to apply Proposition 8.2 
of \cite{cb-low-reg} in a straightforward way. In our applications in Section \ref{sec:roughfamily}, 
the sub- and supersolutions are either constants or the sum of a constant and 
an element of $W^{2,\infty}$.

\section{Theory for even conformal data}\label{sec:even}

In this section we sketch how, despite the presence of a conformal Killing field,
existing techniques for the conformal method can be adapted to the model problem 
\eqref{CTS-reduced} if the conformal data satisfy an evenness hypothesis. For 
simplicity, we assume all data in this section are smooth, and we focus
on the standard conformal method (i.e. $N=1/2$).  The coupled system 
to solve is 
\begin{equation}\label{CTS-reduced-2N}
\begin{aligned}
-2\kappa q\;\phi'' - 2\eta^2\phi^{-q-1}-\kappa(\mu+w')^2\phi^{-q-1}+\kappa \tau^2\phi^{q-1} &= 0\\
w'' &= \phi^q \tau'.
\end{aligned}\end{equation} 

From Theorem \ref{thm:cmc} and dimensional reduction we have
the following result for the Lichnerowicz equation
\bel{lich-gen-even}
-\phi'' - \alpha^2 \phi^{-q-1} + \beta^2\phi^{q-1} = 0
\ee
on $S^1$.
\begin{proposition}\label{prop:basic-lich} Suppose $\alpha$ and $\beta$ belong to $C^\infty(S^1)$.  
There exists a smooth positive solution $\phi$
of \eqref{lich-gen-even} if and only if 
\begin{enumerate}
	\item $\alpha\not\equiv 0$ and $\beta\not\equiv 0$ or
	\item $\alpha\equiv 0$ and $\beta\equiv0$.
\end{enumerate}
The solution in case 1 is unique. In case 2 the solutions are the positive constants.
\end{proposition}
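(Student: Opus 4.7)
The plan is to split the argument into a short necessity reduction (any solution must fall in case (1) or (2)), the trivial case (2), and case (1) handled via dimensional reduction.

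First, I would integrate \eqref{lich-gen-even} over $S^1$. Periodicity kills $\int \phi''$, leaving
\[
\int_{S^1} \alpha^2 \phi^{-q-1}\,dx \;=\; \int_{S^1} \beta^2 \phi^{q-1}\,dx.
\]
Since $\phi>0$, each side vanishes exactly when the corresponding coefficient is identically zero, so any solution must satisfy $\alpha\equiv 0$ if and only if $\beta\equiv 0$. This rules out the mixed cases and proves that existence forces case (1) or case (2). Case (2) is then immediate: $\alpha\equiv\beta\equiv 0$ reduces the equation to $\phi''=0$, whose positive $S^1$-periodic solutions are exactly the positive constants.

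For case (1), I would dimensionally reduce to the flat $n$-torus $M^n = S^1_{r_1}\times\cdots\times S^1_{r_{n-1}}\times S^1$, lifting $\alpha,\beta$ to smooth functions on $M^n$ that depend only on the last coordinate, and realizing $\alpha$ as $|\sigma|_g$ for the transverse-traceless tensor $\sigma = (\alpha/\sqrt{2})(dx^1\otimes dx^1 - dx^2\otimes dx^2)$ (traceless by inspection, divergence-free because $\sigma$ does not depend on $x^1,x^2$). The flat torus has $R_g=0$ and $\yamabe_g=0$, so when $\beta$ is constant we are in case 2 of Theorem \ref{thm:cmc} and obtain a unique positive solution on $M^n$; by that uniqueness together with the translation invariance of the lifted equation in $x^1,\dots,x^{n-1}$, the solution depends only on $x^n$ and descends to a solution on $S^1$, with smoothness following by standard elliptic bootstrapping.

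The main obstacle is that Theorem \ref{thm:cmc} requires $\tau$ (hence $\beta$) to be constant, whereas the statement allows general smooth $\beta$. For variable $\beta$ the cleanest resolution is to invoke Proposition \ref{prop:lich-low-reg} instead, which already handles $L^\infty$ coefficients on $S^1$ and yields a unique positive $W^{2,p}$ solution; smoothness of $\alpha,\beta$ then upgrades $\phi$ to $C^\infty$ by bootstrapping, since positive $W^{2,p}$ functions on $S^1$ are bounded away from zero and therefore $\alpha^2\phi^{-q-1}-\beta^2\phi^{q-1}$ is as regular as $\alpha^2$ and $\beta^2$. Conceptually, Proposition \ref{prop:lich-low-reg} is itself the dimensional reduction of the general low-regularity Lichnerowicz theory on compact manifolds, so the slogan "Theorem \ref{thm:cmc} and dimensional reduction" remains the right picture.
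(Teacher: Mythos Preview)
Your proposal is correct and follows essentially the same route the paper indicates: the paper's entire justification is the one-line remark ``From Theorem \ref{thm:cmc} and dimensional reduction,'' and you have fleshed this out faithfully. Your integration argument for necessity and the handling of case (2) are details the paper omits but are exactly what one would supply.

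You are right to flag that Theorem \ref{thm:cmc} as stated demands constant $\tau$, so it does not literally cover variable $\beta$; your resolution via Proposition \ref{prop:lich-low-reg} is sound and non-circular (that proposition is proved in Appendix \ref{sec:lich} directly from the low-regularity Lichnerowicz theory of \cite{cb-low-reg}, which does allow general $L^\infty$ coefficients). In effect the paper's citation of Theorem \ref{thm:cmc} is shorthand for the broader Lichnerowicz theory underlying both it and Proposition \ref{prop:lich-low-reg}, and your proposal makes this precise.
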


For the momentum constraint we consider
\bel{momentum-gen}
w'' = f
\ee
on $S^1$.
By standard elliptic theory we have the following result.
\begin{proposition}\label{prop:basic-mom}
Suppose $f\in C^\infty(S^1)$. There exists a solution $x\in C^\infty(S^1)$
of \eqref{momentum-gen} if and only if
\be
\int_{S^1} f = 0.
\ee
Any two solutions of \eqref{momentum-gen} differ by an additive constant.
\end{proposition}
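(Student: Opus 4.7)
The plan is to handle the two directions of the biconditional and then the uniqueness statement via elementary integration on $S^1$, which I view as $[-\pi,\pi]$ with endpoints identified.

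For necessity of $\int_{S^1} f = 0$, I would simply integrate the equation $w'' = f$ over $S^1$. Since a smooth function on $S^1$ has a periodic derivative, $\int_{S^1} w'' = w'(\pi) - w'(-\pi) = 0$, giving the stated compatibility condition.

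For sufficiency, assume $\int_{S^1} f = 0$ and construct $w$ by two antiderivatives. First set
\begin{equation*}
g(x) = \int_{-\pi}^{x} f(t)\, dt.
\end{equation*}
Then $g \in C^\infty$, $g(-\pi) = 0$, and $g(\pi) = \int_{S^1} f = 0$, so $g$ descends to a smooth function on $S^1$. Next choose the constant $c = \frac{1}{2\pi}\int_{S^1} g$, set $h = g - c$, and define
\begin{equation*}
w(x) = \int_{-\pi}^{x} h(t)\, dt.
\end{equation*}
By construction $w'(x) = h(x) = g(x) - c$ and $w''(x) = g'(x) = f(x)$. The choice of $c$ ensures $w(\pi) - w(-\pi) = \int_{S^1} h = 0$, so $w$ is periodic, hence smooth on $S^1$.

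For uniqueness up to an additive constant, if $w_1$ and $w_2$ both solve \eqref{momentum-gen}, then $(w_1 - w_2)'' = 0$. On $S^1$ the only smooth solutions of $u'' = 0$ are constants: integrating once gives $u' \equiv C_1$, and periodicity of $u$ forces $C_1 = 0$, hence $u$ is constant. No step here is really an obstacle; the main point worth flagging is that the construction of $w$ must be done in two stages so that periodicity is enforced at both levels (once via the hypothesis $\int f = 0$, and once via the choice of $c$). An equivalent and equally short Fourier series argument is available, since on $S^1$ the equation $w'' = f$ becomes $-k^2 \hat w_k = \hat f_k$, which is solvable for all $k \neq 0$ precisely when $\hat f_0 = 0$, with $\hat w_0$ free.
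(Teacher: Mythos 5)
Your proof is correct. Note that the paper does not actually spell out an argument here: it disposes of the proposition with a one-line appeal to ``standard elliptic theory,'' i.e.\ the Fredholm alternative for $d^2/dx^2$ on the compact manifold $S^1$, whose kernel and cokernel are the constants, so that $w''=f$ is solvable exactly when $f$ is orthogonal to constants and solutions are unique modulo constants. You instead give a self-contained, genuinely one-dimensional argument: necessity by integrating the equation over $S^1$, sufficiency by two explicit antiderivatives with the constant $c=\frac{1}{2\pi}\int_{S^1}g$ inserted precisely so that periodicity holds at both stages, and uniqueness by solving $u''=0$ directly. Your two-stage construction is complete and the flagged subtlety (periodicity must be enforced once via $\int_{S^1}f=0$ and once via the choice of $c$) is exactly the right point; the trade-off is that your argument is special to one dimension, while the paper's citation is shorter and is the statement that generalizes to the elliptic operators appearing elsewhere in the paper (e.g.\ the momentum constraint with a lapse, where the same orthogonality-to-constants condition reappears as the definition of $\gamma_N$). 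Your closing Fourier-mode remark is an equally valid third route.
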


Recall that we are working with functions on $S^1$ with domain of definition $[-\pi,\pi]$.
We say that a function $f$ on $S^1$ is even or odd  if $f(-x)=f(x)$ or $f(-x)=-f(x)$ for all $x\in [-\pi,\pi]$.
Subscripts $e$ and $o$ denote subspaces of even and odd functions.  Using the 
uniqueness results of Propositions \ref{prop:basic-lich} and \ref{prop:basic-mom}
we have the following easy corollaries.
\begin{corollary}\label{cor:lich-even} Suppose $\alpha$ and $\beta$ are in $C^\infty_e(S^1)$.
If Condition 1 or 2 of Proposition \ref{prop:basic-lich} holds, then the solution
$\phi$ of \eqref{lich-gen-even} belongs to $C^\infty_e(S^1)$.
\end{corollary}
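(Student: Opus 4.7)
The plan is to exploit the uniqueness statement in Proposition~\ref{prop:basic-lich} via a reflection symmetry argument. Let $\phi \in C^\infty_+(S^1)$ be the solution of \eqref{lich-gen-even} guaranteed by that proposition, and define $\tilde\phi(x) = \phi(-x)$. Since differentiation in $x$ commutes with the reflection $x \mapsto -x$ up to a second-derivative sign that is squared away by the operator $-d^2/dx^2$, we have $\tilde\phi''(x) = \phi''(-x)$. Substituting into the equation and using that $\alpha(-x) = \alpha(x)$ and $\beta(-x) = \beta(x)$, we find
\begin{equation*}
-\tilde\phi''(x) - \alpha(x)^2 \tilde\phi(x)^{-q-1} + \beta(x)^2 \tilde\phi(x)^{q-1}
= \bigl[-\phi'' - \alpha^2 \phi^{-q-1} + \beta^2 \phi^{q-1}\bigr]\bigl|_{-x} = 0,
\end{equation*}
so $\tilde\phi$ is also a positive smooth solution of \eqref{lich-gen-even}.

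In Case 1 ($\alpha \not\equiv 0$ and $\beta \not\equiv 0$), the uniqueness clause of Proposition~\ref{prop:basic-lich} forces $\tilde\phi = \phi$, i.e. $\phi(-x) = \phi(x)$ for all $x \in [-\pi,\pi]$, so $\phi \in C^\infty_e(S^1)$. In Case 2 ($\alpha \equiv 0$ and $\beta \equiv 0$), the proposition says the solutions are exactly the positive constants, which are automatically even; so again $\phi \in C^\infty_e(S^1)$.

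This argument is essentially immediate given the groundwork already laid, and I do not anticipate a genuine obstacle: the only point requiring a moment of care is verifying that $\tilde\phi$ is a bona fide smooth positive solution of the same equation (not merely of some reflected variant), which is exactly where the evenness hypothesis on $\alpha$ and $\beta$ is used.
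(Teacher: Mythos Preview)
Your proof is correct and matches the paper's approach: the paper states this corollary without proof, noting only that it follows from the uniqueness results of Proposition~\ref{prop:basic-lich}, which is exactly the reflection-plus-uniqueness argument you have written out.
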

\begin{corollary}\label{cor:mom-even} Suppose $f\in C^\infty_o(S^1)$ and $N\in C^\infty_e(S^1)$.
Then there exists a unique solution $w\in C^\infty_o(S^1)$ of \eqref{momentum-gen} satisfying $w(0)=0$.
Any other solution of \eqref{momentum-gen} can be written as $a+w$ where $a$ is constant.
\end{corollary}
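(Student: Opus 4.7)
The plan is to combine the general existence/uniqueness result of Proposition \ref{prop:basic-mom} with a reflection/parity argument. First I would check the integrability obstruction: since $f \in C_o^\infty(S^1)$ is odd and $S^1 = [-\pi,\pi]$ is symmetric about the origin, $\int_{S^1} f = 0$ automatically, so Proposition \ref{prop:basic-mom} produces some smooth solution $\tilde w \in C^\infty(S^1)$ of $w'' = f$, unique up to an additive constant. The task is then to select a suitable representative of this affine family.

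Next I would exploit the reflection $R:x\mapsto -x$. Define $\tilde w_r(x) := -\tilde w(-x)$. A direct chain-rule computation gives $\tilde w_r''(x) = -\tilde w''(-x) = -f(-x) = f(x)$, using that $f$ is odd, so $\tilde w_r$ is another solution of \eqref{momentum-gen}. The hypothesis $N \in C_e^\infty(S^1)$ is what makes the analogous step work for the more general CTS-form momentum operator $w\mapsto ((2N)^{-1}w')'$: since $N$ and hence $(2N)^{-1}$ are invariant under $R$, the operator commutes with the map $w\mapsto -w\circ R$, so the same reflection sends solutions to solutions. By the uniqueness clause of Proposition \ref{prop:basic-mom}, $\tilde w_r = \tilde w + c$ for some constant $c\in\mathbb{R}$, which rearranges to
\begin{equation*}
\tilde w(x) + \tilde w(-x) = -c \quad \text{for all } x\in S^1.
\end{equation*}

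Setting $w := \tilde w + c/2$ therefore produces $w(x) + w(-x) = 0$, so $w$ is odd; in particular $w(0)=0$, and $w'' = \tilde w'' = f$, giving the desired solution. For uniqueness, if $w_1, w_2 \in C_o^\infty(S^1)$ are two such solutions then by Proposition \ref{prop:basic-mom} their difference is a constant, but an odd constant is zero, so $w_1 = w_2$. The final sentence of the corollary is immediate from the ``any two solutions differ by a constant'' clause of Proposition \ref{prop:basic-mom}.

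There is no serious obstacle here; the statement is essentially a parity decomposition of the one-dimensional space of solutions modulo constants. The only thing to pay attention to is that the reflection $R$ genuinely preserves the equation, which is automatic for $w'' = f$ but in the CTS formulation requires the evenness of $N$ stated in the hypothesis.
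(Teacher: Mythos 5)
Your argument is correct and is essentially the proof the paper has in mind: the corollary is stated as an easy consequence of the uniqueness-up-to-constants clause of Proposition \ref{prop:basic-mom}, and your reflection argument ($\tilde w \mapsto -\tilde w(-x)$ is again a solution, hence differs from $\tilde w$ by a constant, which you then split off to obtain an odd representative) together with the observation that an odd constant vanishes is exactly that intended route. Your side remark that the evenness of $N$ is only needed for the CTS-form operator $w\mapsto((2N)^{-1}w')'$, not for $w''=f$ itself, is also accurate.
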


Assume $\eta$, $\tau\in C^\infty_e(S^1)$ and $\mu$ is constant.
We define a map $\mathcal N:C^\infty_e(S^1)\ra C^\infty_e(S^1)$ as follows.  
Let $\phi\in C^\infty_e(S^1)$.  Then $\phi^q\tau'$ is odd and hence there exists a unique
function $w\in C^\infty_o(S^1)$ solving 
\be
w'' = \phi^q\tau'.
\ee
Let $\alpha = \frac{1}{2\kappa q}[2\eta^2+\kappa(\mu+1/(2N)w')^2]$ and $\beta=\frac{1}{2q}\tau^2$, so 
$\alpha$ and $\beta$ belong to $C^\infty_e(S^1)$.  Finally, define $\mathcal N(\phi)$ to be the 
solution of \eqref{lich-gen-even} for this choice of $\alpha$ and $\beta$. The existence
of a smooth solution of \eqref{conf-reduced} is equivalent to the existence of a fixed point of 
$\mathcal N$. 

By assuming that $\eta$ and $\tau$ are even, we have ensured that $\mathcal N$ is 
well defined and thus avoiding the trouble with conformal Killing fields.
The existence theory of \cite{Maxwell09} for the standard conformal method 
now proceeds without change and we have the following  generalization of Theorem \ref{thm:non-cmc}.
\begin{theorem}  Suppose $N\in C^\infty_e(S^1)$, $\eta,\tau\in C^\infty_e(S^1)$ and $\mu\in\Reals$.
Suppose further that $\tau\not\equiv 0$ and that either $\eta\not\equiv 0$ or $\mu\neq 0$.
If there exists a global upper barrier for $(\eta,\mu,\tau)$, then there exists a solution
$(\phi,w)\in C^\infty_e \times C^\infty_o$ of \eqref{conf-param}.
\end{theorem}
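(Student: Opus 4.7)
The plan is to apply a Schauder fixed point argument to the map $\calN:C^\infty_{e,+}(S^1)\to C^\infty_{e,+}(S^1)$ defined just before the theorem, following the existence strategy of \cite{Maxwell09} while verifying at each step that the symmetry is preserved. First I would check that $\calN$ is well defined on the positive even cone. Given $\phi\in C^\infty_{e,+}(S^1)$, the function $\phi^q\tau'$ is odd (since $\tau$ is even, $\tau'$ is odd), so by Corollary \ref{cor:mom-even} there is a unique odd solution $w$ of $w''=\phi^q\tau'$ with $w(0)=0$; note that the conformal-Killing-field obstruction is avoided automatically because odd integrands have zero mean. Then $w'$ is even, so $\alpha^2:=[2\eta^2+\kappa(\mu+w')^2]/(2\kappa q)$ and $\beta^2:=\tau^2/(2q)$ are even. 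The hypothesis $\tau\not\equiv0$ gives $\beta\not\equiv0$; the hypothesis that $\eta\not\equiv 0$ or $\mu\neq0$, combined with the fact that the even function $w'$ has zero mean on $S^1$ and thus cannot equal the nonzero constant $-\mu$ identically, gives $\alpha\not\equiv 0$. Proposition \ref{prop:basic-lich} then produces a unique positive solution, and Corollary \ref{cor:lich-even} forces $\calN(\phi)$ to be even.

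Next I would verify continuity and compactness of $\calN$ on an appropriate Banach space of even functions, for instance the even subspace of $C^0_e(S^1)$ or of $W^{2,p}_e(S^1)$ for some $p>1$. Continuity follows from standard elliptic theory applied to $w''=\phi^q\tau'$ together with Part~4 of Proposition \ref{prop:lich-low-reg}, which gives continuous dependence of the Lichnerowicz solution on $(\alpha,\beta)\in L^\infty\times L^\infty$. Compactness follows by noting that, for $\phi$ in a bounded subset, $\calN(\phi)$ lies in a bounded subset of $C^\infty_e$ by elliptic bootstrap, so the Rellich--Kondrachov theorem applied within the closed even subspace yields pre-compactness.

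The heart of the argument is the construction of a bounded, closed, convex, $\calN$-invariant set. By hypothesis there is a global upper barrier $\phi_+$, which is by definition a positive function with the property that whenever a positive $\phi$ satisfies $\phi\le\phi_+$ and $w$ solves the momentum constraint driven by $\phi$, the unique solution of the corresponding Lichnerowicz equation is bounded above by $\phi_+$; that is, $\calN(\phi)\le\phi_+$. For the lower bound, I would show that a sufficiently small positive constant $c_-$ is a subsolution of every Lichnerowicz equation in the range of $\calN$: choose $c_-$ so small that $\beta^2 c_-^{q-1}\le (\inf_{S^1}\alpha_\phi^2)c_-^{-q-1}$ uniformly over $\phi\in[c_-,\phi_+]$, which is possible because the $w$ arising from such $\phi$ are uniformly $C^1$-bounded, hence $\alpha_\phi^2$ enjoys a uniform positive lower bound on some subset of positive measure (or a uniform $L^\infty$ bound suffices to make the constant-subsolution argument work after possibly shrinking $c_-$ further). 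Part~2 of Proposition \ref{prop:lich-low-reg} then forces $c_-\le\calN(\phi)$. The set
\begin{equation}
K=\{\,\phi\in C^0_e(S^1)\;:\;c_-\le\phi\le\phi_+\,\}
\end{equation}
is then closed, convex, bounded, and $\calN$-invariant, so Schauder's fixed point theorem yields $\phi\in K$ with $\calN(\phi)=\phi$. The corresponding $w\in C^\infty_o(S^1)$ completes a solution pair of \eqref{CTS-reduced-2N}; smoothness of $\phi$ follows from a routine elliptic bootstrap using the smoothness of $\tau$, $\eta$, $\mu$ and $w$.

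The main obstacle is the existence of the uniform positive lower bound $c_-$: one has to be sure that the coefficient $\alpha_\phi^2$ does not collapse as $\phi$ ranges over $K$, which in the case $\eta\equiv 0$, $\mu\ne 0$ requires a uniform bound on $\norm{w'}_\infty$ and an argument that $w'+\mu$ cannot vanish identically (again a consequence of $w'$ having zero mean and $\mu\ne 0$). All of this is precisely the content of the corresponding argument in \cite{Maxwell09}; the point of the evenness hypothesis is to guarantee that $\calN$ can be defined in the first place despite the presence of the conformal Killing field $\partial_x$, and once this is established the remainder of the argument goes through unchanged.
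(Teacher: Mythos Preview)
Your proposal is correct and takes essentially the same approach as the paper. The paper does not give a detailed proof of this theorem; it simply observes that once the evenness hypothesis guarantees well-definedness of $\calN$ (avoiding the conformal-Killing-field obstruction), ``the existence theory of \cite{Maxwell09} for the standard conformal method now proceeds without change.'' Your outline of that existence theory---Schauder fixed point on an invariant order interval bounded above by the global upper barrier, with continuity and compactness from elliptic estimates---is accurate, and your honest identification of the uniform lower bound as the delicate step deferred to \cite{Maxwell09} matches the paper's treatment.
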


Recall that a global upper barrier is defined as follows. Given a smooth even
positive function $\phi$, let $w_\phi$ be an odd solution of
\bel{wphi}
w_\phi'' = \phi^q\tau'.
\ee
Then $w_\phi'$ is uniquely defined.  We say that a smooth positive even function $\Phi$ is 
a global upper barrier if for all smooth even functions $\phi$ satisfying $0<\phi\le \Phi$,
then
\be
-2\kappa q\;\Phi'' - 2\eta^2\Phi^{-q-1}-\kappa(\mu+w_\phi')^2\Phi^{-q-1}+\kappa \tau^2\Phi^{q-1} \ge 0.
\ee
Following \cite{isenberg-omurchadha-noncmc} and \cite{ICA08} one readily shows that
there is a constant global upper barrier if
\bel{near-cmc-2}
\frac{\max\abs{\nabla\tau}}{\min\abs{\tau}}\quad\text{is sufficiently small.}
\ee

To conclude this section, we show how we can use such near-CMC data to construct
data that violate both the small-TT condition and the near-CMC condition \eqref{near-cmc-2} arbitrarily.
To see this we `double' the frequency of the mean curvature:
if $f$ is a periodic function with period $2\pi$, let $f^{[k]}(x) = f(2^k x)$.
\begin{proposition}\label{prop:doubling}
Suppose $\tau$ satisfies the near-CMC condition \eqref{near-cmc-2}, and $\eta$ and $\mu$ are constant.  
Then for any $k\in\Nats$ there
exists a solution of \eqref{conf-param} $(\eta,\mu,\tau^{[k]})$ so long as 
one of $\eta$ or $\mu$ is non-zero.
\end{proposition}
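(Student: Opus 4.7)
\emph{Plan.} My plan is to reduce the problem for data $(\eta,\mu,\tau^{[k]})$ to a related problem for $(\tilde\eta,\tilde\mu,\tau)$ with the original near-CMC mean curvature by means of an explicit scaling substitution. Once this reduction is made, the existence theorem for even conformal data stated above applies directly.

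Set $c = 4^{k/(q-2)}$. Given a smooth positive even $\Phi$ and a smooth odd $W$ on $S^1$, define
\begin{equation*}
\phi(x) = c\,\Phi(2^k x), \qquad w(x) = 2^{-k}c^q\,W(2^k x),
\end{equation*}
and let $\tilde\eta = c^{-q}\eta$, $\tilde\mu = c^{-q}\mu$. A chain-rule computation with $y = 2^k x$ gives $\phi''(x) = 4^k c\,\Phi''(y)$, $w'(x) = c^q W'(y)$, $w''(x) = 2^k c^q W''(y)$, and $(\tau^{[k]})'(x) = 2^k \tau'(y)$. Substituting into the coupled system \eqref{CTS-reduced-2N} for $(\eta,\mu,\tau^{[k]})$ and dividing the Hamiltonian constraint through by $c^{q-1}$ yields, after the cancellation $c^{2-q}\cdot 4^k = 1$ built into the choice of $c$, exactly the coupled system for $(\Phi, W)$ on $S^1$ with data $(\tilde\eta, \tilde\mu, \tau)$. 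The key algebraic identity is that under these substitutions each of the four terms of the Hamiltonian constraint rescales by the common factor $c^{q-1}$; in particular $(\mu + w')^2 \phi^{-q-1}$ transforms into $c^{q-1}(\tilde\mu + W')^2 \Phi^{-q-1}$, so the nonlinear coupling is preserved. Evenness of $\phi$ and oddness of $w$ are immediate from the corresponding properties of $\Phi$ and $W$.

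By hypothesis $\tau$ is even and satisfies the near-CMC condition \eqref{near-cmc-2} with respect to $(\eta,\mu)$. Since $|\tilde\eta|\le|\eta|$ and $|\tilde\mu|\le|\mu|$, the same hypothesis also produces a constant global upper barrier for $(\tilde\eta,\tilde\mu,\tau)$: shrinking the magnitudes of the transverse-traceless constants only weakens the smallness requirement on $\max|\nabla\tau|/\min|\tau|$ in the barrier construction of \cite{isenberg-omurchadha-noncmc,ICA08}. The existence theorem for even data therefore produces a solution $(\Phi,W)\in C^\infty_e(S^1)\times C^\infty_o(S^1)$ of \eqref{CTS-reduced-2N} for $(\tilde\eta,\tilde\mu,\tau)$, and the pull-back formulas above deliver the required solution $(\phi,w)$ for $(\eta,\mu,\tau^{[k]})$. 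The assumption that one of $\eta,\mu$ is nonzero is used precisely at this step, to ensure $\tilde\eta$ or $\tilde\mu$ is nonzero so that the existence theorem applies.

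The only nontrivial step is the bookkeeping in the scaling identity — verifying that the exponent $c^{q-2} = 4^k$ is exactly what makes all four terms of the Hamiltonian constraint rescale by the same power of $c$, and that the companion scaling $w' = c^q W'$ is what aligns the transverse-traceless terms correctly. Once this is in hand, the remainder of the proof is a direct appeal to the near-CMC existence theorem together with the monotonicity of the upper-barrier condition in $|\eta|$ and $|\mu|$.
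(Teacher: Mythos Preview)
Your proof is correct and follows the same scaling argument as the paper: with $c=4^{k/(q-2)}=2^{k(n-2)/2}=2^{nk/q}$ and $c^q=2^{nk}$, your substitution $(\phi,w)=(c\,\Phi^{[k]},\,2^{-k}c^q\,W^{[k]})$ and your rescaled data $(\tilde\eta,\tilde\mu)=(c^{-q}\eta,c^{-q}\mu)=(2^{-nk}\eta,2^{-nk}\mu)$ coincide exactly with the paper's choices. The paper states the scaling identity without the intermediate algebra you supply, but the argument is otherwise identical; your extra remark that shrinking $|\eta|,|\mu|$ only helps the barrier construction is correct and harmless, since the near-CMC condition \eqref{near-cmc-2} as formulated depends on $\tau$ alone.
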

\begin{proof}
Let $k\in \Nats$.
Since $\tau$ is near-CMC, there exists a solution $(\phi,w)$ of \eqref{conf-param}
for conformal data $(2^{-nk}\eta, 2^{-nk}\mu,\tau)$.
One verifies then that $(2^{\frac{nk}{q}}\phi^{[k]},2^{(n-1)k}w^{[k]})$
is a solution for conformal data $(\eta, \mu, \tau^{[k]})$.
\end{proof}
By taking $k$ sufficiently large, we can make the ratio 
$\frac{\max |\nabla\tau^{[k]}|}{\min \abs{\tau^{[k]}}}$ as large as we please.
For each of these mean curvatures,
we can solve \eqref{conf-param} for certain arbitrarily large TT-tensors.  This result seems to suggest that
large relative gradients of $\tau$ are not, by themselves, a source of trouble. 

The kind of near-CMC violation described above introduces large gradients without
affecting the deviation of $\tau$ from its mean.  On the other hand, we can 
write a given mean curvature $\tau$ as
\bel{near-cmc-3}
\tau = \Ta +\Tbeta
\ee
where $\Ta$ is constant and $\int_{S^1} \Tbeta = 0$. If $\abs{\Ta}$ is large relative to,
say, $\frac{1}{2\pi}\int_{S^1}\abs\Tbeta$, then the ratio \eqref{near-cmc-2}
will be small (and $\tau$ will be near-CMC).  This weaker notion of being near-CMC
is similar to one used in \cite{isenberg-omurchadha-noncmc}.
It is not violated by the mean curvatures of Proposition \ref{prop:doubling},
and extends to the rough mean curvatures considered in  Section \ref{sec:roughfamily}.
\end{document}